\newif\iflong
\newcommand{\tr}[2]{\iflong{}\S#1\else{}\cite[\S#2]{ext}\fi}
\let\oldnl\nl
\newcommand{\nonl}{\renewcommand{\nl}{\let\nl\oldnl}}
\newcommand{\stale}{{\sf stale}}
\newcommand{\valu}{{\sf value}}
\newcommand{\writ}{{\tt write}}
\newcommand{\rd}{{\tt read}}
\newcommand{\cnt}{\ensuremath{\mathit{cnt}}}
\newcommand{\rdint}{{\tt read\_quorum}}
\newcommand{\wrint}{{\tt write\_quorum}}
\newcommand{\newid}{\mathtt{get\_unique\_id}}
\newcommand{\inc}{{\sf inc}}
\newcommand{\incvar}{\ensuremath{\mathit{inc}}}
\newcommand{\opval}{{\sf value}}
\newcommand{\Wmin}{\wrq_{\mathit{min}}}
\newcommand{\Winc}{\wrq_{\mathit{inc}}}
\newcommand{\pmin}{p_{\mathit{min}}}
\newcommand{\tmin}{t_{\mathit{min}}}
\newcommand{\ReadAMS}{{\tt ReadAMS}}
\newcommand{\WriteAMS}{{\tt WriteAMS}}
\newcommand{\amnesic}{{\sf amnesic}}
\newcommand{\finish}{\mathit{finish}}
\newcommand{\rdcv}{\mathit{read\_cv}}
\newcommand{\acks}{\mathit{acks}}
\newcommand{\resptime}{\mathit{resp\_time}}
\newcommand{\var}{\mathit{var}}
\newcommand{\twinc}{t_\mathit{write\_inc}}
\newcommand{\trstate}{t_\mathit{read\_state}}
\newcommand{\tlrstate}{t_\mathit{last\_read\_state}}
\newcommand{\tcrash}{t_\mathit{crash}}
\newcommand{\tstale}{t_\mathit{clear\_stale}}
\newcommand{\tresp}{t_\mathit{resp}}
\newcommand{\tlc}{t_\mathit{last\_crash}}
\newcommand{\wrq}{\mathit{WQ}}
\newcommand{\rdq}{\mathit{RQ}}
\newcommand{\wrr}{W}
\newcommand{\rdr}{R}
\newcommand{\allrdr}{\mathsf{Reads}}
\newcommand{\allwrr}{\mathsf{Writes}}
\newcommand{\allreps}{\mathcal{R}}
\newcommand{\ups}{U}
\newcommand{\allprocs}{\mathcal{P}}
\newcommand{\world}{\mathcal{E}}
\newcommand{\ndown}{d}
\newcommand{\ncrash}{c}
\newcommand{\incnum}{g}
\newcommand{\tsset}{\mathcal{T}}
\newcommand{\incset}{\mathcal{I}}
\newcommand{\cv}{\ensuremath{\mathsf{CV}}}
\newcommand{\cvvar}{\ensuremath{\mathsf{cv}}}
\newcommand{\precv}{\ensuremath{\mathsf{preCV}}}
\newcommand{\precvvar}{\mathsf{pre\_cv}}
\newcommand{\TS}{\ensuremath{\mathsf{TS}}}
\newcommand{\Tsvar}{\ensuremath{\mathsf{ts}}}
\newcommand{\Valvar}{\ensuremath{\mathsf{val}}}
\newcommand{\id}{\ensuremath{\mathsf{id}}}
\newcommand{\idvar}{\ensuremath{\mathit{id}}}
\newcommand{\proc}{\ensuremath{\mathsf{proc}}}
\newcommand{\crv}{\ensuremath{\mathsf{cv}}}
\newcommand{\wacks}{\ensuremath{\mathsf{w}\text{-}\mathsf{acks}}}
\newcommand{\racks}{\ensuremath{\mathsf{r}\text{-}\mathsf{acks}}}
\newcommand{\nextinc}{\ensuremath{\mathit{next\_inc}}}
\newcommand{\tsval}{\ensuremath{\mathsf{TSVal}}}
\newcommand{\ST}{\ensuremath{\mathsf{State}}}
\newcommand{\req}{\ensuremath{\mathit{req}}}
\newcommand{\reqq}{\ensuremath{\mathsf{req}}}
\newcommand{\resp}{\ensuremath{\mathit{r}}}
\newcommand{\tmpts}{\ensuremath{\mathit{ts}}}
\newcommand{\tmpval}{\ensuremath{v}}
\newcommand{\tmpcv}{\ensuremath{\mathit{cv}}}
\newcommand{\WRITE}{{\tt WRITE}}
\newcommand{\WRITEACK}{{\tt WRITE\_ACK}}
\newcommand{\READ}{{\tt READ}}
\newcommand{\READACK}{{\tt READ\_ACK}}
\newcommand{\READREPLICA}{{\tt READ\_REPLICA}}
\newcommand{\REPLY}{{\tt REPLY}}
\newcommand{\WRITEREPLICA}{{\tt WRITE\_REPLICA}}
\newcommand{\SUCCESS}{{\tt SUCCESS}}
\newcommand{\State}{\ensuremath{\mathsf{state}}}
\newcommand{\TRUE}{\text{\sc true}}
\newcommand{\FALSE}{\text{\sc false}}
\newcommand{\return}{\text{\bf return}}
\newcommand{\from}{\text{\bf from}}
\newcommand{\pre}{\text{\bf pre:}}
\newcommand{\onrestart}{\text{\bf on restart}}
\newcommand{\fn}{\text{\bf function}}
\newcommand{\whenrc}{\text{\bf when received}}
\newcommand{\rc}{\textbf{received}}
\newcommand{\send}{\text{\bf send}}
\newcommand{\alg}[2]{\mathcal{A}_{#1}^{#2}}
\newcommand{\algname}{\textsc{TEE-Rex}}
\newcommand{\newresps}{K}
\newcommand{\CRR}{\mathsf{CRR}}
\newcommand{\FCRR}[1]{\mathsf{CRR}(#1)}
\newcommand{\Fdsym}{\mathsf{CRR}\text{-}\mathsf{D}}
\newcommand{\Fd}[1]{\Fdsym(#1)}
\newcommand{\rdqsize}{q_r}
\newcommand{\wrqsize}{q_w}
\newcommand{\vis}{\mathsf{vis}}
\newcommand{\newcode}[1]{{\color{blue}#1}}
\newcommand{\CC}{\mathsf{CC}}
\renewcommand{\_}{\texttt{\textunderscore}}
\newcommand{\WR}{{\sf wr}}
\newcommand{\WW}{{\sf ww}}
\newcommand{\RW}{{\sf rw}}
\newcommand{\RT}{{\sf rt}}
\title{TEE Is Not a Healer: Rollback-Resistant Reliable Storage (Extended Version)}
\title{TEE Is Not a Healer: Rollback-Resistant Reliable Storage}
\author{Sadegh Keshavarzi}{University of Surrey, Guildford, UK}{}{0009-0000-9386-7401}{}
\author{Gregory Chockler}{University of Surrey, Guildford, UK}{}{0000-0001-6700-9235}{}
\author{Alexey Gotsman}{IMDEA Software Institute, Madrid, Spain}{}{}{}
\keywords{Trusted execution environments, fault tolerance, crash recovery}
\authorrunning{S. Keshavarzi, G. Chockler, and A. Gotsman}
\titlerunning{TEE Is Not a Healer: Rollback-Resistant Reliable Storage}
\begin{document}

\nolinenumbers

\maketitle

\begin{abstract} 
  Recent advances in secure hardware technologies, such as Intel SGX or ARM
  TrustZone, offer an opportunity to substantially reduce the costs of Byzantine
  fault-tolerance by placing the program code and state within a secure enclave
  known as a \emph{Trusted Execution Environment (TEE)}. However, the protection
  offered by a TEE only applies during program execution. Once power is switched
  off, the non-volatile portion of the program state becomes vulnerable to
  \emph{rollback} attacks wherein it is undetectably reverted to an older
  version. In this paper we consider the problem of implementing reliable
  read/write registers out of failure-prone replicas subject to state
  rollbacks. To this end, we introduce a new unified model that captures
  multiple failure types that can affect a TEE-based system and establish tight
  bounds on the fault-tolerance of register constructions in this model. We
  consider both the static case, where failure thresholds hold throughout the
  entire execution, and the dynamic case, where any number of replicas can roll
  back, provided these failures do not occur too often. Our dynamic register
  emulation algorithm, \algname\/, provides the first correct implementation of
  a distributed state recovery procedure that requires neither durable storage
  nor specialized hardware, such as trusted monotonic counters.
\end{abstract}

\section{Introduction}
\label{sec:intro}

Tolerating Byzantine failures plays a major role in designing modern reliable
distributed systems, and in particular blockchains. Unfortunately, tolerating
Byzantine failures is expensive in terms of message complexity~\cite{dolev1985bounds},
latency~\cite{abraham2021good,guo2023tetrabft},
and the number of replicas 
required~\cite{pease1980reaching,lynch1986easy,dwork1988consensus}. 
Recent advances in secure hardware
technologies, such as Intel SGX or ARM TrustZone, offer an opportunity to
substantially reduce these costs by placing the program's code
and a portion of its state in a secure enclave known as a \emph{Trusted Execution
	Environments (TEE)}.
Applications can furthermore detect any runtime alterations to the state stored
outside the enclave using integrity metadata stored inside it. These mechanisms
together can then be used to convert Byzantine failures to simple crash or
omission failures, which are less expensive to
tolerate~\cite{veronese2013efficient,liu2019fastbft,behl2017hybrids,DBLP:conf/ccs/0005DNRZ22}.

However, a TEE is a not a panacea, because
the protection it offers only applies during program
execution. Once power is switched off, the non-volatile portion of the
program state becomes vulnerable to tampering by an attacker.
To address this, hardware manufacturers provide a \emph{sealing} 
mechanism~\cite{costan2016intel}
that allows programs to encrypt a portion of their state using a
device-specific key before storing it in non-volatile storage. When the
machine restarts, the TEE can check whether the sealed state
was indeed written by it in the past. 
However, it cannot determine whether this state is the most recent version. 
This limitation makes TEE-based systems vulnerable to \emph{rollback} attacks where the sealed 
state 
is undetectably replaced with an older 
version~\cite{203712,strackx2016ariadne,brandenburger2017rollback,DBLP:conf/ccs/0005DNRZ22,DBLP:conf/ndss/DinisD023}.

In this paper we study theoretical foundations for building reliable
distributed systems in the presence of rollbacks.
We start by introducing \emph{crash-restart-rollback} ($\CRR$), a new unified
failure model that formalizes rollbacks alongside other types of failures that
can occur in a TEE-based system. In this model, processes can use local
non-volatile storage to persist their states and are assumed to behave correctly
while they are executing. A $\CRR$-faulty process can permanently crash, crash and
restart infinitely often, or crash and restart with the content of its
non-volatile storage being undetectably reverted to an older version.



We then study the problem of implementing reliable read/write registers in an
asynchronous message-passing system where the register's state
is stored at a collection of $n$ $\CRR$ failure-prone \emph{replicas} accessed
by arbitrarily many crash-prone \emph{clients}. 
We consider two sub-classes of the above $\CRR$ failure model for replicas -- \emph{static} and
\emph{dynamic}. Static failure models assume the existence
of a priori fixed thresholds restricting the number of replicas
that can experience failures of various types
in every execution; this is similar to the classical crash or Byzantine
failure models. The dynamic failure models generalize the static ones by 
permitting any number of replicas to roll back in exchange for requiring that
these failures do not occur too often.



\subparagraph*{Static failure models.}  For the static failure models, we give a
fine-grained characterization of failure resilience, time complexity, and
non-volatile storage requirements of a register implementation in terms of
several thresholds: $k$ on the total number of any $\CRR$ failures; $r\le k$ on
the number of rollback failures; and $b\le n$ on the number of correct replicas
that can crash at least once, but eventually stay up. Our main result
establishes the following: a wait-free atomic
multi-writer/multi-reader (MWMR) register can be implemented in the static model
if and only if $n \ge 2k + \min(b, r) + 1$.  Our algorithm matching this bound
has the latency of $4$ message delays for both
writes and reads, which is the same as for crash fault-tolerant register
implementations~\cite{rambo,grishas-abd} and is thus optimal. This stands in a
sharp contrast with the Byzantine-resilient register constructions, where 
write and read latencies of $4$ and $8$ message delays are inherent even for
implementing a single-writer/multi-reader (SWMR) wait-free atomic
register~\cite{dgmsv11}.

Our lower bound implies that, when $r \le b$, rollback failures are no different
from Byzantine failures in terms of failure resilience: the number of replicas
$n \ge 2k + r + 1$ required in this case is the same as that needed to implement
Byzantine-resilient registers when at most $r$ out of $k$ faulty replicas can
behave arbitrarily~\cite{gv06}. On the other hand, the two models are separated
when $b < r$: in this case, a register can be implemented in $\CRR$ with
$2k + b + 1$ replicas, which is strictly fewer than $2k + r + 1$ required by
Byzantine fault-tolerant implementations.

We also establish a lower bound on non-volatile storage consumption: when
$2k + r + 1 \le n < 2k + b + 1$, at least $2k + r + 1$ replicas must store their
states persistently on their local non-volatile storage. Since this storage is
not protected by TEEs, in practice applications must rely on expensive
mechanisms to monitor its integrity at runtime, e.g., using TEE-protected
integrity metadata, such as Merkle trees. Our results show that for some
resilience levels these overheads are unavoidable.

\subparagraph*{Dynamic failure models.}
For the dynamic failure models, we first show that no 
implementation of a single-writer/single-reader (SWSR) safe register can  use fewer than 
$\ndown + \ncrash + 1$ replicas, where $\ndown$ bounds 
the number of replicas that either crash permanently or never stop crashing and
recovering, and $\ncrash\ge \ndown$ bounds the number of replicas that
crash at least once. While it is well-known that 
no register implementation can exist without a majority of 
replicas being eventually up ($n \ge 2\ndown + 1$), our result
further refines this lower bound for the case when $\ncrash > \ndown$.

For a matching upper bound, we propose an algorithm, called $\algname$, that implements
an MWMR atomic register using $n \ge \ndown + \ncrash + 1$ replicas. The algorithm
is always safe and is wait-free in executions where: $n-\ndown$
replicas eventually stop crashing; and either replicas do not
crash too often or $n-\ncrash$ replicas never crash at all.
Unlike in the static model, these conditions place no restrictions on the number
of replicas that can roll back.
The algorithm is
parameterized by $\ndown$, but is unaware of $\ncrash$. It achieves the optimal
time complexity of $4$ message delays for both writes and reads in failure-free
executions.


%

The most interesting ingredient of $\algname$ is a novel distributed
recovery protocol that enables crashed replicas to rebuild their state and
become fully operational upon restart.
As we explain in \S\ref{sec:algo2plus}, implementing recovery correctly is
nontrivial: a naive approach of simply querying a read quorum
and adopting the state of the most up-to-date replica is fundamentally
unsafe~\cite{michael2017recovering}. In fact, as we discovered in this work,
some of the prior recoverable register implementations suffer from nontrivial
safety bugs~\cite{DBLP:journals/talg/GuerraouiLPP08,DBLP:conf/ndss/DinisD023}
(\S\ref{sec:related}). Others do not guarantee
liveness~\cite{203712,DBLP:conf/ccs/0005DNRZ22}, or require trusted
hardware-based primitives such as persistent monotonic
counters~\cite{replacement,michael2017recovering}.  In contrast, our $\algname$ algorithm
provides what we believe is the first self-contained and
rollback-resilient construction of a read/write register under dynamic failure
models.

\section{System Model}
\label{sec:model}

We consider an asynchronous message-passing system consisting of a collection of
failure-prone processes $\allprocs=\{p_1, p_2, \ldots\}$ implementing a
high-level object abstraction. The set of processes is partitioned into a set
$\allreps = \{p_1, \ldots, p_n\}$ of $n>1$ \emph{replicas}, and a set
$\{p_{n+1}, p_ {n+2}, \ldots\}$ of possibly infinitely many
\emph{clients}.  Similarly to the classical faulty shared memory model
of~\cite{jct98,agmt95}, the replicas are responsible for storing the object
state, and the clients interact with the replicas to handle the requests
supported by the implemented object type. To study tradeoffs between
failure resilience and non-volatile storage consumption, we assume
that $s\ge 0$ replicas have access to non-volatile storage.

Formally, an object \emph{implementation} is a composition of client and
replica automata. An execution of an implementation is a sequence of states
interleaved with atomic send and receive actions starting from an initial
state. The action atomicity ensures that all non-volatile storage modifications
performed as part of the received message handling are \emph{failure
	atomic}~\cite{atlas14}, i.e., their effects are either persisted in their
entirety or not at all.

\subparagraph*{Failure models.}
A failure model is a predicate over executions.
An execution $\alpha$ is \emph{valid under a failure model} $\mathcal{F}$
(or simply \emph{$\mathcal{F}$-valid}) if $\mathcal{F}(\alpha)$ holds. 
For failure models $\mathcal{F}$ and $\mathcal{F'}$, $\mathcal{F} \preceq \mathcal{F'}$
iff $\forall \alpha.\, \mathcal{F}(\alpha) \implies \mathcal{F'}(\alpha)$.

\subparagraph*{Client failures.}
Clients can experience permanent \emph{crash} failures, but otherwise do not
deviate from their prescribed protocols. A client is \emph{correct} in an execution
if it never crashes, and is \emph{faulty}, otherwise.

\subparagraph*{Replica failures.}
Our baseline failure model for replicas, to which we refer as 
\emph{crash-restart-rollback} ($\CRR$), formalizes the types of
failures that can occur in a TEE-based system (see \S\ref{sec:intro}).
Under $\CRR$, a replica can experience a \emph{crash} that 
interrupts its execution either permanently or temporarily. 
In the former case, the replica stops taking steps; in the latter case, it 
\emph{restarts}, by executing a specified \onrestart\ block and then 
continues its execution from the state reached thereupon.
For simplicity, we assume that when a replica is restarted,
the restart and its immediately preceding crash occur \emph{simultaneously}
as a single atomic event.
We refer to the restarted replicas that have 
completed their \onrestart\ blocks as \emph{active}.
Upon restart, a replica with non-volatile storage may 
additionally experience a \emph{rollback} failure,
which causes the content of its non-volatile storage
to revert to a version older than it had before the latest crash.

A replica is \emph{up} in an interval 
$[t, t']$ if it does not crash without restart before $t$, and does not
crash (either with or without restart) at all times in $[t, t']$;
a replica is \emph{up after} $t$, if it is up in $[t, \infty)$; a replica $p_i$
is \emph{eventually up} if there exists $t$ such that
$p_i$ is up after $t$. Replicas in $\CRR$ fall into 
four disjoint classes:
\begin{itemize}
	\item a replica is \emph{perfect} if it never crashes;
	\item a replica is \emph{benign} if it crashes and restarts at least once, but is 
	eventually up, and furthermore, it never rolls back;
	\item a replica is \emph{crash-faulty} if it crashes without restarting or
	crashes and restarts infinitely often, and furthermore, it never rolls back; and
	\item a replica is \emph{rollback-faulty} if it rolls back at least once.
\end{itemize}
Perfect and benign replicas are together called \emph{correct}; crash-faulty and
rollback-faulty replicas are together called \emph{faulty}.

Note that $\CRR$ generalizes the classical crash-recovery failure 
model~\cite{hurfin1998consensus} by extending its set of faulty  behaviors with rollback
failures. It is also strictly weaker than the Byzantine failure model since
faulty replicas are required to follow their prescribed protocols. The
relationship of our model to other prior failure models is further discussed
in~\S\ref{sec:related}.

\subparagraph*{Channel reliability assumptions.}
We assume that every pair of processes
$p_i$ and $p_j$ are connected via point-to-point authenticated links such that
if both $p_i$ and $p_j$ are up after some $t$,
then every message sent by $p_i$ to $p_j$ after $t$ is eventually delivered 
by $p_j$. Note that this assumption is strictly weaker than the standard notion of a reliable channel. 
Specifically, it does not require reliability for messages sent by a correct process 
to another correct process before both processes stop crashing.

\subparagraph*{Specifications.}
We use standard safety and liveness notions to specify correctness conditions for 
register implementations: atomicity~\cite{lamport1986interprocess} 
(linearizability~\cite{DBLP:journals/toplas/HerlihyW90}), 
safeness~\cite{lamport1986interprocess}, wait-freedom~\cite{wait-free}, 
and obstruction-freedom~\cite{herlihy2003obstruction}. 
We consider both \emph{single-writer/single-reader (SWSR)} and 
\emph{multiple-writer/multiple-reader (MWMR)} register implementations.

\subparagraph*{Fault-tolerant implementations.}  An object implementation is
\emph{safe under a failure model $\mathcal{F}$} (or simply
$\mathcal{F}$\emph{-safe}) if it satisfies safety in all $\mathcal{F}$-valid
executions. We define $\mathcal{F}$\emph{-liveness} similarly. An implementation
is $\mathcal{F}$\emph{-tolerant} if it is both safe and live under
$\mathcal{F}$.

We study the implementability of registers under failure models
where, in every execution, arbitrarily many clients can crash and replicas
are subject to a restricted variant of the baseline $\CRR$ model above. 
Specifically, in~\S\ref{sec:static} we consider \emph{static} failure models 
$\FCRR{k,r,b}$ for replicas where the number of 
faulty, rollback-faulty, and benign replicas in every execution
is bounded by $k\le n$, $r\le k$, and $b\le n$, respectively.
Note that $\forall k, r, b.\, \FCRR{k,r,b} \preceq \CRR$.
In~\S\ref{sec:dynamic} we consider more flexible \emph{dynamic} failure models.



\section{Register Implementations in Static Failure Models}
\label{sec:static}

In this section we give a full characterization of the costs of implementing a
register under static failure models in terms of its failure resilience, time
complexity, and non-volatile storage requirements.  For resilience,
the following theorem establishes a tight bound on the total number of replicas
$n$ as a function of thresholds on the number of replicas of different types.
\begin{theorem}
  Assume that all replicas have access to non-volatile storage.  Then for all $k$, $r$, $b$, 
  there exists a $\FCRR{k,r,b}$-tolerant implementation of a wait-free atomic MWMR register if
  and only if $n \ge 2k + \min(b, r) + 1$.
\label{thm:main-tight}
\end{theorem}

When $r=0$, the bound in the theorem specializes to $n \ge 2k + 1$,
the same as for crash fault-tolerant register implementations. When $r \le b$,
the bound specializes to $n \ge 2k + r + 1$. This matches the known
generalization~\cite{gv06} of the $n \ge 3k+1$ Byzantine implementability bound
for registers~\cite{mad02,dgmsv11} to the case when at most $r$ out of $k$
faulty replicas can behave arbitrarily and the rest can crash but not deviate
from the prescribed protocol. On the other hand, if $b < r$, then the theorem
shows that we can implement a $\FCRR{k,r,b}$-tolerant wait-free atomic register 
with
$n = 2k + b + 1 < 2k + r + 1$ replicas -- strictly fewer than in the Byzantine
case.
We next refine Theorem~\ref{thm:main-tight} to additionally give a tight bound
on the number of replicas that must be equipped with non-volatile storage.
\begin{restatable}{theorem}{nonvolmain}
\label{thm:non-vol-main}
  Let $s$ be the number of replicas with non-volatile storage. Then for all 
  $k$, $r$, $b$, there exists a $\FCRR{k,r,b}$-tolerant implementation of a wait-free
  atomic MWMR register if and only if 
  $$
  (2k + r + 1 \le n < 2k + b + 1 \wedge s \ge 2k + r + 1) \vee
  (n \ge 2k + b + 1).
  $$
\end{restatable}
Note that the constraint on $n$ in Theorem~\ref{thm:non-vol-main} implies that
in Theorem~\ref{thm:main-tight}; we illustrate the constraint in
Figure~\ref{fig:upper-bound}. The above theorem shows that, when
$n \ge 2k + b + 1$, atomic registers can be implemented without any stable
storage at all. In practice, avoiding non-volatile storage offers substantial
performance gains when the register state fits within the TEE-protected RAM, as
it avoids the overheads of monitoring storage integrity at runtime.

\begin{figure}[t]
	\centering
	
	\scalebox{0.9}{
	\begin{tikzpicture}[yscale=1]
		
		\draw (0,0) rectangle (2,2);
		\draw (0,2) rectangle (2,4);
		\draw (2,0) rectangle (4,2);
		\draw (2,2) rectangle (4,4);

		\node[font=\scriptsize] at (-1, 1) {$n \ge 2k + r + 1$};
		\node[font=\scriptsize] at (-1, 3) {$n < 2k + r + 1$};
		\node[font=\scriptsize] at (1, 4.25) {$n < 2k + b + 1$};
		\node[font=\scriptsize] at (3, 4.25) {$n \ge 2k + b + 1$};
		
		\fill [color=brown] (2, 0) rectangle (4, 4);
		\fill [color=green] (0, 0) rectangle (2, 2);

		\node [font=\scriptsize, align=center] at (1, 3) {Atomic\\$s = 0$};
		\node [font=\scriptsize, align=center] at (1, 1) {Atomic\\Wait-free\\$s = 2k + r + 1$};
		\node [font=\scriptsize, align=center] at (3, 2) {Atomic\\Wait-free\\$s = 0$};
		
	\end{tikzpicture}
	}
	
	\caption{Resilience ($k$, $r$, $b$) and non-volatile storage usage ($s$) of the upper bound. Each square
	corresponds to the chunk of the problem space where the conjunction of the conditions on the axes holds.}
	\label{fig:upper-bound}
\end{figure}

Finally, the result in Theorem~\ref{thm:non-vol-main} can be strengthened along
several dimensions:
\begin{itemize}
\item The time complexity of the upper bound we present in \S\ref{sec:upper} is
  always $4$ message delays for both writes and reads. This matches the time
  complexity of crash fault-tolerant implementations, which is optimal. This
  property stands in a sharp contrast with Byzantine-resilient register
  constructions, where write and read latencies of $4$ and $8$ message delays,
  respectively, are inherent even for single-writer/multi-reader (SWMR)
  wait-free atomic registers~\cite{dgmsv11}.
\item The lower bound in Theorem~\ref{thm:non-vol-main} is established for
  obstruction-free safe implementations of SWSR registers.
\item When either of $b$ or $r$ is unknown, a tight bound is obtained from
  Theorem~\ref{thm:non-vol-main} by letting $b \triangleq n$ or
  $r \triangleq k$.
\item If the number of replicas is less than the resilience bound
  ($n < 2k + \min(b, r) + 1$), the upper bound does not sacrifice safety, but
  only liveness (the top-left box in Figure~\ref{fig:upper-bound}). In
  \S\ref{sec:dynamic} we leverage this property to develop a
  register construction under dynamic failure models that do not require
  all failure thresholds to hold for the entire execution. 
\end{itemize}

Our findings collectively demonstrate that, while existing Byzantine
fault-tolerant register constructions~\cite{mad02,ackm06,gv06,dgmsv11} can be
used for implementing $\FCRR{k,r,b}$-tolerant atomic registers, they are not well-suited
for this purpose due to their sub-optimal failure resilience and latency. We
next present an algorithm that gives a witness for our upper bound; we defer the
proofs of the other results to \tr{\ref{app:ktol}}{A}.


\subsection{Upper Bound}
\label{sec:upper}

An algorithm $\alg{}{}$ that shows our upper bound for $\FCRR{k,r,b}$ is presented in
Figure~\ref{fig:pseudo-a}. It is based on the MWMR variant of
ABD~\cite{DBLP:conf/podc/AttiyaBD90,rambo,grishas-abd}, where read and write
quorums are sets of replicas of size $q_r$ and $q_w$, respectively.
The algorithm uses the truth value of the predicate
$P(n, k, r, b) = (2k + r + 1 \le n < 2k + b + 1)$ to determine the
values of $q_r$ and $q_w$ (line~\ref{line:quorum}), the restart handler logic
(lines~\ref{line:onrestart}--\ref{alg:line:setstale-a2}), and whether
non-volatile storage should be used to store the replica's state
(line~\ref{line:non-vol}). In the following we refer to the variant of
$\alg{}{}$ executed if $P$ holds as algorithm $\alg{1}{}$, and as algorithm
$\alg{2}{}$ otherwise. The former corresponds to the bottom-left box in
Figure~\ref{fig:upper-bound}, while the latter corresponds to the remaining
portions of the figure.

\begin{figure}[t]
	\scalebox{0.96}{
\begin{minipage}[t]{\textwidth}
\begin{algorithm}[H]
  \DontPrintSemicolon
\nonl \textbf{Predicates and constants:}
\SubAlgoBlock
{
	 $P(n, k, r, b) \triangleq 2k + r + 1 \le n < 2k + b + 1$\;
	$(\wrqsize, \rdqsize) \triangleq (n-k, n-k)$ if $P(n, k, r, b)$; and $(n-k, k+1)$ otherwise 
	\label{line:quorum}
}
\end{algorithm}
\end{minipage}
}

\scalebox{0.96}{
\begin{algorithm}[H]
\DontPrintSemicolon
\nonl \textbf{State:}
\SubAlgoBlock
{
	$\State$, initially $((0, 0), v_0)$, with selectors $\Tsvar$ and $\Valvar$\; \label{alg:line:statedef}
	$\stale \in \{\TRUE, \FALSE\}$, initially $\FALSE$\;
	$\State$, $\stale$: \textbf{non-volatile} if $P(n, k, r, b) \wedge (i \le 2k + r + 1)$ 
	\label{line:non-vol}
}
\end{algorithm}
}

\scalebox{0.96}{
\begin{minipage}[t]{0.6\textwidth}
\begin{algorithm}[H]
  \DontPrintSemicolon
  \onrestart \label{line:onrestart}
\SubAlgoBlock
{
	\lIf{$\neg P(n, k, r, b)$}{$\stale \leftarrow \TRUE$\label{alg:line:setstale-a2}} 
}
\smallskip\smallskip
$\fn\ \writ(\tmpval)$ \label{alg:line:wrfn}
\SubAlgoBlock
{
	$S \leftarrow \rdint(\TS)$\; \label{alg:line:writereadtag}
	$\cnt \leftarrow \max\{\cnt'\ |\ (\cnt', \_) \in S \}$\; 
	\label{alg:line:writeselectmax}
	$\tmpts \leftarrow (\cnt + 1, i)$\; \label{alg:line:writeinctag}
	$\wrint(\tsval(\tmpts, \tmpval))$\; \label{alg:line:recwritewriteint}
	\KwRet \textsf{ack}\; \label{alg:line:wrfnret}
}
\smallskip\smallskip
$\fn\ \wrint(\req)$ \label{alg:line:wrint-a1}
\SubAlgoBlock
{
	$\idvar \leftarrow \newid()$\; \label{line:newid-w}
	\PUntil{\upshape \rc\ $\{\WRITEACK(\idvar, j)\ |\ p_j \in Q\}$ \qquad\qquad\qquad\qquad {\bf for some\
			$Q$ such that $|Q| \ge \wrqsize$} 
		\label{alg:line:rcvwriteack-a1}}
	{$\send\ \WRITE(\idvar, \req)\ \text{\bf to}\ \allreps$\; \label{alg:line:wbcast-a1}}
	$\return$\; \label{alg:line:endwrint-a1}
}
%
%
\smallskip\smallskip
$\whenrc\ \WRITE(\idvar, \tsval(\tmpts, \tmpval))$ \from{} $p_j$\label{alg:line:write-msg}
\SubAlgoBlock
{
	\If{$\tmpts > \State.\Tsvar$}
	{$(\State.\Tsvar, \State.\Valvar) \leftarrow (\tmpts, \tmpval)$}		\label{alg:line:tsvalset-a1}
	$\send\ \WRITEACK(\idvar, i)$ \text{\bf to} 
	$p_j$
		\label{alg:line:sendwrack-a1} 
}
\end{algorithm}
\end{minipage}
\hspace{-0.4cm}
\begin{minipage}[t]{0.49\textwidth}
\begin{algorithm}[H]
\DontPrintSemicolon
$\fn\ \rd()$ \label{alg:line:rdfn}
\SubAlgoBlock
{
	$S \leftarrow \rdint(\tsval)$\; \label{alg:line:recreadreadint}
	\textbf{let} $(\tmpts, \tmpval)$ \textbf{be such that} 
		$(\tmpts, \tmpval) \in S \wedge {}$ $\tmpts = \max\{\tmpts'\ |\ (\tmpts', 
		\_) \in S\}$\; \label{alg:read:selectmax}
	$\wrint(\tsval(\tmpts, \tmpval))$\; \label{alg:line:readwriteback}
	\KwRet $v$\; \label{alg:line:rdfnret}
}
\smallskip\smallskip
$\fn\ \rdint(\req)$ \label{alg:line:rdint-a1}
\SubAlgoBlock
{
	$\idvar \leftarrow \newid()$\; \label{line:newid-r}
	\PUntil{\upshape \rc\ $\{\READACK(\idvar, x_j, j)\ |\ p_j \in Q\}$ \qquad\qquad\qquad{\bf for
			some $Q$ such that $|Q| \ge \rdqsize$}
		\label{alg:line:rcvreadack-a1}}
	{$\send\ \READ(\idvar, \req)\ \text{\bf to}\ \allreps$\; \label{alg:line:rbcast-a1}}
	$\return\ \{x_j\ |\ p_j \in Q\}$ \label{alg:line:endread-a1}
}

\smallskip\smallskip

$\whenrc\ \READ(\idvar, \req)\ \from\ p_j$\label{alg:line:read-msg}
\SubAlgoBlock
{
	$\pre\ \stale = \FALSE$\\ \label{alg:line:notstale-a2}
	$\resp \leftarrow$\ \Case {$\req$} { \label{alg:line:rproc-a2}
		\Indp
		$\TS$: $\State.\Tsvar$\;
		$\tsval$: $(\State.\Tsvar, \State.\Valvar)$\;
	}
	$\send\ \READACK(\idvar, \resp, i)\ \text{\bf to}\ p_j$ \label{alg:line:readack-a2}
}
\end{algorithm}
\end{minipage}
}
\caption{Pseudocode for algorithm $\alg{}{}$ at process $p_i \in \allprocs$.}
\label{fig:pseudo-a}
\end{figure}


\subparagraph*{Replica states.}
Each replica stores a copy of the register state in 
a $\State$ variable (line~\ref{alg:line:statedef}), which 
is a tuple consisting of the register value $\State.\Valvar$ and 
timestamp $\State.\Tsvar$. Timestamps are pairs of a counter 
and the client identifier, ordered lexicographically:
$(\cnt_1, j_1) \le (\cnt_2, j_2)$ iff $\cnt_1 < \cnt_2 \vee (\cnt_1=\cnt_2 \wedge j_1 \le j_2)$.
Each replica also maintains a Boolean flag $\stale$: if this flag is
true, the replica is not allowed to respond to read requests.

\subparagraph*{Quorum-access functions.}
The clients use auxiliary functions $\rdint$ (lines~\ref{alg:line:wrint-a1}--\ref{alg:line:endwrint-a1})
and $\wrint$ (lines~\ref{alg:line:rdint-a1}--\ref{alg:line:endread-a1})
to respectively query and update various
components of replica states at a quorum. For $\rdint$, the $\req$ argument
specifies the state elements to query: e.g., $\tsval$ corresponds to
$(\State.\Tsvar, \State.\Valvar)$.  For $\wrint$, $\req$ specifies which state
elements to modify and their new values: e.g., $\tsval(\tmpts, \tmpval)$ to set
$(\State.\Tsvar, \State.\Valvar)$ to $(\tmpts, \tmpval)$. 

The $\rdint$ (respectively, $\wrint$) function starts by generating
a globally unique request identifier $\idvar$ (lines~\ref{line:newid-w} 
and~\ref{line:newid-r}); in practice this can be implemented using
cryptographic nonces. 
The function then broadcasts a $\READ(\idvar, \req)$ (respectively, $\WRITE(\idvar, \req)$)
message to all replicas, and awaits $\READACK$ (respectively, $\WRITEACK$) 
messages from $q_r$ (respectively, $q_w$) replicas.
Since channels may fail 
to deliver messages sent before their respective destinations stop crashing (\S\ref{sec:model}),
to ensure liveness, both $\rdint$ and $\wrint$ continue retransmitting requests until
receiving the desired quorums of responses.
The $\rdint$ function then returns the set of payloads
received in $\READACK$ messages from a read quorum of replicas.

Whenever a replica with $\stale = \FALSE$ receives a $\READ(\idvar, \req)$
message (line~\ref{alg:line:read-msg}), it responds with a $\READACK$ message,
carrying $\idvar$ and the current value of the state variable associated with
$\req$. Whenever a replica receives a $\WRITE(\idvar, \req)$ message
(line~\ref{alg:line:write-msg}), it applies the update encoded in $\req$ to its
state and acknowledges this fact with a $\WRITEACK$ message. In particular, if
$\req=\tsval(\tmpts, \tmpval)$ and $\State.\Tsvar < \tmpts$, then
$(\State.\Tsvar, \State.\Valvar)$ is set to $(\tmpts, \tmpval)$; otherwise, the
state is left unchanged.

\subparagraph*{Read and write protocols.}
To write a value $v$, a process $p_i$ first calls $\rdint$ 
to retrieve a set of timestamps from a read quorum
(line~\ref{alg:line:writereadtag}). It then
selects the highest timestamp counter $\cnt$ and calls $\wrint$ to 
store $\tmpval$ with timestamp $(\cnt+1, i)$ at a write quorum.
To read a value, a process $p_i$ first 
calls $\rdint$ to retrieve a set of 
timestamp-value pairs from a read quorum (line~\ref{alg:line:recreadreadint}).
It then selects the pair $(\tmpts, \tmpval)$ with the highest 
timestamp, calls $\wrint$ to store it at a write quorum, and 
returns $\tmpval$.

\subparagraph*{Algorithm $\alg{1}{}$.}
Algorithm $\alg{1}{}$ handles the case when the predicate $P(n, k, r, b)$ holds,
i.e., $2k + r + 1 \le n < 2k + b + 1$ (the bottom-left box in
Figure~\ref{fig:upper-bound} and the first disjunct in
Theorem~\ref{thm:non-vol-main}). The algorithm uses read and write quorums of
size $q_w=q_r=n-k$ (line~\ref{line:quorum}) and keeps the state of $2k+r+1$
replicas in non-volatile storage (line~\ref{line:non-vol}), which is the minimum
required by the lower bound of Theorem~\ref{thm:non-vol-main}. Its restart
handler returns immediately
(lines~\ref{line:onrestart}--\ref{alg:line:setstale-a2}), so the $\stale$
flag is always $\FALSE$ at all replicas, which can thus respond to $\READ$
messages.

Since in every execution at least $n - k > 0$ replicas are
eventually up, both write and read quorums are always available, and, 
as a result, algorithm $\alg{1}{}$ is wait-free.  We defer the full proof of its
linearizability to \tr{\ref{app:safety-high}}{A} and only prove the following 
key property the proof relies on:
\begin{restatable}[Real-Time Order]{property}{rtp}
	\label{lem:rtp}
        Let $\wrq$ be a completed call to $\wrint(\tsval(x, \_))$, and $\rdq$ be
        a completed call to $\rdint(\TS)$ or $\rdint(\tsval)$. If $\rdq$ is
        invoked after the completion of $\wrq$, then it returns a set containing
        a value $\ge x$ for $\State.\Tsvar$.
\end{restatable}
\begin{proof}[Proof of Property~\ref{lem:rtp} for $\alg{1}{}$.]
  Since $\wrq$ returns, it has executed the $\tsval(x, \_)$ request at $q_w=n-k$
  replicas. Since $\rdq$ returns, it has retrieved the value of $\State.\Tsvar$
  from a $q_r=k+1$ replicas. The intersection of the two quorums has a
  cardinality $\ge (n - k) + (n - k) - n = n - 2k$. Furthermore, the number of
  replicas in this intersection that store their states on non-volatile storage
  is $\ge (n - 2k) + (2k + r + 1) - n = r + 1$. Since at most $r$ replicas
  experience rollback failures, there exists at least one replica $p$ in the
  intersection of the quorums used by $\wrq$ and $\rdq$ that stores its state on
  non-volatile storage and never rolls back. Our protocol only allows replicas
  to increase the value of $\State.\Tsvar$ during normal execution
  (line~\ref{alg:line:tsvalset-a1}), so replica $p$ must have
  $\State.\Tsvar \ge x$ after it responds to $\wrq$. Since $\rdq$ starts after
  $\wrq$ completes, $p$ must respond to $\rdq$ after $\wrq$ completes. Thus, $p$
  must have $\State.\Tsvar \ge x$ when it responds to $\rdq$, so $\rdq$ receives
  a response containing $\State.\Tsvar \ge x$.
\end{proof}

\subparagraph*{Algorithm $\alg{2}{}$.}
Algorithm $\alg{2}{}$ handles the case when the predicate $P(n, k, r, b)$ does
not hold (the right and top-left boxes in Figure~\ref{fig:upper-bound} and the
second disjunct in Theorem~\ref{thm:non-vol-main}).
The algorithm does not use non-volatile storage, and has write quorums of size
$q_w=n-k$ and read quorums of size $q_r=k+1$. Additionally, only replicas that
have not crashed before being queried can participate in read quorums.  To
ensure this, each replica sets its $\stale$ flag to $\TRUE$ before returning
from the recovery procedure (line~\ref{alg:line:setstale-a2}). Recall that a
replica contacted by $\rdint$ checks this flag and responds only if it is
$\FALSE$ (line~\ref{alg:line:notstale-a2}).

When $n \ge 2k + b + 1$, in every execution at least $n - k > 0$ replicas are
eventually up, so some write quorum is eventually available. Also, at 
least $n-(k+b) \ge k+1$ replicas never crash, so some read quorum is always
available. Hence, in this case $\alg{2}{}$ is wait-free. Furthermore,
$\alg{2}{}$ is always safe, even when $n < 2k + r + 1$. Similarly to
$\alg{1}{}$, the safety of $\alg{2}{}$ this follows from the Real-Time Order
Property, proved below; the rest of the proof is given in
\tr{\ref{app:safety-high}}{A}.
\begin{proof}[Proof of Property~\ref{lem:rtp} for $\alg{2}{}$.]
  The intersection of any read and write quorums has a cardinality
  $\ge (n-k)+(k+1)-n \ge 1$.  Hence there exists a replica $p$ that both
  executes $\tsval(x, \_)$ and responds to $\rdq$. Since $p$ is a member of a
  read quorum, it could not have crashed. Since the value of $\State.\Tsvar$ is
  non-decreasing in the absence of crashes (line~\ref{alg:line:tsvalset-a1}),
  $p$ must have $\State.\Tsvar \ge x$ after it responds to $\wrq$. Since $\rdq$
  starts after $\wrq$ completes, $p$ must respond to $\rdq$ after $\wrq$
  completes. Thus, $p$ must have $\State.\Tsvar \ge x$ when it responds to
  $\rdq$, so $\rdq$ receives a response containing $\State.\Tsvar \ge x$.
\end{proof}


\section{Register Implementations in Dynamic Failure Models}
\label{sec:dynamic}

In this section we study the implementability of registers under failure models
where any number of replicas can experience $\CRR$ failures 
throughout an execution. By Theorem~\ref{thm:main-tight}, no register
implementation can be simultaneously safe and live in all executions
if $n < 2k+\min(b,r)+1$. To circumvent this impossibility, we introduce
a family of \emph{dynamic} failure models $\Fdsym$, defined as follows.



We assume a constant $\Delta$ such that, in any execution, every message sent by
a process $p_i$ to a process $p_j$ at time $t$ is guaranteed to be received by
$p_j$ by $t+\Delta$, provided $p_i$ and $p_j$ are up in $[t, t+\Delta]$. Note that this
assumption does not make the model stronger than asynchronous: since we do not
assume a lower bound on message delays or processing times, processes do not
have a means to measure time passage and thus take advantage of the existence
of $\Delta$~\cite{hagit-churn}.

\begin{definition}
\label{ass:what}
  Given $\ndown$, $\ncrash$, and $M$, an execution $\alpha$ is
  $\Fd{\ndown,\ncrash,M}$\textbf{-valid} if
  there exists $\ups \subseteq \allreps$ such that $|\ups| \ge n - \ndown$, all replicas in $\ups$
  are eventually up in $\alpha$ and either:
\begin{enumerate}
\item \label{ass:dyn}
  $c > d$ and for any two crash events occurring at times $t$ and
  $t'\ge t$, we have $t'-t > M\Delta$; or
\item \label{ass:stat} no replica crashes in some set $S \subseteq U$ such that
  $|S| \ge n - \ncrash$.
\end{enumerate}
\end{definition}

Intuitively, the parameter $\ndown$ captures the upper bound on the number of
replicas that are not eventually up in $\alpha$, and must be known to any
register implementation. Conditions~\ref{ass:dyn} and~\ref{ass:stat} further
restrict failure scenarios: either failures must be separated by at least $M$
message delays, or at most $\ncrash \ge \ndown$ replicas can crash
in $\alpha$. The former is similar to the churn-limiting assumptions used to
model process participation in dynamic and reconfigurable
systems~\cite{hagit-churn}. The parameter $M$ is implementation-specific and
captures the minimum time required for a replica to recover its state before the
next crash occurs.

We show that every static failure model $\FCRR{k,r,b}$ from~\S\ref{sec:static}
is a special case of a dynamic model $\Fd{k, k+b, \_}$. Thus, any 
$\Fd{k, k+b, \_}$-tolerant register implementation is also $\FCRR{k,r,b}$-tolerant.
\begin{proposition}
	$\forall k, r, b, M.\, \FCRR{k,r,b} \preceq \Fd{k, k+b, M}$.
	\label{prop:dyn-stat}
\end{proposition}
\begin{proof}
  Consider any $\FCRR{k,r,b}$-valid execution $\alpha$. Then in this execution
  up to $k$ replicas are faulty and up to $b$ replicas are benign. Let $U$ be
  the set of correct replicas, and $S \subseteq U$ be the set of perfect
  replicas. Then $|U| \ge n-k$ and $|S| \ge n-k-b$. Furthermore, the replicas in
  $U$ are eventually up and those in $S$ never crash, so the condition in case 2
  of Definition~\ref{ass:what} is satisfied. Thus, $\alpha$ is
  $\Fd{k, k+b, M}$-valid, as needed.
\end{proof}

We prove the following upper bound in the dynamic failure model:
\begin{theorem}
For all $\ndown$, $\ncrash$, if $n \ge \ndown+\ncrash+1$ and $M \ge 12$, then there exists an 
implementation of an atomic wait-free MWMR register that is
$\Fd{\ndown,\ncrash,M}$-live and always safe.
\label{thm:upper-dynamic}
\end{theorem}

For the lower bound, we prove a stronger result that holds for
$\Fd{\ndown, \ncrash, M}$-tolerant implementations, and not just those that are
$\Fd{\ndown, \ncrash, M}$-live and always safe:
\begin{theorem}
	For all $\ndown$, $\ncrash$, $M$, if there exists a $\Fd{\ndown, \ncrash, M}$-tolerant 
	implementation of a
	safe obstruction-free SWSR register, then $n \ge \ndown+\ncrash+1$.
	\label{thm:lower-for-dynamic}
\end{theorem}

The two theorems imply a tight resilience bound:
\begin{theorem}
  For all $\ndown$, $\ncrash$, 
  $\Fd{\ndown, \ncrash, M}$-live implementation of an atomic wait-free MWMR
  register exists for some $M$ if and only if $n \ge \ndown+\ncrash+1$.
\label{thm:tight-dynamic}
\end{theorem}

The lower bound in Theorem~\ref{thm:lower-for-dynamic} reveals an inherent
trade-off: while by Proposition~\ref{prop:dyn-stat} any
$\Fd{k, k+b, \_}$-tolerant register implementation is also
$\FCRR{k,r,b}$-tolerant, any such implementation requires $n \ge 2k+b+1$,
thus sacrificing the optimal resilience under $\FCRR{k,r,b}$
(cf. Theorem~\ref{thm:non-vol-main}).


In the rest of this section we present an algorithm that validates
Theorem~\ref{thm:upper-dynamic}, which we call $\algname$; we defer the proofs
of the other results to \tr{\ref{app:dynamic}}{B}.  The algorithm's latency in
crash-free executions is $4$ message delays for both reads and writes, which is
optimal. Since under $\Fd{\ndown, \ncrash, M}$, any number of replicas can
suffer rollbacks in some executions, the algorithm does not rely on
non-volatile storage.  It also does not require the knowledge of $\ncrash$.



\subsection{Crash-Consistency Basics}
\label{sec:algo2plus}


\begin{figure}[!ht]
  \scalebox{0.96}{%
\!\begin{algorithm}[H]
  \DontPrintSemicolon
              \setcounter{AlgoLine}{0}
		\nonl \textbf{Constants:}
		\SubAlgoBlock
		{
			$(\wrqsize, \rdqsize) \triangleq (n - \ndown, \ndown + 1)$
		}
	\end{algorithm}
}
\scalebox{0.96}{%
\!\begin{algorithm}[H]
  \DontPrintSemicolon
%
\nonl \textbf{State:}
\SubAlgoBlock
{
	$\State$, initially $((0, 0), v_0, [0..0]\newcode{, [0..0]})$, with 
	selectors $\Tsvar$, $\Valvar$, $\cvvar$\newcode{, and 
	$\precvvar$}\;
	$\stale \in \{\TRUE, \FALSE\}$, initially $\FALSE$\;
}
\end{algorithm}
}

\scalebox{0.96}{
\!\begin{minipage}[t]{0.563\textwidth}
	\begin{algorithm}[H]
          \DontPrintSemicolon
		\onrestart \label{alg:line:dyn-recov}
		\SubAlgoBlock
		{
			$\stale \leftarrow \TRUE$\; \label{alg:line:setstale-a2+}
			\newcode{
				$S \leftarrow \rdint(\precv(i))$\; 
				\label{alg:line:recoverymyincread}
				$\nextinc \leftarrow \max(S) + 1$\; \label{alg:line:choosenewinc}
				$\wrint(\precv(i, \nextinc))$\; 
				\label{alg:line:recoverymyincwrite}
			}
			$\State.\cvvar[i] \leftarrow \nextinc$\; 
			\label{alg:line:recoverymyincincrement}
			$\wrint(\cv(i, \State.\cvvar[i]))$\;\label{alg:line:cvwrite}
			$S\leftarrow \rdint(\ST(\State.\cvvar[i]))$\; \label{alg:line:rdbegin}
		
 		\ForAll {$z = 1..n$\label{alg:line:for-k-max}} {
	\newcode{$\State.\precvvar[z] \leftarrow \max 
		\{\State.\precvvar[z],$ $\max_{s \in S}\{s.\precvvar[z]\}\}$\;}
	\label{alg:line:updatepre}
	$\State.\cvvar[z] \leftarrow \max \{\State.\cvvar[z],$ $\max_{s \in 
		S}\{s.\cvvar[z]\}\}$\;
	\label{alg:line:updatecv}
}
\textbf{let} $(\tmpts, \tmpval)$ \textbf{be such that} 
$\exists s\in S.$ $(\tmpts, \tmpval) = (s.\Tsvar, s.\Valvar) \wedge 
\tmpts = \max\{s.\Tsvar \mid s \,{\in}\, S\}$\; 
\label{alg:line:choosets}
\If {$\tmpts > \State.\Tsvar$} {($\State.\Tsvar, \State.\Valvar) \leftarrow 
	(\tmpts, \tmpval)$} \label{alg:line:rdend} 
\label{alg:line:updatets}
$\stale \leftarrow \FALSE$\; \label{alg:line:clearstale} 
}

\smallskip\smallskip

$\fn\ \wrint(\req)$ \label{alg:line:wrint-a2+}
\SubAlgoBlock
{
	$(\idvar, Q, cv) \leftarrow (\newid(), \emptyset, [0..0])$\; \label{alg:line:x-init}
	\DoUntil {$|Q| \ge \wrqsize$} 
	{
		\label{alg:line:while}
		\PUntil{\upshape \rc\\
			\nonl \ \ $\{\WRITEACK(\idvar, \incvar_j, \tmpcv_j, j)\ |\ 
			p_j \,{\in}\, \newresps\}$ \textbf{for\ some} $\newresps$			
			\textbf{such that} $|\newresps \,{\cup}\, Q| \ge \wrqsize$ \label{alg:line:wrint-wait}}
		{	\label{alg:line:while2}
			\ForEach{$p_z \in \allreps \setminus Q$ \label{alg:line:sendwritefor}}{
			$\send\ \WRITE(\idvar, \req\newcode{, \tmpcv[z]})\ \textbf{to}\ p_z$}
			\label{alg:line:firstbcast} \label{alg:line:sendwrite}}
		\uIf{$\req \in \{\tsval(\_, \_)\}$\label{alg:line:wrif}}
		{
			$S \leftarrow \{\tmpcv'\ |\ p_i \text{~received}$\\ 
			\nonl \  \ \ \ $\WRITEACK(\idvar, \_, \tmpcv', j)\}$\; 
			\label{alg:line:rdcvts}
		}
		\Else (\tcp*[h]{$\req \in \{\cv(\_, \_)\newcode{, \precv(\_, \_)}\}$} 
		\label{alg:line:wrelse})
		{
			$S \leftarrow \rdint(\cv)$\; \label{alg:line:rdcv1}
		}
		\ForAll {$z = 1..n$\label{alg:line:cvfor}}
			{$\tmpcv[z] \leftarrow \max\{\tmpcv'[z]\ |\ \tmpcv' \in S\}$} 
			\label{alg:line:rdcv} 
		$Q \leftarrow \{p_j\; |\; i = j\; \vee$\\ 
				\nonl \ \ \  $(p_i \text{~received~}
				\WRITEACK(\idvar, \incvar_j, \_, j)$\\ 
				\nonl \ \ \ \ \ s.t. $\incvar_j \ge \tmpcv[j])\}$\;
		\label{alg:line:filter-x}
		\label{alg:line:wr-resend}
		\label{alg:line:resendwritecrashed} 
	} \label{alg:line:wrint-ret-a2plus}
}
\end{algorithm}
 \end{minipage}
 \hspace{-0.4cm}
 \begin{minipage}[t]{0.52\textwidth}
 		\begin{algorithm}[H]
 		\DontPrintSemicolon

 		$\fn\ \rdint(\req)$ \label{alg:line:rdintdef}
 		\SubAlgoBlock
 		{
 			$\idvar \leftarrow \newid()$\; \label{alg:line:id-gen-rdint}
 			\PUntil{\upshape \rc\ $\{\READACK(\idvar, x_j, j)\ |\ p_j \in Q\}$
 				\qquad\qquad\quad
 				\textbf{for\ some}\ $Q$ \textbf{such that} 
 				$|Q| \ge \rdqsize$ 
 				\label{alg:line:rcvreadack-a2+}}
 			{\label{alg:line:rdintloop}$\send\ \READ(\idvar, \req)\ \text{\bf to}\ \allreps$\; 
 			\label{alg:line:rbcast-a2+}}
 			$\return\ \{x_j\ |\ p_j \in Q\}$ \label{alg:line:endread-a2+}
 			
 		}
 		
 \smallskip\smallskip	

$\whenrc\ \READ(\idvar, \req)\ \from\ p_j$
\SubAlgoBlock
{
	$\pre\ \stale = \FALSE$\\ \label{alg:line:notstale-a2+}
	\If{$\req = \ST(\incvar)$}{$\State.\cvvar[j] \leftarrow 
	\max\{\State.\cvvar[j], \incvar\}$} \label{alg:line:rdstinc}
	$\resp \leftarrow$\ \Case {$\req$} { \label{alg:line:rproc-a2+}
		\Indp
		$\TS$: $\State.\Tsvar$\;
		$\tsval$: $(\State.\Tsvar, \State.\Valvar)$\;
		\newcode{$\precv(j)$: $\State.\precvvar[j]$\;}
		$\cv$: $\State.\cvvar$\;
		$\ST(\incvar)$:  $\State$\;
	}
	$\send\ \READACK(\idvar, \resp, i)$ \textbf{to} $p_j$ 
	\label{alg:line:readack-a2+}
}

\smallskip\smallskip
$\whenrc\ \WRITE(\idvar, \tsval(\tmpts, \tmpval)\newcode{, \_})$ \from{} $p_j$
\SubAlgoBlock
{
	$\pre\ \stale = \FALSE$\\ \label{alg:line:notstale-write}
	\If{$\tmpts > \State.\Tsvar$}
	 		{$(\State.\Tsvar, \State.\Valvar) \leftarrow (\tmpts, \tmpval)$} \label{alg:line:tsvalset-a2+}
	$\send\ \WRITEACK(\idvar, \State.\cvvar[i], \State.\cvvar, i)$ \text{\bf to} $p_j$\label{alg:line:sendwrack-tsval}
}

\smallskip\smallskip

$\whenrc\ \WRITE(\idvar, \req\newcode{, \incvar})$ \from{} $p_j$\hspace{-2pt}
\label{line:write-2-handler}
\SubAlgoBlock
{
	$\pre\ \req \in \{\cv(j, v)\newcode{, \precv(j, v)}\}$\\ 
	\label{line:write-2}
	\newcode{$\State.\cvvar[i] \leftarrow 
		\max\{\State.\cvvar[i], \incvar\}$\; \label{alg:line:incmyinc}
	\uIf{$\req = \precv(j, v)$}{
		$\State.\precvvar[j] \leftarrow \max\{\State.\precvvar[j], v\}$ \label{alg:line:precvset}
	}}
	\Else{
		$\State.\cvvar[j] \leftarrow \max\{\State.\cvvar[j], v\}$ \label{alg:line:cvset}
	}
	$\send\ \WRITEACK(\idvar, \State.\cvvar[i], \bot, i)\ \textbf{to}\ 
	p_j$\label{alg:line:sendwrack-a2+}
}
\end{algorithm}
\end{minipage}
}
 
\caption{Pseudocode of $\algname$ at process $p_i \in \allprocs$.}
\label{fig:pseudo-dyn}
\end{figure}


The pseudocode of $\algname$ appears in Figure~\ref{fig:pseudo-dyn}. It
reuses the $\rd$ and $\writ$ procedures of algorithm $\alg{}{}$
(Figure~\ref{fig:pseudo-a}, lines~\ref{alg:line:wrfn}--\ref{alg:line:rdfnret}).
The algorithm relies on read quorums of size
$\rdqsize = \ndown + 1$, and so-called \emph{crash-consistent} write quorums of
size $\wrqsize = n-\ndown$, introduced in the following. Given the mapping
between static and dynamic models established by
Proposition~\ref{prop:dyn-stat}, these quorum sizes mirror those in $\alg{2}{}$:
$k+1$ for read quorums and $n-k$ for write quorums.

Since replicas lose their memory contents during a restart, $\algname$ follows the
approach of $\alg{2}{}$: a replica relies on its $\stale$ flag to determine
whether its current state can be used to respond to $\READ$
requests. However, in contrast to $\alg{2}{}$, simply
setting $\stale=\TRUE$ upon restart so that any replica that has not crashed permanently is
active will not be live. To see why, fix arbitrary $M$ and
$\Delta$, and consider an execution $\alpha$ where 
all replicas are eventually up, every replica crashes and 
restarts at least once, and every two consecutive crashes are separated
by $M\Delta$. Clearly, $\alpha$ is $\Fd{0,n,M}$-valid. However, 
$\alg{2}{}$ will not be live in $\alpha$ as it eventually runs out 
of non-stale replicas to form read quorums.
To deal with such
scenarios, a key ingredient of $\algname$ is a novel recovery protocol executed by a
replica upon restart
(lines~\ref{alg:line:dyn-recov}--\ref{alg:line:clearstale}). This protocol
reconstructs the state of a restarted replica, and thus enables it to clear its
$\stale$ flag.

The protocol achieves this by synchronizing with other
replicas. However, doing this naively by simply querying a read quorum and
adopting the state of the most up-to-date replica would lead to a safety
violation. For example, consider the algorithm $\alg{2}{}$ instantiated with 
$\rdqsize = \ndown + 1$ and $\wrqsize = n-\ndown$. Suppose we modify the
$\onrestart$ procedure at line~\ref{line:onrestart} as follows.  After setting
$\stale=\TRUE$, a restarting replica would invoke $\rdint(\tsval)$ to retrieve
the register states from a read quorum of replicas, assign the one with the
highest timestamp to its own copy of $\State$, set $\stale=\FALSE$, and then
return. The following scenario demonstrates that this modification results in a
linearizability violation.
\begin{example}\label{examples:crash-consistency}
Let $n = 3$, $\ndown = 1$ and $\ncrash = 1$, so that $n \ge \ndown + \ncrash
+ 1$. Consider the execution of the modified algorithm in
Figure~\ref{fig:problem}, which is valid under $\Fd{1, 1, \_}$.
First, a client $p_4$ invokes $\writ(v)$
(line~\ref{alg:line:wrfn} in Figure~\ref{fig:pseudo-a}), which calls $\rdint(\TS)$
to query timestamps at a read quorum and then $\wrint(\tsval((1, 4), v))$
to store the new register state at a write quorum.
In the latter call, $p_4$ sends $m=\WRITE(\tsval((1, 4), v))$ to all replicas and waits 
until it gathers $\WRITEACK$ responses from a write quorum.
Replica $p_3$ gets $m$ first, sets
$\State = ((1, 4), v)$, and responds with a $\WRITEACK$. 
It then crashes, loses its state, and reconstructs the state by contacting 
a read quorum $\{p_1, p_2\}$ of size $q_r=\ndown+1=2$.
Since $m$ has not yet reached either replica, 
$p_3$ sets $\State$ back to its initial value $((0, 0), v_0)$.
Replica $p_2$ then receives $m$, sets $\State = ((1, 4), v)$, and replies to $p_4$
with a $\WRITEACK$. Process $p_4$ now gathers $\WRITEACK$ responses from $q_w = n-\ndown =
2$ processes, which constitutes a write quorum.
Thus, $\wrint$ and $\writ(v)$ at $p_4$ terminate. Later, 
a client $p_5$ calls $\rd$, which obtains responses from a read quorum $\{p_1, p_3\}$.
Since neither replica is aware of $p_4$'s write, 
the $\rd$ returns $v_0$, violating linearizability. 
\label{ex:nocc}
\end{example}

\begin{figure}[t]
	\centering
	\begin{tikzpicture}[scale=0.87]
		
		\tikzmath{\pclient = 1.6; \pread = 0.8; \pwriter = -1.6; \pcrash = -0.8; \prest = 0;}

		\node [color=cyan!50] at (4.5, \prest) {\pgfuseplotmark{*}};
		\node [color=cyan!50] at (4.5, \pcrash) {\pgfuseplotmark{*}};
		\node [font=\scriptsize] at (4.5, \pcrash - 0.25) {$e'$};
		\node [color=orange!50] at (2.5, \pwriter) {\pgfuseplotmark{*}};
		\node [font=\scriptsize] at (2.5, \pwriter - 0.25) {$e$};
		\node [color=orange!50] at (8.5, \pcrash) {\pgfuseplotmark{*}};
		\node [font=\scriptsize] at (8.6, \pcrash - 0.25) {$e''$};

		\node [font=\scriptsize] at (3.9, \prest / 2 + \pcrash / 2) {$\READ$};
		\node [font=\scriptsize] at (7.55, \pwriter / 2 + \pcrash / 2) {$\READACK$};
		
		\node [font=\small] at (-0.5, \pwriter) {$p_3$};
		\node [font=\small] at (-0.5, \pcrash) {$p_2$};
		\node [font=\small] at (-0.5, \prest) {$p_1$};
		
		\node [font=\scriptsize] at (0.5, \pwriter + 0.25) {$((0, 0), v_0)$};
		\node [font=\scriptsize] at (0.5, \pcrash + 0.25) {$((0, 0), v_0)$};
		\node [font=\scriptsize] at (0.5, \prest + 0.25) {$((0, 0), v_0)$};
		
		\draw [dotted] (0,\pwriter) -- (15,\pwriter);
		\draw [dotted] (0,\pcrash) -- (15,\pcrash);
		\draw [dotted] (0,\prest) -- (15,\prest);
		
		\node [font=\small] at (1.5, \pclient) {$p_4$:};
		\draw [|-|] (2, \pclient) -- (10, \pclient);
		\node [font=\footnotesize] at (6, \pclient + 0.25) {$\wrint(\tsval((1, 4), 
			v))$};
		
		\node [font=\small] at (11.25, \pread) {$p_5$:};
		\draw [|-|] (11.75, \pread) -- (15, \pread);
		\node [font=\footnotesize] at (13.4, \pread + 0.25) {$\rdint(\tsval)$};
		
		\draw [->] (2.25, \pclient) -- (2.5, \pwriter);
		\draw [->] (2.5, \pwriter) -- (2.75, \pclient);
		\node [font=\scriptsize] at (1.75, \pclient / 2 + \prest / 2) 
		{$\WRITE$};
		\node [font=\scriptsize] at (3.5, \pclient / 2 + \prest / 2 - 0.25) 
		{$\WRITEACK$};
		
		\node at (3.25, \pwriter) {$\times$};
		\node [font=\scriptsize] at (3.25, \pwriter - 0.25) {Crash};
		
		\draw [-|] (3.25, \pwriter) -- (7.5, \pwriter);
		\node [font=\footnotesize] at (5.2, \pwriter - 0.25) {$\onrestart$};
		\node [font=\scriptsize] at (7.5, \pwriter - 0.25) {$\stale = \FALSE$};
		\node [font=\scriptsize] at (7.5, \pwriter - 0.6) {$((0, 0), v_0)$};
		
		\draw [->] (3.75, \pwriter) -- (4.5, \pcrash);
		\draw [->] (4.5, \pcrash) -- (7.5, \pwriter);
		\draw [->] (3.75, \pwriter) -- (4.5, \prest);
		\draw [->] (4.5, \prest) -- (7.5, \pwriter);
		
		\draw [->] (2.25, \pclient) -- (8.5, \pcrash);
		\draw [->] (8.5, \pcrash) -- (9.5, \pclient);
		\draw [dashed,->] (2.25, \pclient) -- (8.5, \prest+0.5);
		\node [font=\scriptsize] at (9.4, \pcrash + 0.25) {$((1, 4), v)$};
		\node [font=\scriptsize] at (6.2, \pclient / 2 + \prest / 2 - 0.4) 
		{$\WRITE$};
		\node [font=\scriptsize] at (7.5, \pclient / 2 + \prest / 2 + 
		0.1) {$\WRITE$};
		\node [font=\scriptsize] at (9.75, \pclient / 2 + \pcrash / 2) 
		{$\WRITEACK$};
		
		\draw [->] (11.75, \pread) -- (13.4, \prest);
		\draw [->] (13.4, \prest) -- (15, \pread);
		\draw [->] (11.75, \pread) -- (13.4, \pwriter);
		\draw [->] (13.4, \pwriter) -- (15, \pread);
		
		\node [font=\scriptsize] at (12.2, \pread / 2 + \pwriter / 2) {$\READ$};
		\node [font=\scriptsize] at (14.4, \pwriter / 2 + \pcrash / 2) {$\READACK$};
		
	\end{tikzpicture}
	
	\caption{An execution with incorrect recovery.}
	\label{fig:problem}
\end{figure}


To address the above problem, the set of responses collected by $p_4$ from a write quorum 
of replicas must be \emph{crash-consistent}~\cite{michael2017recovering}.
Let $\rightarrow$ be a partial order on the set of replica events
such that $e \rightarrow e'$ if there exists a replica $p_i$ such that $e$
precedes $e'$ at $p_i$.
\begin{definition}\label{defn:crash-consistent}
  A set $E$ of $[\send\ \WRITEACK]$ events is \textbf{crash-consistent} if for
  for each event $e \in E$ occurring at replica $p_i$, if after this event $p_i$
  crashes, restarts, and invokes $\rdint(\ST)$, then the set $E'$ of
  $[\send\ \READACK]$ events triggered by the read satisfies
  $\neg \exists e' \in E'.\ \exists e'' \in E.\ e' \rightarrow e''$.
\end{definition}

For example, consider Figure~\ref{fig:problem}. Let $E$ be the set of
$[\send\ \WRITEACK]$ events triggered by the call to
$wq=\wrint(\tsval((1, 4), v))$ at $p_4$ (orange dots), and let $E'$ be the set
of $[\send\ \READACK]$ events triggered by the call to $\rdint(\ST)$ during the
recovery at $p_3$ (blue dots). Let
$e \in E$ be the event at $p_3$ sending $\WRITEACK$, 
$e' \in E'$ the event at $p_2$ sending $\READACK$,
and $e'' \in E$ the event at $p_2$ sending $\WRITEACK$.
Since $e' \rightarrow e''$, the set $E$ is not crash-consistent.

\subsection{Crash-Consistency with Incarnation Numbers and Crash Vectors}
\label{sec:ccq-cv}

To implement crash-consistency, similarly to~\cite{michael2017recovering,replacement}, each
replica in $\algname$ maintains two pieces of metadata -- an \emph{incarnation number}
and a \emph{crash vector}.  The former is an integer, initialized to $0$, and
the latter tracks the highest known incarnation numbers of all other
replicas. The crash vector of replica $p_i$ is stored in $\State.\cvvar$, and
for convenience, $p_i$'s own incarnation number is stored in
$\State.\cvvar[i]$. This incarnation number is updated by the recovery procedure
at line~\ref{alg:line:recoverymyincincrement}.
To explain the basic principles underlying crash-consistency checks
in $\algname$, we first assume that incarnation numbers written by each replica $p_i$
into $\State.\cvvar[i]$ monotonically increase:
\begin{restatable}[Incarnation Number Monotonicity]{propertyr}{inm}
\label{lem:monotone}
Assume that a replica $p_i$ sets $\State.\cvvar[i]$ to $\incnum$ at
line~\ref{alg:line:recoverymyincincrement} at time $t$. Then $\incnum > 0$ and
for all times $t' > t$, if $p_i$ sets $\State.\cvvar[i]$ to $\incnum'$ at
line~\ref{alg:line:recoverymyincincrement} at time $t'$, then
$\incnum' > \incnum$.
\end{restatable}
This property can be easily ensured assuming that each replica has a built-in
monotonic counter that is {\em never} rolled back, as done in previous
work~\cite{michael2017recovering}. In~\S\ref{sec:ccq-inc} we show how to ensure
this property without such an assumption. The code needed for this
is shown in blue and should be ignored when reading this section.

\subparagraph*{$\wrint(\tsval)$ implementation.}
In $\algname$, a process receiving a $\WRITE$ message piggybacks its
incarnation number and a copy of its crash vector on the $\WRITEACK$ response
(line~\ref{alg:line:sendwrack-tsval}). These are used by the implementation of
$\wrint(\tsval)$ to check whether the set of $\WRITEACK$ responses it receives
is crash-consistent.

This check is integrated within the loop executed by $\wrint$
(line~\ref{alg:line:while}). At each iteration of this loop, a $\WRITE$ message
is (periodically) broadcast to all replicas excluding those in the set $Q$,
which accumulates the replicas whose responses have been already validated as
crash-consistent. Once acknowledgments are received from a set $\newresps$ of replicas
such that $\newresps \cup Q$ is a write quorum (line~\ref{alg:line:wrint-wait}), the
function checks their crash-consistency. The replicas that sent responses that
are not crash-consistent are then purged from $\newresps\cup Q$, and the set of
remaining replicas is reassigned to $Q$ (line~\ref{alg:line:filter-x}).

To identify which replicas $p_j$ in the set $\newresps \cup Q$ sent
crash-consistent responses, the crash vectors received in the $\WRITEACK$
messages are combined into a vector $\tmpcv$ by taking their entry-wise maximum
(line~\ref{alg:line:rdcv}). Then the incarnation numbers $\incvar_j$ in the
$\WRITEACK$ messages are compared against $\tmpcv[j]$.  If
$\incvar_j <\tmpcv[j]$, then $p_j$ restarted while our invocation of $\wrint$
was gathering $\WRITEACK$ responses from other replicas, which indicates that
$p_j$'s prior response is no longer crash-consistent. As a special case, a
replica always treats its own $\WRITEACK$ responses as crash consistent (the
$i=j$ disjunct at line~\ref{alg:line:filter-x}).
If after excluding all replicas whose responses are not crash-consistent, $Q$
contains a write quorum, then $\wrint$ returns
(line~\ref{alg:line:wrint-ret-a2plus}); otherwise, it proceeds to the next
iteration of the loop.

\subparagraph*{Recovery implementation.}
The recovery procedure executed by replica $p_i$ starts by setting its $\stale$
flag to $\TRUE$ (line~\ref{alg:line:setstale-a2+}), thus preventing $p_i$ from
replying to $\READ$ and $\WRITE(\tsval)$ messages
(lines~\ref{alg:line:notstale-a2+} and~\ref{alg:line:notstale-write}). The
replica then selects a new incarnation number
(line~\ref{alg:line:recoverymyincincrement}), which is required to satisfy
Property~\ref{lem:monotone}, and stores it at a write quorum
(line~\ref{alg:line:cvwrite}; we explain the implementation of $\wrint(\cv)$
later). The replica further invokes $\rdint(\ST)$ to retrieve the states from a
read quorum, including both register states and crash vectors
(line~\ref{alg:line:rdbegin}). Finally, the replica reconstructs its state by
merging the crash vectors it received and picking the register value with the
highest timestamp, and clears the $\stale$ flag
(lines~\ref{alg:line:for-k-max}--\ref{alg:line:clearstale}).

The $\READ(\ST)$ messages sent by the $\rdint$ to retrieve the states
(line~\ref{alg:line:rdbegin}) carry the new incarnation number, and a replica
receiving such a message incorporates the new incarnation number into its crash
vector (line~\ref{alg:line:rdstinc}). Thus, the new incarnation number is
written to a read quorum of replicas, which intersects any write quorum that may
be used by a concurrent invocation of $\wrint(\tsval)$. This ensures that this
invocation only accepts a crash-consistent set of responses
(Definition~\ref{defn:crash-consistent}) and rules out the execution in
Example~\ref{examples:crash-consistency}, as we now explain.
\begin{example}
\label{ex:corr}
Assume that in Figure~\ref{fig:problem},
$p_3$'s incarnation number before the crash is $\incnum_1$ and its new
incarnation number after the recovery is $\incnum_2 > \incnum_1$. Then $p_3$ will send $\incnum_1$
in its $\WRITEACK$ response to $p_4$ (line~\ref{alg:line:sendwrack-tsval}), and 
$\incnum_2$ in its $\READ(\ST(\incnum_2))$ request to
$p_2$ (line~\ref{alg:line:rdbegin}). When $p_2$ responds to $p_3$, 
it will record $\incnum_2$ in its crash-vector entry for $p_3$ 
(line~\ref{alg:line:rdstinc}). Then $p_2$ will piggyback this 
crash vector on
its $\WRITEACK$ response to $p_4$'s $\WRITE$ request 
(line~\ref{alg:line:sendwrack-tsval}). This will cause $p_4$ to
discard $p_3$'s $\WRITEACK$ response, since it carries a smaller incarnation 
number ($\incnum_1$) than $p_3$'s entry in the crash vector received from $p_2$ ($\incnum_2$) 
(line~\ref{alg:line:filter-x}).
\end{example}

Note that in the above example, replica $p_2$ might crash and recover after
responding to $p_3$ but before responding to $p_4$. When recovering, $p_2$ will
reconstruct its crash vector to a value no lower than what it was before the
crash (line~\ref{alg:line:updatecv}), thus giving a correct response to $p_4$.
This is ensured by the fact that each recovering replica, such as $p_3$, writes
its incarnation number crash-consistently to a write quorum
(line~\ref{alg:line:cvwrite}), which intersects with a read quorum that a
replica such as $p_2$ uses to reconstruct its state during recovery
(line~\ref{alg:line:rdbegin}).

Thus, a replica $p_i$ writes its incarnation number in the restart handler in
two places -- first crash-consistently to {\em some write quorum}
(line~\ref{alg:line:cvwrite}), and then to the {\em particular read quorum} used
to reconstruct its state (line~\ref{alg:line:rdbegin}). These serve
complementary purposes: the former ensures that other replicas can reconstruct
the incarnation number of $p_i$ when they restart; the latter ensures that other
replicas can detect when the crash-consistency of their writes can be
compromised by a concurrent restart of $p_i$.

\subparagraph*{$\wrint(\cv)$ implementation.}  We now describe the $\wrint$
implementation for incarnation numbers (line~\ref{alg:line:cvwrite}), which
happens to be subtle. We could implement $\wrint(\cv)$ in the exact same way as
$\wrint(\tsval)$, by using crash vectors piggybacked on $\WRITEACK$ messages to
check for crash-consistency of a write; this was the approach taken
in~\cite{michael2017recovering}. Unfortunately, the resulting algorithm would
not be live cases where liveness must be ensured to match the lower bound of
Theorem~\ref{thm:lower-for-dynamic}.
\begin{example}
  \label{ex:live}
  Let $n = 5$, $\ndown = 1$ and $\ncrash = 3$, so that $n \ge \ndown + \ncrash + 1$. Consider
  an execution of the above version of the algorithm where $p_1$ crashes
  permanently, $p_2$ and $p_3$ crash and restart once at the beginning of the
  execution, and the remaining replicas never crash.
  This execution is valid under 
  $\Fd{1, 3, \_}$ (case~\ref{ass:stat} in Definition~\ref{ass:what}). But $p_2$ and $p_3$
  would not be able to complete the recovery in it, because their calls to 
  $\wrint(\cv)$ would wait for $q_w = n - \ndown = 4$ responses
  (line~\ref{alg:line:wrint-wait}). However, only active replicas would respond to
  $\WRITE$ messages (line~\ref{alg:line:notstale-write}), and there are only $2$
  such replicas.
\end{example}

To rule out such executions, in our algorithm a replica accepts a $\WRITE$
message for a write to $\State.\cvvar$ even if its $\stale$ flag is set
(line~\ref{line:write-2-handler}). In this case the replica cannot piggyback its crash
vector on the $\WRITEACK$ response (as in line~\ref{alg:line:sendwrack-tsval}):
the replica may have lost the crash vector upon a restart and has not yet
reconstructed it (line~\ref{alg:line:updatecv}). Hence, the replica puts a dummy
value $\bot$ in place of a crash vector (line~\ref{alg:line:sendwrack-a2+}). The
$\wrint$ function then checks crash-consistency for writes of incarnation
numbers differently from register writes: after it receives enough $\WRITEACK$
responses, it calls $\rdint(\cv)$ to read the crash vectors explicitly
(line~\ref{alg:line:rdcv1}). Replicas only reply to $\READ$ requests sent by
this function when their $\stale$ flag is cleared
(line~\ref{alg:line:notstale-a2+}), and thus they have valid crash vectors in their state.

In the absence of further crashes in Example~\ref{ex:live}, $p_2$ and
$p_3$ can complete the recovery 
as follows: they first collect $\WRITEACK$s from $q_w = n - \ndown = 4$ replicas
that are up ($p_2, p_3, p_4, p_5$) and then read crash vectors from
$q_r = \ndown + 1 = 2$ replicas that are active ($p_4, p_5$).
We can show that $\algname$ is correct assuming Incarnation Number Monotonicity.
\begin{theorem}
  For all $\ndown$, $\ncrash$, if incarnation numbers assigned at 
  line~\ref{alg:line:recoverymyincincrement} satisfy
  Property~\ref{lem:monotone}, then the algorithm in Figure~\ref{fig:pseudo-dyn}
  excluding the highlighted lines is an implementation
  of an atomic MWMR register that is
  \begin{enumerate}
  \item[\emph{(\textsf{A})}] always safe, and
  \item[\emph{(\textsf{B})}] $\Fd{\ndown, \ncrash, M}\text{-live if}\ n\ge d+c+1\ \text{and}\ M \ge 6$.
\end{enumerate}
\label{lem:dynamic-upper-ass}
\end{theorem}

We defer the proof of the theorem to \tr{\ref{app:dynamic}}{B}.
As in~\S\ref{sec:upper}, we show the safety of the algorithm by first
establishing the Real-Time Property (Property~\ref{lem:rtp}).
We next explain how to ensure Incarnation Number Monotonicity.


\subsection{Implementing Incarnation Numbers without Non-Volatile Storage}
\label{sec:ccq-inc}

Since in the presence of rollbacks the replicas cannot rely on non-volatile
storage to implement monotonically growing incarnation numbers, $\algname$ relies on a
distributed mechanism for this purpose, shown in blue in
Figure~\ref{fig:pseudo-dyn}. This mechanism extends the one used to store
incarnation numbers in $\State.\cvvar$ (line~\ref{alg:line:cvwrite}) described
in the previous section. Note that a replica cannot restore its previous
incarnation number by just reading it using $\rdint(\cv)$: if the replica
crashed before completing the $\wrint(\cv)$ at line~\ref{alg:line:cvwrite}, the
$\rdint$ is not guaranteed to restore its latest pre-crash incarnation number,
thus violating Property~\ref{lem:monotone}.

To address this problem, we adopt a two-phase approach. In addition to $\cvvar$,
the state of each replica includes a vector $\precvvar$. Before assigning its new
incarnation number to $\State.\cvvar[i]$ (line~\ref{alg:line:recoverymyincincrement}), a
replica $p_i$ first writes it to $\State.\precvvar[i]$ at a
crash-consistent write quorum of replicas
(line~\ref{alg:line:recoverymyincwrite}).  Upon a restart, a replica $p_i$
restores its previous incarnation number as a maximum of responses returned by
$\rdint(\precv(i))$ and computes the new incarnation number by incrementing the
result (lines~\ref{alg:line:recoverymyincread}--\ref{alg:line:choosenewinc}).
Then the intersection between the quorums used by $\wrint(\precv)$
(line~\ref{alg:line:recoverymyincwrite}) and $\rdint(\precv)$
(line~\ref{alg:line:recoverymyincread}) helps ensure
Property~\ref{lem:monotone}.


There is, however, a subtlety. Recall that in our algorithm a replica has to
reply to a $\WRITE$ message for a write to $\State.\cvvar$ even while it
restarting (line~\ref{line:write-2-handler}), and the same should hold for
writes to $\State.\precvvar$: this is necessary to avoid deadlocks like the one
in Example~\ref{ex:live}. When incarnation numbers are computed as described
above, a restarting replica thus needs to reply to a $\WRITE$ even before it
computed its incarnation number at
line~\ref{alg:line:recoverymyincincrement}. This poses a dilemma: which
incarnation number should the replica include into its $\WRITEACK$ message
(line~\ref{alg:line:sendwrack-tsval}) to allow other replicas to validate the
crash-consistency of their writes to $\State.\cvvar$ and $\State.\precvvar$?
To address this challenge, when a replica $p_j$ executing $\wrint(\cv)$ or
$\wrint(\precv)$ sends a $\WRITE$ message to another replica $p_i$, it
piggybacks the maximum incarnation number of $p_i$ known to it on this message
(line~\ref{alg:line:sendwrite}). A replica $p_i$ receiving a $\WRITE$ request
for $\State.\precvvar[i]$ or $\State.\cvvar[i]$ then adopts this incarnation
number (line~\ref{alg:line:incmyinc}) and uses it in its $\WRITEACK$ reply.

In \tr{\ref{app:dynamic}}{B} we prove that, even though the above
mechanism may require processes to {\em temporarily} adopt stale incarnation
numbers, it does ensure Incarnation Number Monotonicity
(Property~\ref{lem:monotone}). Thus, given Theorem~\ref{lem:dynamic-upper-ass}($\mathsf{A}$),
the overall $\algname$ algorithm is safe; we also prove its liveness in~\tr{\ref{app:dynamic}}{B}. 
Here we illustrate the operation of the incarnation number implementation on an example.

\begin{example}
\label{ex:precv-live}
Let $n = 5$, $\ndown = 1$ and $\ncrash = 3$, so that
$n \ge \ndown + \ncrash + 1$. Consider an execution where $p_1$ crashes
permanently, $p_2$ and $p_3$ are eventually up, and the remaining replicas never
crash. This execution is valid under $\Fd{1, 3, \_}$ (case~\ref{ass:stat} in
Definition~\ref{ass:what}). At the beginning of the execution, $p_2$ and $p_3$
crash, restart, and become active again, so that $p_4$ and $p_5$ have
$\State.\cvvar[2] = \State.\cvvar[3] = 1$. Suppose now that $p_2$ and $p_3$
crash and restart again. 

To recover, each of $p_2$ and $p_3$ determines its previous incarnation number
$1$ (line~\ref{alg:line:recoverymyincread}) and computes a new incarnation number
$2$, which it tries to store crash-consistently in $\State.\precvvar$ at a write
quorum (line~\ref{alg:line:recoverymyincwrite}). Since $p_2$ does not have any
information about the incarnation number of $p_3$, its first $\WRITE$ message to
$p_3$ carries $0$ as the incarnation number (line~\ref{alg:line:sendwrite}),
which $p_3$ includes into its $\WRITEACK$ reply. The $\rdint(\cv)$ that $p_2$
invokes after this (line~\ref{alg:line:rdcv1}) fetches $1$ for
$\State.\cvvar[3]$ from a read quorum $\{p_4, p_5\}$, so the $\WRITEACK$ from
$p_3$ is discarded (line~\ref{alg:line:filter-x}). The same happens at $p_2$
with the $\WRITEACK$ it receives from $p_3$.

Each of $p_2$ and $p_3$ then includes $1$ into the next $\WRITE$ message it
sends to the other replica (line~\ref{alg:line:sendwrite}), which the latter
adopts as its incarnation number and includes into its $\WRITEACK$ reply
(line~\ref{alg:line:incmyinc}). Once $p_2$ receives the $\WRITEACK$ from $p_3$,
it calls $\rdint(\cv)$ once more (line~\ref{alg:line:rdcv1}) and validates the
$\WRITEACK$ as crash-consistent, finishing the execution of
$\wrint(\precv(2, 2))$. Replica $p_2$ then sets $\State.\cvvar[2]$ to its new
incarnation number $2$ (line~\ref{alg:line:recoverymyincincrement}) and
completes the rest of the recovery, including $\wrint(\cv(2, 2))$.

Assume that now $p_3$ receives the $\WRITEACK$ from $p_2$ and calls
$\rdint(\cv)$ to validate it (line~\ref{alg:line:rdcv1}). This read yields $2$
for $\State.\cvvar[2]$, so $p_3$ discards the $\WRITEACK$ and sends another
$\WRITE$ to $p_2$ with incarnation number $2$
(line~\ref{alg:line:sendwrite}). Now $p_2$ responds with a $\WRITEACK$ carrying
its incarnation number $2$, which it also stores locally. When $p_3$ receives
this $\WRITEACK$, it performs another $\rdint(\cv)$, validates the $\WRITEACK$
as crash-consistent and completes $\wrint(\precv(3, 2))$.  It then sets
$\State.\cvvar[3]$ to its new incarnation number $2$
(line~\ref{alg:line:recoverymyincincrement}) and completes the rest of the
recovery. In the end, both $p_2$ and $p_3$ successfully recover with a higher
incarnation number, satisfying Property~\ref{lem:monotone}.
\end{example}

\section{Related Work}
\label{sec:related}

The classical crash-recovery failure model assumed that every process is
equipped with durable storage that cannot be rolled
back~\cite{hurfin1998consensus}. This assumption was lifted by Aguilera et
al.~\cite{aguilera2000failure}, who analyzed consensus solvability as a function
of the number of processes with and without durable storage.
The same model was also used by Guerraoui et
al.~\cite{DBLP:journals/talg/GuerraouiLPP08} to derive tight bounds for a
reliable register construction.  In these models the processes are assumed to
know whether they are equipped with durable storage, which they can trust to be
incorruptible. In contrast, in our $\CRR$ model a recovering process cannot
trust its non-volatile storage to be up-to-date, leading to different resilience
bounds. Furthermore, in the register implementation of Guerraoui et al. a
restarted replica is considered up-to-date once it accepts a single $\WRITE$
request. As we show in~\tr{\ref{app:rachid}}{C}, this leads to a safety
violation.

Dinis et al.~\cite{DBLP:conf/ndss/DinisD023} proposed a rollback-recovery (RR)
model to capture rollback attacks in TEE-based systems. This model is weaker
than $\CRR$ as it disallows permanent process crashes. Furthermore, the register
implementation of Dinis et al. uses ordinary quorums for writes, without
crash-consistency checks. As we show in~\tr{\ref{app:rr}}{D}, this leads to a
safety violation similar to the one in
Example~\ref{examples:crash-consistency}. The implementation also does not
guarantee wait-freedom for reads. In contrast, our $\CRR$ framework captures the
full spectrum of failures in TEE-based systems and enables developing correct
solutions for these environments.

The crash-recovery failure model where the processes do not have access to
incorruptible durable storage was assumed by Chandra et
al.~\cite{chandra2007paxos}, Ko{\'n}czak et al.~\cite{konczak2011jpaxos}, and Liskov
et al.~\cite{liskov2012viewstamped} in the context of their efforts to develop
practical variants of Paxos~\cite{paxos} and viewstamped
replication~\cite{oki1988viewstamped}.  The proposed solutions, however, did not
use crash-consistent quorums for storing their state, and as a result, were
shown in~\cite{michael2017recovering} to violate safety.

Jehl et al.~\cite{replacement} proposed a versioning scheme similar to crash
vectors that allows quickly replacing a failed replica in Paxos.
This technique was subsequently generalized to the notion of crash-consistent
quorums by Michael et al.~\cite{michael2017recovering}, who also demonstrated
how it can be used to implement a recoverable atomic register.
However, the safety of these versioning schemes critically depends on the
replicas' ability to track their incarnations across restarts. In turn,
supporting this capability in TEE-based systems requires hardware-based
persistent counters. These counters are implemented using flash memory,
resulting in poor write performance and
quick wear-out~\cite{203712}.  Furthermore, they are not universally supported
by TEE manufacturers, and have been recently deprecated by Intel
SGX~\cite{intel2023monotonic}.

Although the technical report version~\cite{michael2017recovering-tr} of the
work by Michael et al. suggests that such incarnation numbers can be supported
by means of a distributed mechanism, it does not provide a full
implementation. In addition, while the recoverable register implementation
of~\cite{michael2017recovering,michael2017recovering-tr} is always safe, its
liveness under the static failure models requires $n=2 \ncrash+1$, which is
strictly worse than our bound of $n \ge \ndown + \ncrash + 1$.

Persistent monotonic hardware counters have been demonstrated
in~\cite{strackx2016ariadne} to be a powerful mechanism to guard against
rollback attacks. In TEE-based systems they can be used together with sealing
(\S\ref{sec:intro}) to validate state freshness upon
restart~\cite{costan2016intel}. Unfortunately, these solutions inherit the
drawbacks of persistent counters we explained above.

ROTE~\cite{203712} guards against rollbacks using a rollback-resistant
distributed counter.
Its implementation relies on a two-round protocol where a
new value of the counter is first written to a write quorum, and then read back
from the \emph{same} write quorum to validate that it was stored reliably.
Although this solution is safe, it does not ensure liveness, as the same write
quorum cannot be guaranteed to be available two times in a row under failures
and asynchrony.
Engraft~\cite{DBLP:conf/ccs/0005DNRZ22} uses a similar protocol, so is subject
to the same liveness issue.

\bibliographystyle{plain}
\bibliography{refs}

\iflong

\appendix
\clearpage
\section{Proofs of Results for Static Failure Models}
\label{app:ktol}

\subsection{Lower Bounds}
\label{sec:lower}

We first prove a stronger version of the lower bound part of
Theorem~\ref{thm:main-tight}.
\begin{theorem}
  Assume that all replicas have access to non-volatile storage. Then for all
  $k$, $r$, $b$, there exists a $\FCRR{k,r,b}$-tolerant implementation of an
  obstruction-free safe SWSR register only if $n \ge 2k + \min(b, r) + 1$.
\label{thm:main-lower}
\end{theorem}
\begin{proof}
	We prove why no such implementation exists when $n = 2k + g 
\le 2k + \min(b, r) < 2k + \min(b, r) + 1$ since 
tolerating $k$ permanent replica crashes requires $n \ge 2k + 1$ for any 
implementation of a safe obstruction-free register~\cite{LynchBook}.

Let $\allreps = \{p_1, p_2, \ldots, p_n\}$ be the set of 
all replicas, and define $F_1 = \{p_1, p_2, \ldots, p_k\}$, $F_2 = \{p_{k+1}, 
p_{k + 2}, \ldots, p_{2k}\}$, and $G = \{p_{2k 
	+ 1}, 
p_{2k + 2}, \ldots, p_{2k+ g}\}$. Now assume the contrary to 
the 
theorem statement, i.e., 
there exists an algorithm $\alg{}{}$ that implements a $\FCRR{k,r,b}$-tolerant 
safe obstruction-free SWSR register with $n = 
2k + g$.

First define $\world_1'$ to be an execution where the only crash and 
invocation events happen before all other events and are as follows: replicas 
in $F_1$ crash 
permanently, and then $W = 
\writ(v)$ is invoked. Since $|F_1| 
\le k$, $\world_1'$ is $\FCRR{k,r,b}\text{-valid}$ and $\alg{}{}$ is 
obstruction-free in $\world_1'$, and thus $W$ 
completes.
Now consider execution $\world_1$ to be the finite execution obtained by 
removing 
permanent replica 
crashes from the smallest prefix of $\world_1'$ where $W$ completes. Due to 
asynchronicity, 
receive actions happening at replicas in $F_1$ can be arbitrarily delayed, so 
$\world_1$ 
is also valid 
despite no receive actions at any replica in $F_1$ or messages from $F_1$ being 
delivered.

Now define $\world_2'$ to be an execution where the only crash and 
invocation events happen before all other events and are as follows: replicas 
in $F_2$ crash permanently 
and replicas in 
$G$ crash and restart, then $R = \rd()$ is invoked. 
Since $|F_2| \le k$ and $|G| \le min(b, r) \le b$, $\world_2'$ is 
$\FCRR{k,r,b}\text{-valid}$ and $\alg{}{}$ is obstruction-free in $\world_2'$, 
and 
thus $R$ 
completes. Additionally, $\alg{}{}$ must also be safe in $\world_2'$ and no 
$\writ()$ has been invoked, so $R$ must return the initial value.
Now consider execution $\world_2$ to be the finite execution obtained by 
removing 
permanent replica 
crashes from the smallest prefix of $\world_2'$ where $R$ completes. Due to 
asynchronicity, message receive actions can be 
arbitrarily delayed, so $\world_2$ is valid despite no receive 
actions at any replica in $F_2$ or messages by $F_2$ being delivered.

Now construct execution $\world_3$ by appending actions of $\world_2$ to 
$\world_1$ and rolling back replicas in $G$ upon every restart 
(Figure~\ref{fig:lower}). If $\world_3$ is 
indeed a 
valid execution, 
then it violates safety as $W$ finishes before $R$, but $R$ returns the initial 
value instead of 
value $v$ written by $W$. $\world_1$ is valid on its own, so we only need to  see 
whether the actions 
of $\world_2$ are valid if they occur at the end of $\world_1$. First, $\world_2$ 
only includes message receive 
actions at replicas $F_1 \cup G$. Second, $F_1$ have no receive actions in 
$\world_1$, so 
they are in the 
initial state as required by $\world_2$. Third, $\world_2$ expects replicas in 
$G$ to be in the 
initial state after restarting, which is the case since replicas in $G$ are rolled 
back upon restart in $\world_3$. Finally, the only crash events are those of 
replicas in $G$ crashing and restarting with rollback, and since $|G| \le \min(r,b) 
\le r$, $\world_3$ is $\FCRR{k,r,b}\text{-valid}$. This results in 
$\world_3$ being a 
$\FCRR{k,r,b}\text{-valid}$ execution that violates the safety of $\alg{}{}$, 
thus 
contradicting 
$\alg{}{}$'s safety, so such an $\alg{}{}$ cannot exist.
\end{proof}

\begin{figure}[h]
	\centering
	\begin{tikzpicture}

		\node[ellipse,
		draw,
		minimum width=1.5cm,
		minimum height=3cm] (F11) at (-2,0) {$F_1$};
		
		\draw [dashed] (-1, -2) -- (-1, 2);
		
		\node[circle,
		draw,
		minimum width=1.5cm] (G1) at (0,0) {$G$};
		
		\node[ellipse,
		draw,
		minimum width=1.5cm,
		minimum height=3cm] (F21) at (2,0) {$F_2$};
		
		\draw [->] (3, 0) -- (4.5, 0);
		\node at (3.75, 0.25) {$G$ restart};
		
		\node[ellipse,
		draw,
		minimum width=1.5cm,
		minimum height=3cm] (F11) at (5.5,0) {$F_1$};
		
		\draw [dashed] (8.5, -2) -- (8.5, 2);
		
		\node[circle,
		draw,
		minimum width=1.5cm] (G1) at (7.5,0) {$G$};
		
		\node[ellipse,
		draw,
		minimum width=1.5cm,
		minimum height=3cm] (F21) at (9.5,0) {$F_2$};
		
		\draw [decorate, decoration = {calligraphic brace, amplitude=5pt, mirror}] 
		(-1, -2) -- (3, -2) node [midway, below=2pt] {$\writ(v)$};
		
		\draw [decorate, decoration = {calligraphic brace, amplitude=5pt, mirror}] 
		(4.5, -2) -- (8.5, -2) node [midway, below=2pt] {$\rd(): v_0$};
		
		\draw [decorate, decoration = {calligraphic brace, amplitude=5pt}] 
		(-1, 2) -- (3, 2) node [midway, above=2pt] {$\world_1'$ without crashes};
		
		\draw [decorate, decoration = {calligraphic brace, amplitude=5pt}] 
		(-3, 2.5) -- (3, 2.5) node [midway, above=2pt] {$\world_1$};
		
		\draw [decorate, decoration = {calligraphic brace, amplitude=5pt}] 
		(3, 2) -- (8.5, 2) node [midway, above=2pt] {$\world_2'$ without crashes};
		
		\draw [decorate, decoration = {calligraphic brace, amplitude=5pt}] 
		(3, 2.5) -- (10.5, 2.5) node [midway, above=2pt] {$\world_2$};
	\end{tikzpicture}
	\vspace{-10pt}
	\caption{Illustration of execution $\world_3$}
	\label{fig:lower}
\end{figure}


Using Theorem~\ref{thm:main-lower}, we next prove a stronger version of the 
lower bound part of Theorem~\ref{thm:non-vol-main}.
\begin{theorem}
\label{thm:nvlb}
Let $s$ be the number of replicas with non-volatile storage. Then for all $k$,
$r$, $b$, there exists a $\FCRR{k,r,b}$-tolerant implementation of an 
obstruction-free safe SWSR register only if
$$
(2k + r + 1 \le n < 2k + b + 1 \wedge s \ge 2k + r + 1) \vee
(n \ge 2k + b + 1).
$$
\end{theorem}
\begin{proof}
	Since Theorem~\ref{thm:main-lower} shows that an algorithm exists only if $n \ge 2k + \min(b, r) + 
	1$ or equivalently when $(n \ge 2k + b + 1) \vee (n \ge 2k + r + 1)$, we only need to show that no 
	such algorithm exists with $s < 2k + r + 1$ when $2k + r + 1 \le n = 2k + g 
	< 2k + b + 1$.
	
	Let $\allreps = \{p_1, p_2, \ldots, p_n\}$ be the set of all replicas, and define $F_1 = \{p_1, p_2, 
	\ldots, p_k\}$, $F_2 = \{p_{k+1}, p_{k + 2}, \ldots, p_{2k}\}$, and $G = \{p_{2k + 1}, p_{2k + 2}, 
	\ldots, p_{2k + g}\}$, such that the number of replicas with access to non-volatile storage in $F_1 
	\cup F_2$ is $\min(s, 2k)$. Now assume the contrary to 
	the 
	Theorem statement, i.e., 
	there exists an algorithm $\alg{}{'}$ that implements a safe obstruction-free SWSR register 
	that is 
	$\FCRR{k,r,b}$-tolerant when $2k + r + 1 \le n = 2k + g < 2k + b + 1$ and $s < 
	2k + r + 1$, or 
	equivalently, $s \le 2k + r$.
	
First define $\world_1'$ to be an execution where the only crash and 
invocation events happen before all other events and are as follows: replicas 
in $F_1$ crash 
permanently, and then $W = 
\writ(v)$ is invoked. Since $|F_1| 
\le k$, $\world_1'$ is $\FCRR{k,r,b}\text{-valid}$ and $\alg{}{'}$ is 
obstruction-free in $\world_1'$, and thus $W$ 
completes.
Now consider execution $\world_1$ to be the finite execution obtained by 
removing 
permanent replica 
crashes from the smallest prefix of $\world_1'$ where $W$ completes. Due to 
asynchronicity, 
receive actions happening at replicas in $F_1$ can be arbitrarily delayed, so 
$\world_1$ 
is also valid 
despite no receive actions at any replica in $F_1$ or messages from $F_1$ being 
delivered.

Now define $\world_2'$ to be an execution where the only crash and 
invocation events happen before all other events and are as follows: replicas 
in $F_2$ crash permanently 
and replicas in 
$G$ crash and restart, then $R = \rd()$ is invoked. 
Since $|F_2| \le k$ and $|G| \le min(b, r) \le b$, $\world_2'$ is 
$\FCRR{k,r,b}\text{-valid}$ and $\alg{}{'}$ is obstruction-free in $\world_2'$, 
and 
thus $R$ 
completes. Additionally, $\alg{}{'}$ must also be safe in $\world_2'$ and no 
$\writ()$ has been invoked, so $R$ must return the initial value.
Now consider execution $\world_2$ to be the finite execution obtained by 
removing 
permanent replica 
crashes from the smallest prefix of $\world_2'$ where $R$ completes. Due to 
asynchronicity, message receive actions can be 
arbitrarily delayed, so $\world_2$ is valid despite no receive 
actions at any replica in $F_2$ or messages by $F_2$ being delivered.
	
	Now construct execution $\world_3$ by appending actions of $\world_2$ to 
	$\world_1$ and rolling back replicas with stable storage in $G$ upon every 
	restart. If 
	$\world_3$ is indeed a valid execution, 
	then it violates safety as $W$ finishes before $R$, but $R$ returns the initial value instead of 
	value $v$ written by $W$. $\world_1$ is valid on its own, so we only need to 
	see whether actions 
	of $\world_2$ are valid if they occur at the end of $\world_1$. First, 
	$\world_2$ only includes message receive
	actions at replicas $F_1 \cup G$. Second, $F_1$ have no receive action in 
	$\world_1$, so they are in the 
	initial state as required by $\world_2$. Third, $\world_2$ expects replicas 
	in $G$ to be in the initial state after restarting, which is the case since 
	replicas in $G$ roll back upon restart if they have non-volatile 
	storage. Finally, since there are at most 
	$s - \min(s, 2k) \le r$ 
	replicas with non-volatile storage in $G$, at most $r$ replicas are rolled
	back in $\world_3$. As a result of no other crash taking place in 
	$\world_3$, it is $\FCRR{k,r,b}\text{-valid}$. This results in 
	$\world_3$ being a 
	$\FCRR{k,r,b}\text{-valid}$ execution that violates the safety of 
	$\alg{}{'}$, 
	thus contradicting $\alg{}{'}$'s 
	safety, so such an $\alg{}{'}$ cannot exist given the conditions on $n$ and $s$.
\end{proof}
\subsection{Proof of Linearizability of Algorithm $\alg{}{}$}
\label{app:safety-high}

We now prove that the algorithm $\alg{}{}$ in Figure~\ref{fig:upper-bound} is a
linearizable register implementation satisfying the conditions of the upper
bound in Theorem~\ref{thm:non-vol-main}.
\begin{theorem}
  \label{thm:main-upper}
  The algorithm $\alg{}{}$ in Figure~\ref{fig:upper-bound} is a
  $\FCRR{k,r,b}$-safe implementation of an atomic MWMR register, provided
  $$
  (2k + r + 1 \le n < 2k + b + 1)  \implies  s \ge 2k + r + 1.
  $$
\end{theorem}

To prove the theorem, to each (possibly infinite) execution $\sigma$ of algorithm 
$\alg{}{}$ we
associate:
\begin{itemize}
    \item a set $V(\sigma)$ consisting of all operations in $\sigma$, i.e., 
    $\rd$s and $\writ$s; and
    \item a relation $\RT(\sigma) \subseteq V(\sigma) \times V(\sigma)$, defined 
    as follows:
    for all $o_1,o_2\in V$, $(o_1,o_2) \in \RT(\sigma)$ if and only if $o_1$ 
    completes before $o_2$ is invoked.
\end{itemize}

We denote the $\rd$ operations in $\sigma$ by $\allrdr(\sigma)$ and the $\writ$
operations in $\sigma$ by $\allwrr(\sigma)$. We now define the dependency graph
of an execution as follows inspired by the framework introduced by 
Adya~\cite{adya}.
    \begin{definition}
    \label{defn:dep-graph}
    Let $\sigma$ be an execution. A \emph{dependency graph} of $\sigma$ is a 
    tuple $G = (V(\sigma), \vis, \RT(\sigma), \WR, \WW, \RW)$, where the 
    \emph{visibility} predicate $\vis$ over $V(\sigma)$ and the 
    relations $\WR, \WW, \RW \subseteq V(\sigma) \times V(\sigma)$ are such that:

\begin{enumerate}
	\item $\vis$ holds for all complete operations;
    \item $(i)$ if $(\wrr, \rdr) \in \WR$, then $\wrr \in \allwrr(\sigma)$, 
    $\vis(\wrr)$, $\rdr \in \allrdr(\sigma)$, $\rdr$ completes and $\valu(\wrr) = 
    \valu(\rdr)$;\\
    $(ii)$ for all $\wrr_1, \wrr_2, \rdr \in V(\sigma)$ such that $(\wrr_1, \rdr) 
    \in \WR$ and $(\wrr_2, \rdr) \in \WR$, we have $\wrr_1 = \wrr_2$;\\
    $(iii)$ if $\rdr \in \allrdr(\sigma)$ is complete and there is no $\wrr \in 
    \allwrr(\sigma)$ such that $(\wrr, \rdr) \in 
    \WR$, then $\rdr$ returns the initial value $v_0$; and\\
    $(iv)$ if $\wrr \in \allwrr(\sigma)$ is incomplete and there is no $\rdr \in 
    \allrdr(\sigma)$ such that $(\wrr, \rdr) \in \WR$, then $\neg\vis(\wrr)$;
    \item $\WW$ is a total order over $\{o \in \allwrr\ |\ \vis(o)\}$; and,
    \item $\RW=\{(\rdr,\wrr)\ |\ \exists {\wrr}'.\ ({\wrr}',\rdr)\in\WR \land 
    ({\wrr}',\wrr)\in\WW\} 
    \cup\\
    \phantom\ \ \ \ \ \ \ \{(\rdr,\wrr)\ |\ \rdr\in \allrdr(\sigma)\land
    \wrr\in 
    \allwrr(\sigma)\land \vis(\wrr) \land \neg\exists {\wrr}'.\ 
    ({\wrr}',\rdr)\in\WR\}$.
\end{enumerate}

A dependency graph $G = (V(\sigma), \vis, \RT(\sigma), \WR, \WW, \RW)$ has 
vertices $V$ and directed edges $\RT \cup \WR \cup \WW \cup \RW$.

\end{definition}

To prove 
linearizability we rely on the following theorem.
\begin{theorem}
    \label{thm:linearisability}
    An execution $\sigma$ is linearizable if
    there 
    exist $\vis$, $\WR$, $\WW$, and $\RW$ such that $G = (V(\sigma), \vis, 
    \RT(\sigma), \WR, \WW, \RW)$ is an acyclic dependency graph.
\end{theorem}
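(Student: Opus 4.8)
The plan is to convert acyclicity of the dependency graph into an explicit linearization and then verify the two defining conditions of linearizability. First I would prune $V(\sigma)$ to the set $V'$ of operations that will appear in the sequential history: discard every $o$ with $\neg\vis(o)$, and additionally discard every \emph{incomplete} $\rd$, since an incomplete read returns no value and so cannot occupy a position in a sequential history. Since $\vis$ holds for all complete operations, $V'$ contains every complete operation; and by condition $(iv)$ every incomplete $\writ$ remaining in $V'$ is read from by some complete $\rd$, so ``completing'' it in the history (pretending it returned its acknowledgment) is legitimate. The subgraph $G'$ of $G$ induced by $V'$ is acyclic, hence admits a linear extension $L$ (for infinite $\sigma$ this follows from the standard finite-prefix argument, using that only finitely many operations can complete before any fixed invocation time as time is discrete). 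I claim that $L$, read as a sequential history over $V'$, witnesses linearizability of $\sigma$.

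Two properties must be checked. Real-time preservation is immediate: if $o_1$ completes before $o_2$ is invoked then $(o_1,o_2)\in\RT$, which is an edge of $G'$ whenever $o_1,o_2\in V'$, so $o_1$ precedes $o_2$ in $L$; discarded operations impose no real-time constraints. For legality of $L$ as a read/write-register history it suffices to show that every complete $\rd$ operation $\rdr\in V'$ returns the value of the $\writ$ immediately preceding it in $L$, or $v_0$ if there is none, since $\writ$ operations always merely return an acknowledgment. Fix such an $\rdr$. If there is a $\wrr$ with $(\wrr,\rdr)\in\WR$, it is unique by $(ii)$, lies in $V'$, precedes $\rdr$ in $L$ via the $\WR$ edge, and $\valu(\rdr)=\valu(\wrr)$ by $(i)$. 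Suppose some visible write $\wrr'\neq\wrr$ had $\wrr\prec_L\wrr'\prec_L\rdr$. By totality of $\WW$ on visible writes, $(\wrr,\wrr')\in\WW$ or $(\wrr',\wrr)\in\WW$; the latter is a $G'$ edge $\wrr'\to\wrr$ contradicting $\wrr\prec_L\wrr'$, so $(\wrr,\wrr')\in\WW$, and then the first clause of $\RW$ (witnessed by $\wrr$) gives $(\rdr,\wrr')\in\RW$, a $G'$ edge $\rdr\to\wrr'$ contradicting $\wrr'\prec_L\rdr$. Hence $\wrr$ is the $L$-immediate predecessor of $\rdr$ among writes, and $\rdr$ returns the correct value. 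Otherwise no write reads into $\rdr$; then $\rdr$ returns $v_0$ by $(iii)$, and the second clause of $\RW$ yields $(\rdr,\wrr')\in\RW$ for every visible write $\wrr'$, so no visible write precedes $\rdr$ in $L$, again consistent. Thus $L$ is a legal sequential history.

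Combining the two checks, $L$ is a legal sequential register history over a set containing all complete operations of $\sigma$ (together with some pending writes, completed), consistent with $\sigma$'s real-time order; this is exactly what it means for $\sigma$ to be linearizable, which is the content of Theorem~\ref{thm:linearisability}. The step I expect to be the main obstacle is not the topological sort, which is routine once acyclicity is available, but the bookkeeping around incomplete operations together with the precise interplay of conditions $(ii)$, $(iii)$, $(iv)$ and the two clauses of $\RW$: these must be orchestrated so that each complete read's return value is pinned down exactly by its position in $L$ — neither under-constrained (some write could sneak in between) nor over-constrained.
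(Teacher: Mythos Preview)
Your proposal is correct and follows essentially the same approach as the paper: prune to visible operations minus incomplete reads, take a topological sort of the induced acyclic subgraph, and verify real-time consistency and register semantics via the $\WR$/$\WW$/$\RW$ edges. The only organizational difference is that the paper argues by contradiction from ``the last write before $\rdr$ in $L$ has the wrong value'' whereas you argue directly that the $\WR$-source of $\rdr$ is its immediate write predecessor in $L$; the paper also spells out the finite-position argument a bit more explicitly (using that only finitely many operations can be \emph{invoked} before a given return time, not merely completed), but your parenthetical captures the same idea.
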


\begin{proof}
	Define $\allrdr'(\sigma)$ to be set of incomplete $\rd$s in $\sigma$, and 
	let 
	$V'(\sigma) = \{o \in V\ |\ \vis(o)\} \setminus \allrdr'(\sigma)$. Since 
	$G$ 
	is acyclic, then the subgraph $G'$ of $G$ induced by vertices $V'(\sigma)$ 
	is also acyclic. Thus, the partial order over $V'$ induced by $G'$ can be 
	extended to a total order $\prec$
	such that for every edge $(o_1, o_2) \in G'$, we have $o_1 \prec o_2$. We 
	now argue that 
	the history $\mathcal{H}$ obtained by sequentially ordering operations in 
	$V'(\sigma)$ according to $\prec$ and completing incomplete writes is a 
	linearization of $\sigma$. We first prove that $\mathcal{H}$ is well-formed:
	\begin{itemize}
		\item All complete operations are visible, so $\mathcal{H}$ contains all 
		complete operations.
		\item If $(o_1, o_2) \in 
	\RT$, then $o_1 \prec o_2$, so $\mathcal{H}$ respects the real-time order. 
	
	\item $\mathcal{H}$ is well-formed in that only finitely many operations 
	can precede any complete operation $o$: since $o$ completes in $\sigma$, 
	there exists some state $s$ in the execution where $o$ has returned. 	
	Exactly one action takes place between two consecutive states of the 
	execution, so there can only be finitely many operations that 
	start before $o$ returns. We denote the set of these operations as
	$\mathcal{S}$. Every operation in $V'(\sigma) \setminus \mathcal{S}$ starts 
	after $o$ returns in $\sigma$, so $o$ precedes it in real time. Hence, only 
	operations in $\mathcal{S}$ can precede $o$ in $\prec$ and therefore in
	$\mathcal{H}$.
	
	\item The same holds for incomplete operations. The definition of 
	$V'(\sigma)$ implies that an
	incomplete operation can only be a visible write. By the definition of 
	$\WR$, for any visible write $\wrr$, there exists a complete read $\rdr$ 
	such that $(\wrr, \rdr) \in \WR$. It follows that
	$\wrr \prec \rdr$, so $\wrr$ precedes $\rdr$ in $\mathcal{H}$. Since $\rdr$ 
	has only finitely many predecessors in $\mathcal{H}$, the same
	holds for $\wrr$.
      \end{itemize}
      
	Finally, we show that $\mathcal{H}$ abides by register semantics. That is, 
	for any read operation $\rdr$
	in $\mathcal{H}$, either there exists a latest write operation $\wrr$ that 
	precedes $\rdr$ in $\mathcal{H}$
	and $\valu(\wrr) = \valu(\rdr)$, or
	no such $\wrr$ exists and $\valu(\rdr) = v_0$.
	
	Assume by contradiction that $\wrr$ exists, but $\valu(\wrr) 
	\neq \valu(\rdr)$. There are two cases:
	\begin{itemize}
		\item If $\valu(\rdr) = v_0$, then since no $\wrr' \in 
		\allwrr(\sigma)$ has $\valu(\wrr) = v_0$, there is no $\wrr' 
		\in \allwrr(\sigma)$ such that $(\wrr', \rdr) \in \WR$. According to the 
		definition of $\RW$, we must have $(\rdr, \wrr') \in \RW$ for all 
		visible $\wrr'$. Then due to the definition of 
		$\prec$, we must have $\rdr \prec \wrr$. But this contradicts the 
		assumption that $\wrr$ 
		comes
		before $\rdr$ in $\mathcal{H}$.
		\item If $\valu(\rdr) \neq v_0$, then according to the definition of 
		$\WR$ 
		there must exist $\wrr' \in \allwrr(\sigma)$ such that $\vis(\wrr')$,
		$\valu(\wrr') = \valu(\rdr)$, and $(\wrr', \rdr) \in \WR$, which results 
		in $\wrr' \prec \rdr$. If we have $(\wrr, \wrr') \in \WW$, then $\wrr 
		\prec \wrr' \prec \rdr$, which contradicts $\wrr$ being the last 
		$\writ$ 
		that comes before $\rdr$ in $\mathcal{H}$. If $(\wrr', \wrr) \in \WW$, then according to 
		$\RW$ we must have $(\rdr, \wrr) \in \RW$, so $\rdr \prec 
		\wrr$, which again contradicts $\wrr$ preceding $\rdr$ in $\mathcal{H}$.
	\end{itemize}
	Now assume that no such $\wrr$ exists. If $\valu(\rdr) \neq v_0$, then 
	according to the definition of 
	$\WR$, there must exist $\wrr' \in \allwrr(\sigma)$ such that $\vis(\wrr')$ 
	and $(\wrr', \rdr) \in \WR$. Only finitely 
	many operations precede $\rdr$ in $\mathcal{H}$ and at least one $\writ$ 
	precedes $\rdr$, so there must exist a latest $\writ$ that precedes $\rdr$ in 
	$\mathcal{H}$. This contradicts our assumption of $\wrr$'s non-existence, 
	so we must have 
	$\valu(\rdr) = v_0$.
	
\end{proof}

We now prove that every execution of the algorithm
is linearizable. Fix one such execution $\sigma$.
Our strategy is to find witnesses for $\vis$, $\WR$, $\WW$ and $\RW$ that 
validate the conditions of Theorem~\ref{thm:linearisability}.
To this end, consider the function $\tau : 
\sigma \rightarrow
(\mathbb{N}\cup\{\infty\})\times\mathbb{N}$
that maps each operation in
$\sigma$
to a timestamp as 
follows:

\begin{itemize}
	\item if $o \in \allrdr(\sigma)$ and $o$ executes 
	line~\ref{alg:read:selectmax}, then $\tau(o)$ equals $\tmpts$ as defined at 
	line~\ref{alg:read:selectmax};
	\item if $o \in \allwrr(\sigma)$ and $o$ executes 
	line~\ref{alg:line:writeinctag}, 
	then $\tau(o)$ equals $\tmpts$ as defined at 	
	line~\ref{alg:line:writeinctag};
	\item otherwise $\tau(o)$ equals $(\infty, i)$ where $p_i$ is the process 
	that executes $o$
\end{itemize}

We then define the required witnesses as follows:
\begin{itemize}
	\item $\vis(o)$ is true if $o$ is complete or if $o \in 
	\allwrr(\sigma)$ and exists a complete read $\rdr$ such that $\tau(o) = 
	\tau(\rdr)$;
    \item $(\wrr,\rdr)\in\WR$ if and only if $\wrr \in \allwrr(\sigma)$, 
    $\vis(\wrr)$, $\rdr \in \allrdr(\sigma)$, $\rdr$ is complete, and 
    $\tau(\wrr)=\tau(\rdr)$;
    \item $(\wrr,{\wrr}')\in\WW$ if and only if $\wrr,{\wrr}' \in     
    \allwrr(\sigma)$, $\vis(\wrr) \wedge \vis({\wrr}')$, and 
    $\tau(\wrr)<\tau({\wrr}')$; and
    \item $\RW$ is derived from $\WR$ and $\WW$ as per the dependency graph 
    definition.
\end{itemize}


Our proof relies on the following proposition, whose easy proof we omit.
\begin{proposition}
  \label{abd:tags_prop}
  \
    \begin{enumerate}
        \item For every $\wrr_1,\wrr_2\in \allwrr(\sigma)$, 
        $\tau(\wrr_1)=\tau(\wrr_2)$ 
        implies $\wrr_1=\wrr_2$.\label{abd:tags_prop2}
        \item For every $\wrr\in \allwrr(\sigma)$, 
        $\tau(\wrr)>(0,0)$.\label{abd:tags_prop:3}
        
        \item For every $\rdr\in \allrdr(\sigma)$ such that $\rdr$ is complete, 
        either $\tau(\rdr)=(0,0)$ or there exists $\wrr\in \allwrr(\sigma)$ such 
        that $\vis(\wrr)$ and $\tau(\rdr)=\tau(\wrr)$.\label{abd:tags_prop:1}
        \item \label{abd:tags_prop:4} For every $\rdr\in \allrdr(\sigma)$ and $\wrr\in \allwrr(\sigma)$,
            $\tau(\rdr)=\tau(\wrr)$ implies $\opval(\rdr)=\opval(\wrr)$. 
    \end{enumerate}
\end{proposition}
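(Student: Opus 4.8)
The plan is to reduce all four parts to one state invariant. Call a timestamp--value pair $(\tmpts,\tmpval)$ \emph{well-formed} if either $(\tmpts,\tmpval)=((0,0),v_0)$, or there is a $\writ$ operation $\wrr\in\allwrr(\sigma)$ that has already executed line~\ref{alg:line:writeinctag} with $\tau(\wrr)=\tmpts$ and $\opval(\wrr)=\tmpval$. I would show that in every reachable configuration every replica's pair $(\State.\Tsvar,\State.\Valvar)$ is well-formed, and so is the pair carried by every in-flight $\WRITE(\tsval(\_))$ message and every in-flight $\READACK$ message. Once this invariant is available, and once Parts~(\ref{abd:tags_prop2}) and~(\ref{abd:tags_prop:3}) have been proved directly, Parts~(\ref{abd:tags_prop:1}) and~(\ref{abd:tags_prop:4}) follow with little effort.

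Part~(\ref{abd:tags_prop:3}) is immediate from the definition of $\tau$: a write reaching line~\ref{alg:line:writeinctag} sets $\tau(\wrr)=(\cnt+1,i)$ with $\cnt\ge 0$, hence $\tau(\wrr)\ge(1,i)>(0,0)$, and otherwise $\tau(\wrr)=(\infty,i)>(0,0)$. For Part~(\ref{abd:tags_prop2}) I would split on whether the two timestamps are finite. Two finite ones, $(\cnt_1+1,i_1)$ and $(\cnt_2+1,i_2)$, can coincide only if $i_1=i_2$, so $\wrr_1$ and $\wrr_2$ are writes of the same client $p_{i_1}$; that client runs its operations sequentially, and each of its writes starts by calling $\rdint(\TS)$ (line~\ref{alg:line:writereadtag}) after the previous write's $\wrint$ has completed, so by Property~\ref{lem:rtp} the counter picked by the later write strictly exceeds the one picked by the earlier, forcing $\wrr_1=\wrr_2$. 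Two infinite timestamps $(\infty,i)$ come from writes of $p_i$ that never reach line~\ref{alg:line:writeinctag}, but $p_i$ crashes at most once and only permanently, so it has at most one such write; and a finite timestamp never equals an infinite one. This settles Part~(\ref{abd:tags_prop2}) and, combined with well-formedness, makes the write witnessing a well-formed pair with $\tmpts\neq(0,0)$ unique.

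The invariant itself I would prove by induction on the length of the execution. It holds at the start, when every replica holds $((0,0),v_0)$ and no message is in flight. The only step that alters a replica's $(\State.\Tsvar,\State.\Valvar)$ is line~\ref{alg:line:tsvalset-a1}, which copies the pair from an incoming $\WRITE(\tsval(\_))$ message; a $\READACK$ just copies a replica's current pair; so it suffices to check that every $\WRITE(\tsval(\_))$ message is well-formed. Such messages are only emitted inside $\wrint$, and $\wrint(\tsval(\_))$ is called from exactly two places: from $\writ$ on line~\ref{alg:line:recwritewriteint}, with the freshly minted pair $(\tau(\wrr),\opval(\wrr))$ of the enclosing write, which has just executed line~\ref{alg:line:writeinctag}; and from $\rd$ on line~\ref{alg:line:readwriteback}, with a pair returned by $\rdint(\tsval)$ on line~\ref{alg:line:recreadreadint}, i.e., a pair read verbatim from some replica's state through a $\READACK$ and hence well-formed by the induction hypothesis. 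This closes the induction.

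Parts~(\ref{abd:tags_prop:1}) and~(\ref{abd:tags_prop:4}) are then short. A complete $\rdr$ reaches line~\ref{alg:read:selectmax}, so $\tau(\rdr)$ is the largest timestamp among the pairs returned by $\rdint(\tsval)$; by the invariant each such pair is $((0,0),v_0)$ or $(\tau(\wrr),\opval(\wrr))$ for some $\wrr\in\allwrr(\sigma)$, so either $\tau(\rdr)=(0,0)$ or $\tau(\rdr)=\tau(\wrr)$, and in the second case $\vis(\wrr)$ holds because $\rdr$ is a complete read with the same timestamp --- this is Part~(\ref{abd:tags_prop:1}). For Part~(\ref{abd:tags_prop:4}), if $\tau(\rdr)=\tau(\wrr)=:\tmpts$ then $\tmpts>(0,0)$ by Part~(\ref{abd:tags_prop:3}), so the pair selected by $\rdr$ on line~\ref{alg:read:selectmax} is a well-formed $(\tmpts,\tmpval)=(\tau(\wrr'),\opval(\wrr'))$ with $\tau(\wrr')=\tmpts=\tau(\wrr)$; by Part~(\ref{abd:tags_prop2}) $\wrr'=\wrr$, so $\tmpval=\opval(\wrr)$, and since $\rdr$ returns $\tmpval$ on line~\ref{alg:line:rdfnret} we get $\opval(\rdr)=\opval(\wrr)$. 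The only delicate step is the inductive proof of the invariant: it needs an exhaustive census of how timestamp--value pairs enter replica states and messages (the two call sites of $\wrint$, the $\WRITE$ and $\READACK$ handlers, and line~\ref{alg:line:tsvalset-a1}), and Part~(\ref{abd:tags_prop2}) --- and through it Property~\ref{lem:rtp} --- is precisely what is needed there to make ``the write with a given timestamp'' well defined.
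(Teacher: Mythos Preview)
The paper omits this proof entirely (``whose easy proof we omit''), so there is no in-paper argument to compare against. Your proof via a well-formedness invariant on timestamp--value pairs is the standard route for ABD-style protocols and is correct; in particular, your use of Property~\ref{lem:rtp} to establish Part~(\ref{abd:tags_prop2}) is legitimate since that property is proved independently of the proposition.

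Two small points worth tightening. First, your inductive proof of the invariant enumerates the steps that create or copy timestamp--value pairs, but does not mention crash/restart/rollback transitions. These do affect $\State$: a restart without non-volatile storage resets $(\State.\Tsvar,\State.\Valvar)$ to $((0,0),v_0)$, and a rollback reverts it to a previously held value. Both cases are harmless---$((0,0),v_0)$ is well-formed by definition, and a rolled-back state was well-formed at the earlier point and remains so because the witnessing write does not disappear---but the case split should be stated to make the induction exhaustive.

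Second, in Part~(\ref{abd:tags_prop:4}) you write ``$\tmpts>(0,0)$ by Part~(\ref{abd:tags_prop:3}), so the pair selected by $\rdr$ on line~\ref{alg:read:selectmax}\ldots'', silently assuming $\rdr$ reached line~\ref{alg:read:selectmax}. The inequality $\tmpts>(0,0)$ alone does not give this, since $(\infty,i)>(0,0)$ too. The gap is easy to close: if $\tau(\rdr)=(\infty,i)$ then $\rdr$ is an incomplete read at $p_i$, and $\tau(\wrr)=(\infty,i)$ forces $\wrr$ to be an incomplete write at the same $p_i$; since a client executes operations sequentially and crashes permanently, it has at most one incomplete operation, so this case is vacuous. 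With that line added, the argument is complete.
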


This proposition suffices to show that the instances of $\vis$, $\WR$, $\WW$, 
and $\RW$ as defined above satisfy the conditions stated in Definition~\ref{defn:dep-graph}. Our 
proof also relies on the following auxiliary lemma:

\begin{lemma}
  \label{abd:tau-non-inc}
  \
    \begin{enumerate}
        \item For all $\rdr,\wrr\in V(\sigma)$, if $(\rdr,\wrr)\in\RW$ then 
        $\tau(\rdr)<\tau(\wrr)$.
        \item For all $o_1,o_2\in V(\sigma)$, if $(o_1,o_2)\in\RT$, then 
        $\tau(o_1)\leq\tau(o_2)$.
            Moreover, if $o_2$ is a $\writ$, then 
            $\tau(o_1)<\tau(o_2)$.\label{abd:tau-non-inc2}
    \end{enumerate}
\end{lemma}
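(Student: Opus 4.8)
The plan is to prove the two parts independently. Part~2 is the substantive one: it is the \emph{only} point at which the correctness of the ABD-style quorum logic enters, through the Real-Time property (Property~\ref{lem:rtp}), which by the results above holds for both $\alg{1}{}$ and $\alg{2}{}$. Part~1 is pure definition-chasing on $\WR$, $\WW$, $\RW$ together with Proposition~\ref{abd:tags_prop}.

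For Part~1, I would fix $(\rdr,\wrr)\in\RW$ and split on which of the two sets in the definition of $\RW$ supplies this pair. If it is the first, there is a write $\wrr'$ with $(\wrr',\rdr)\in\WR$ and $(\wrr',\wrr)\in\WW$; the definition of $\WR$ gives $\tau(\rdr)=\tau(\wrr')$ and the definition of $\WW$ gives $\tau(\wrr')<\tau(\wrr)$, so $\tau(\rdr)<\tau(\wrr)$. If it is the second, then $\vis(\wrr)$ holds, $\rdr$ is a complete read (incomplete reads do not occur as vertices of the graph $G'$ used in Theorem~\ref{thm:linearisability}, so this is the only case we need), and no write is $\WR$-related to $\rdr$. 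By Proposition~\ref{abd:tags_prop}(\ref{abd:tags_prop:1}), either $\tau(\rdr)=(0,0)$, or there is a visible write $\wrr''$ with $\tau(\rdr)=\tau(\wrr'')$; but the latter would force $(\wrr'',\rdr)\in\WR$ by the definition of $\WR$, contradicting the case hypothesis. Hence $\tau(\rdr)=(0,0)$, and since $\tau(\wrr)>(0,0)$ by Proposition~\ref{abd:tags_prop}(\ref{abd:tags_prop:3}), we get $\tau(\rdr)<\tau(\wrr)$.

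For Part~2, fix $(o_1,o_2)\in\RT$, so $o_1$ completes before $o_2$ is invoked. The structural observation driving the argument is twofold: (i) every complete operation $o_1$ ends with a returning call to $\wrint(\tsval(\tau(o_1),\_))$ --- the write itself (line~\ref{alg:line:recwritewriteint}) if $o_1$ is a $\writ$, and the write-back (line~\ref{alg:line:readwriteback}) if $o_1$ is a $\rd$; and (ii) every operation $o_2$ begins, after only a local $\newid()$ step, with a call $rq$ to $\rdint$ that queries $\State.\Tsvar$ --- namely $\rdint(\TS)$ (line~\ref{alg:line:writereadtag}) if $o_2$ is a $\writ$, and $\rdint(\tsval)$ (line~\ref{alg:line:recreadreadint}) if $o_2$ is a $\rd$. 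Consequently $o_1$'s final $\wrint$ returns before $rq$ is invoked. If $rq$ never completes, then $o_2$ reaches neither line~\ref{alg:read:selectmax} nor line~\ref{alg:line:writeinctag}, so $\tau(o_2)=(\infty,i_2)>\tau(o_1)$ because $\tau(o_1)$ is finite for a complete $o_1$; so assume $rq$ completes. Then Property~\ref{lem:rtp}, applied to $o_1$'s final $\wrint$ and to $rq$, yields that $rq$ returns a set containing a timestamp $t^\ast=(c^\ast,j^\ast)\ge\tau(o_1)$. If $o_2$ is a $\rd$, it selects $\tmpts$ as the maximum timestamp in that set (line~\ref{alg:read:selectmax}), so $\tau(o_2)=\tmpts\ge t^\ast\ge\tau(o_1)$. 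If $o_2$ is a $\writ$, it selects $\cnt$ as the maximum counter in the set (line~\ref{alg:line:writeselectmax}); writing $\tau(o_1)=(c_1,i_1)$, we have $c^\ast\ge c_1$, hence $\cnt\ge c_1$ and $\tau(o_2)=(\cnt+1,i_2)$ whose first coordinate exceeds $c_1$, so $\tau(o_2)>\tau(o_1)$. This gives $\tau(o_1)\le\tau(o_2)$ in general and the strict inequality whenever $o_2$ is a $\writ$.

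The main obstacle is Part~2; everything else is bookkeeping around it. The delicate points are: confirming that the initial $\rdint$ of \emph{both} reads and writes really does query $\State.\Tsvar$ so that Property~\ref{lem:rtp} (phrased for $\rdint$ reading $\State.\Tsvar$) applies --- for a read this is the $\State.\Tsvar$-component of the pair returned by $\rdint(\tsval)$, to which the proof of Property~\ref{lem:rtp} transfers verbatim; checking that the write-back of a complete read genuinely returns, so that it serves as the ``$\wrq$'' of Property~\ref{lem:rtp} (this holds even when its timestamp equals $(0,0)$, in which case the property is vacuous but the $\cnt+1$ bound already forces strictness for a following $\writ$); and handling incomplete $o_2$ via the $(\infty,\cdot)$ convention. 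In Part~1 the only subtlety is the restriction to complete reads noted above, which is harmless because incomplete reads never appear in the subgraph on which acyclicity is actually required.
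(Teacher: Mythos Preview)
Your proof is correct and follows essentially the same approach as the paper: the same two-case split on the definition of $\RW$ for Part~1, and the same use of the Real-Time property (Property~\ref{lem:rtp}) together with the structure of the read and write procedures for Part~2. You add a bit more detail than the paper in a couple of places---explicitly handling the case where $o_2$'s initial $\rdint$ never completes via the $(\infty,\cdot)$ convention, and noting that Property~\ref{lem:rtp} applies to $\rdint(\tsval)$ as well as $\rdint(\TS)$---which the paper leaves implicit; your parenthetical about incomplete reads in Part~1 flags a restriction that the paper's own proof also silently relies on (its appeal to Proposition~\ref{abd:tags_prop}(\ref{abd:tags_prop:1}) requires $\rdr$ complete), so you are not proving anything weaker than the paper does.
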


\begin{proof}
    \begin{enumerate}
    \item Let $\rdr,\wrr\in V(\sigma)$ be such that $(\rdr,\wrr)\in\RW$. There 
    are two cases.
        \begin{itemize}
            \item Suppose that for some
                ${\wrr}'$ we have $({\wrr}',r)\in\WR$ and $({\wrr}',\wrr)\in\WW$. 
                The
                definition of $\WR$ implies that $\tau(\rdr)=\tau({\wrr}')$, and 
                the definition of
                $\WW$ implies that $\tau({\wrr}')<\tau(\wrr)$. Then 
                $\tau(\rdr)<\tau(\wrr)$.
            \item Suppose now that $\neg \exists {\wrr}'.\ ({\wrr}',\rdr)\in\WR$.
                We show that $\tau(\rdr)=(0,0)$.
                Indeed, if $\tau(\rdr)\neq(0,0)$, then by 
                Proposition~\ref{abd:tags_prop}.\ref{abd:tags_prop:1}, there 
                exists $\wrr\in W(\sigma)$
                such that $\tau(\rdr)=\tau(\wrr)$. But then $(\wrr,\rdr)\in\WR$, 
                contradicting the
                assumption that there is no such write.
                At the same time, 
                Proposition~\ref{abd:tags_prop}.\ref{abd:tags_prop:3} implies that
                $\tau(\wrr)>(0,0)$. Then $\tau(\rdr)<\tau(\wrr)$.
        \end{itemize}
      \item Let $o_1$, $o_2 \in V(\sigma)$ be such that $(o_1, o_2) \in
        \RT(\sigma)$. Operation $o_1$ must complete to be able to precede any
        other operation. As a result, if $o_1$ is a \writ, it writes $\tau(o_1)$
        to $\State.\Tsvar$ using $\wrint$ at
        line~\ref{alg:line:recwritewriteint}, and if $o_1$ is a \rd, it does the
        same at line~\ref{alg:line:readwriteback}.  If $o_2$ is a \writ, it
        reads a set of $\State.\Tsvar$, the highest of which we denote as
        $\tmpts'$, from a quorum using $\rdint$ at
        line~\ref{alg:line:writereadtag}, and if $o_2$ is a \rd, it does this at
        line~\ref{alg:line:recreadreadint}. Since $(o_1, o_2) \in \RT$,
        operation $o_2$ starts reading $\State.\Tsvar$ after $o_1$ finishes
        writing $\tau(o_1)$ to $\State.\Tsvar$, and by the Real-Time Property
        (Property~\ref{lem:rtp}) we must have $\tmpts' \ge \tau(o_1)$. If $o_2$
        is a \rd, $\tmpts'$ is used as $\tau(o_2)$
        (line~\ref{alg:read:selectmax}) and we have
        $\tau(o_2) = \tmpts' \ge \tau(o_1)$. If $o_2$ is a \writ, we have
        $\tau(o_2) > \tmpts'$ according to lines~\ref{alg:line:writeselectmax}
        and~\ref{alg:line:writeinctag}, and thus
        $\tau(o_2) > \tmpts' \ge \tau(o_1)$. This proves the statement.
\end{enumerate}
\end{proof}

\begin{lemma}
    \label{thm:dep-graph-acyc}
    $G=(V(\sigma), \vis, \RT(\sigma), \WR, \WW, \RW)$ is an acyclic dependency 
    graph.
\end{lemma}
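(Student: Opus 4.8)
The plan is to verify the two requirements separately: that the tuple $G$ satisfies all the structural conditions in the definition of a dependency graph, and that $G$ is acyclic. For the structural part I would discharge the bulleted conditions one by one from the witness definitions together with Proposition~\ref{abd:tags_prop}. The clause ``$\vis$ holds for all complete operations'' and the derivation of $\RW$ are true by construction. For $\WR$: condition $(i)$'s value-agreement part is exactly Proposition~\ref{abd:tags_prop}.\ref{abd:tags_prop:4}; $(ii)$ follows because $(\wrr_1,\rdr),(\wrr_2,\rdr)\in\WR$ force $\tau(\wrr_1)=\tau(\rdr)=\tau(\wrr_2)$, hence $\wrr_1=\wrr_2$ by Proposition~\ref{abd:tags_prop}.\ref{abd:tags_prop2}; $(iv)$ is the contrapositive of the observation that an incomplete but visible write has, by the very definition of $\vis$, a complete read of equal timestamp, which is then a $\WR$-successor. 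Condition $(iii)$ is the one place where the timestamp abstraction is not enough: if a complete $\rdr$ has no $\WR$-predecessor, then Proposition~\ref{abd:tags_prop}.\ref{abd:tags_prop:1} forces $\tau(\rdr)=(0,0)$ (a nonzero timestamp would be shared by some visible write, yielding a $\WR$ edge), and then a direct inspection of the code shows the returned value is $v_0$: every write uses a timestamp with counter $\ge 1$, so $(0,0)$ is only ever stored together with the initial value $v_0$, hence the pair selected by $\rdr$ in line~\ref{alg:read:selectmax} is $((0,0),v_0)$. Finally, $\WW$ is a total order on $\{o\in\allwrr(\sigma) : \vis(o)\}$ because $\tau$ is injective on $\allwrr(\sigma)$ (Proposition~\ref{abd:tags_prop}.\ref{abd:tags_prop2}) and $<$ is a total order on timestamps.

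For acyclicity, the key fact to set up first is that every edge of $G$ is nondecreasing in $\tau$: a $\WR$ edge joins operations with equal $\tau$ and a $\WW$ edge is $\tau$-increasing, both by definition; a $\RW$ edge $(\rdr,\wrr)$ satisfies $\tau(\rdr)<\tau(\wrr)$ by the first item of Lemma~\ref{abd:tau-non-inc}; and a $\RT$ edge $(o_1,o_2)$ satisfies $\tau(o_1)\le\tau(o_2)$, strictly when $o_2$ is a $\writ$, by the second item of Lemma~\ref{abd:tau-non-inc}. Hence along any directed cycle $\tau$ must be constant, so the cycle contains no $\WW$ edge, no $\RW$ edge, and no $\RT$ edge whose head is a write; it therefore uses only $\WR$ edges and $\RT$ edges whose heads are reads. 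Since $\WR$ edges also point into reads, every edge of the cycle has a read as its head, so every vertex of the cycle is a read, and thus every edge of the cycle is a $\RT$ edge between two reads. This contradicts $\RT(\sigma)$ being a strict partial order (the real-time precedence relation is irreflexive and transitive, hence acyclic), so no cycle exists.

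I expect the structural verification to be essentially mechanical, with $(iii)$ the only subtlety since it requires reaching past the abstract timestamps to the concrete fact that $(0,0)$ is paired only with $v_0$. The substantive content --- and the only place where the quorum construction really matters --- is the $\tau$-monotonicity of edges, which for $\RT$ edges rests on the Real-Time Property (Property~\ref{lem:rtp}) proved in \S\ref{sec:upper} and already encapsulated in Lemma~\ref{abd:tau-non-inc}; given that lemma, the cycle-elimination step is a short combinatorial argument and I anticipate no real obstacle beyond being careful that ``every cycle edge has a read head'' legitimately uses $\WR\subseteq\allwrr(\sigma)\times\allrdr(\sigma)$.
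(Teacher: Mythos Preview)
Your proposal is correct and follows essentially the same approach as the paper: the structural verification is dispatched via Proposition~\ref{abd:tags_prop} (the paper waves this off as ``easily follows'' while you spell out each clause, including the code-level observation for $(iii)$), and the acyclicity argument is the same $\tau$-monotonicity-along-edges reduction to a cycle of $\RT$ edges among reads, contradicting the partial-order nature of $\RT$. The only cosmetic difference is that the paper phrases the final contradiction as ``$\rdr_1$ completes before $\rdr_2$ is invoked and vice versa'' rather than invoking acyclicity of a strict partial order, but these are equivalent.
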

\begin{proof}
    From Proposition~\ref{abd:tags_prop} and the
    definitions of $\vis$, $\WR$, $\WW$ and $\RW$ it easily follows
    that $G$ is a dependency graph, so we only need to show that $G$ is acyclic.
    By contradiction, assume that the graph contains a cycle $o_1, \dots, o_n = 
    o_1$.
    Then $n > 1$. By Lemma~\ref{abd:tau-non-inc} and the definitions of $\tau$, 
    $\WW$, and $\WR$, we must have $\tau(o_1) \leq \dots \leq \tau(o_n) = 
    \tau(o_1)$,
    so that $\tau(o_1) = \dots = \tau(o_n)$. Furthermore,
    if $(o,o')$ is an edge of $G$ and $o'$ is a write, then $\tau(o)<\tau(o')$.
    Hence, all the operations in the cycle must be reads, and thus, all the edges
    in the cycle come from $\RT(\sigma)$. Then there exist reads $\rdr_1,\rdr_2$
    in the cycle such that $\rdr_1$ completes before $\rdr_2$ is invoked and 
    $\rdr_2$
    completes before $\rdr_1$ is invoked, which is a contradiction.
\end{proof}

\begin{proof}[Proof of Theorem~\ref{thm:main-upper}.]
  Follows from Lemmas~\ref{thm:linearisability} and~\ref{thm:dep-graph-acyc}.
\end{proof}

\subsection{Results for Unknown $r$ and $b$}
\label{app:unk-tight}

\begin{corollary}
	Suppose that either $r$ or $b$ is unknown, and let $r \triangleq k$ if $r$ is 
	unknown, and 
	$b\triangleq n$ if $b$ is unknown. 
	Then it holds:
	there exists a $\FCRR{k,r,b}$-tolerant implementation of a wait-free atomic MWMR 
	register if and only if $n \ge 2k + \min(b, r) + 1$.
	\label{corol:main-tight-corol}
\end{corollary}
\begin{proof}
	Below we prove the corollary for the case 
	when $r$ is unknown and $b$ is known. The proof for the 
	remaining two cases ($r$ known and $b$ unknown, and both 
	$r$ and $b$ are unknown) is similar.
	
	\emph{Only if}:
	By Theorem~\ref{thm:main-tight}, there exists an algorithm 
	$A(k, r, b)$ that is atomic and wait-free when $n \ge 2k + \min(b, r) + 1$, at most $k$ replicas are 
	$\CRR$-faulty, at most $r$ 
	replicas experience rollback failures, and at most than $b$ replicas 
	are benign. Define	$A'(k, b)$ to be 
	$A(k, k, b)$.
	By definition, $A'(k, b) = A(k, k, b)$ is atomic and wait-free when $n \ge 2k + \min(b, k) 
	+ 1$, at most $k$ replicas are $\CRR$-faulty, at most $k$ replicas experience rollback failures, 
	and at most than $b$ replicas are benign. Thus, $A'(k, b)$ is also atomic and wait-free when at 
	at most $r$ replicas experience rollback failures for any $r 
	\le k$. It follows that $A'(k, b)$ is a $\FCRR{k,r,b}$-tolerant implementation of a wait-free atomic 
	MWMR 
	register for $\CRR$
	if $n \ge 2k + \min(b, k) + 1$ and $r$ is unknown. 
	
	\emph{If}: Assume that there exists an algorithm $A(k, b)$ that 	
	implements a $\FCRR{k,r,b}$-tolerant atomic wait-free MWMR register with 
	$n \le 2k+\min(b, k)$ replicas and unknown $r$. Then $A(k, b)$ is 
	$\FCRR{k,r,b}$-tolerant with $n = 3k$ and any value of $r \le k$,
	and in particular, when $r=k=b$. 
	However, in this case, Theorem~\ref{thm:main-tight} implies that $n \ge 
	2k + 
	\min(b,r)+1=3k+1$. A contradiction. This means that $A(k,b)$ can only be 
	$\FCRR{k,r,b}$-tolerant with $n \ge 2k + \min(b, k) + 1$. 
\end{proof}

 \begin{restatable}{corollaryr}{unknownlower}
	Suppose that either $r$ or $b$ is unknown, and let $r \triangleq k$ if $r$ is unknown, and
	$b\triangleq n$ if $b$ is unknown. Then for all algorithms $A$
	implementing an obstruction-free safe SWSR register, if $A$ is $\FCRR{k,r,b}$-tolerant, then
	$n \ge 2k + \min(b, r) + 1$.
	\label{corol:main-lower-unknown}
\end{restatable}

\begin{proof}
	Here we prove the corollary for the case when $r$ is unknown and $b$ is known. The proof for 
	the remaining two cases ($r$ known and $b$ unknown, and both $r$ and $b$ are unknown) is 	
	similar.
	
	Assume that there exists an algorithm $A(k,b)$ that implements a $\FCRR{k,r,b}$-tolerant safe 	
	obstruction-free MWMR register with $n \le 2k+\min(b, k)$ replicas and unknown $r$. 
	Then $A(k,b)$ is $\FCRR{k,r,b}$-tolerant with $n = 3k$ and any value of $r \le k$, and in 
	particular, when $r=k=b$. However, in this case, Theorem~\ref{thm:main-lower} implies that $n 
	\ge 2k + \min(b,r)+1=3k+1$. A contradiction. This means that $A(k, b)$ can only be 
	$\FCRR{k,r,b}$-tolerant with $n \ge 2k + \min(b, k) + 1$. 
\end{proof}
\section{Proofs of Results for Dynamic Failure Models}
\label{app:dynamic}

\subsection{Lower Bound}
\label{app:dynamic-lb}

\begin{lemma}
	\label{lem:prereddyn}
	For any $\ncrash$, $M$, $\ndown'$ and $\ndown''$ such 	
	that $\ndown' \ge \ncrash$ and $\ndown'' \ge \ncrash$, we have 	
	$\Fd{\ndown', \ncrash, M} \preceq \Fd{\ndown'', \ncrash, M}$
\end{lemma}

\begin{proof}
	Since $\ndown' \ge \ncrash$, then any $\Fd{\ndown', \ncrash, M}\text{-valid}$ 	
	execution $\alpha$ must belong to case 2 in Definition~\ref{ass:what}. As a	
	result, there exist $n - \ncrash$ replicas that never crash in $\alpha$. 	
	This means that there are at least $n - \ndown'' \le n - \ncrash$ replicas 	
	in $\alpha$ that are eventually up because $\ndown'' \ge c$. As a 	
	consequence of this and the existence of $n - c$ replicas that never crash 	
	in $\alpha$, $\alpha$ is $\Fd{\ndown'', \ncrash, M}\text{-valid}$ according 
	to case 2 in Definition~\ref{ass:what} and the statement is proven.
\end{proof}

\begin{lemma}
	\label{lem:reddyn}
	For all $\ndown$, $\ncrash$, and $M$, if $\ndown > \ncrash$, then $\Fd{\ndown, \ncrash, M} = 
	\Fd{\ncrash, 
	\ncrash, M}$.
\end{lemma}

\begin{proof}
	Since $\ndown \ge \ncrash$ and $\ncrash \ge \ncrash$, we can use Lemma~\ref{lem:prereddyn} 
	and conclude that $\Fd{\ndown, \ncrash, M} \preceq \Fd{\ncrash, \ncrash, M}$. Using $\ncrash \ge 
	\ncrash$, $\ndown \ge \ncrash$, and Lemma~\ref{lem:prereddyn} we can conclude that 
	$\Fd{\ncrash, \ncrash, M} \preceq \Fd{\ndown, \ncrash, M}$. Finally, $\Fd{\ndown, \ncrash, M} 
	\preceq \Fd{\ncrash, \ncrash, M}$ and $\Fd{\ncrash, \ncrash, M} \preceq \Fd{\ndown, \ncrash, M}$ 
	result in $\Fd{\ndown, \ncrash, M} = \Fd{\ncrash, \ncrash, M}$.
\end{proof}

\begin{proof}[Proof of Theorem~\ref{thm:lower-for-dynamic}]
	Without loss of generality, assume that $\ndown \le \ncrash$. This is 
	allowed because if $\ndown > \ncrash$, then according to 
	Lemma~\ref{lem:reddyn} we can replace $\Fd{\ndown, \ncrash, M}$ with 
	$\Fd{\ncrash, \ncrash, M}$.
	
	Suppose $n \le \ndown + \ncrash$ and that there exists an algorithm 
	$\alg{}{}$ 
	implementing an 
	$\Fd{\ndown,\ncrash,M}$-tolerant safe obstruction-free SWSR register. Let 
	$\allreps = \{p_1, 
	p_2, 
	\ldots, p_n\}$ be the set of 
	all replicas, and define $F = \{p_i \in \allreps\ |\ 1 \le i \le \ndown\}$, 
	and $G = \{p_i \in \allreps\ |\ \ndown + 1 \le i \le \ndown + \ncrash\}$. 
	Since $n \le 
	\ndown + 
	\ncrash$, we have $\allreps = F \cup G$.

	First define $\world_1'$ to be an infinite execution where
	replicas in $F$ crash 
	permanently, then $W = 
	\writ(v)$ is invoked, and no other crash or invocation takes place.
	Since 
	$|F| \le \ndown \le \ncrash$, there are $n - \ndown \ge n - \ncrash$ 
	replicas in $\world_1'$ that never crash and $\world_1'$ is $\Fd{\ndown, 
	\ncrash, M}\text{-valid}$ according to case 2 in Definition~\ref{ass:what}.
	Due to $\alg{}{}$ being obstruction-free in $\Fd{\ndown, \ncrash, M}$, 
	$W$ must complete. Now consider execution $\world_1''$ to be the finite 
	execution 
	obtained by removing 
	permanent replica 
	crashes from the smallest prefix of $\world_1'$ where $W$ completes. Then obtain execution 
	$\world_1$ by speeding up $\world_1''$ such that the last action takes 
	place no later than time 
	$\frac{\Delta}{3}$. This is 
	possible since there is no lower bound on how 
	long actions take to 
	complete. Since $\world_1$ ends before time $\Delta$, it is valid 
	despite no receive events taking place at any replica in $F$ or any message 
	from $F$ being delivered to other replicas after the invocation of $W$.
	
	Now define $\world_2'$ to be an extension of the longest prefix in 
	$\world_1$ that does not contain the invocation of $W$. Require $\world_2'$ 
	be an execution where replicas in $G$ crash and restart, then $R = \rd()$ 
	is invoked, and no other crash or invocation takes place.
	All replicas are eventually up in $\world_2'$ and there are $n - |G| \ge n 
	- c$ 
	replicas that never crash, so $\world_2'$ is $\Fd{\ndown, \ncrash, 
	M}\text{-valid}$ by case 2 of Definition~\ref{ass:what}.
	$\alg{}{}$ is obstruction-free in $\Fd{\ndown, \ncrash, M}$, so $R$ must
	complete. Additionally, no 
	$\writ()$ has 
	been invoked and $\alg{}{}$ is safe in $\Fd{\ndown, \ncrash, M}$, so $R$ 
	must return the initial value.
	Now consider the finite execution $\world_2''$ to be the the smallest 
	prefix of 
	$\world_2'$ where $R$ completes. Then obtain execution 
	$\world_2$ by speeding up $\world_2''$ such that the last action takes 
	place no later than 
	$\frac{\Delta}{3}$, which is possible since there is no lower bound on how long actions take to 
	complete.
	
	Now construct the finite execution $\world_3'$ by appending actions of 
	$\world_2$ to 
	$\world_1$ and rolling back all replicas upon restart. Let $\world_3$ be an 
	infinite extension of $\world_3'$ without any other crashes or invocations 
	than those in $\world_3'$. If $\world_3$ is 
	indeed a valid execution, 
	then it violates safety as $W$ finishes before $R$, but $R$ returns the 
	initial value instead of 
	value $v$ written by $W$. $\world_1$ is valid on its own, so we only need 
	to 
	see whether the actions 
	of $\world_2$ are valid if they occur at the end of $\world_1$.
	First, $F$ have no receive event in 
	$\world_1$ after the invocation, so they are in the 
	state expected by $\world_2$. Second, $\world_2$ expects 
	replicas in $G$ to be in the 
	initial state after restarting, which is the case since replicas in $G$ 
	restart and all replicas roll back upon restart in $\world_3$. Third, all 
	replicas are 
	eventually 
	up and because $|G| \le \ncrash$ there are 
	$n - \ncrash$ replicas that never crash, so $\world_3$ is $\Fd{\ndown, 
	\ncrash, M}\text{-valid}$ by case 2 of Definition~\ref{ass:what}. As such, 
	the timing of crashes does not have to be restricted and
	$\alg{}{}$ must be safe in $\world_3$. Finally, 
	since $\world_3'$ ends no later than time $\frac{2\Delta}{3} < \Delta$, all 
	message
	receive events can be delayed until after $\world_3'$ if 
	needed. This 
	results in $\world_3$ being a 
	$\Fd{\ndown, \ncrash, M}\text{-valid}$ execution that violates the safety 
	of $\alg{}{}$, thus contradicting 
	$\alg{}{}$'s safety, so such an $\alg{}{}$ cannot exist.
\end{proof}

\subsection{Proof of Safety for $\algname$ Assuming Monotone Incarnation Numbers}
\label{app:safety-black}

In this section we prove the safety of the algorithm in
Figure~\ref{fig:pseudo-dyn} excluding the code in blue, which assumes that the
incarnation numbers assigned at line~\ref{alg:line:recoverymyincincrement}
satisfy Incarnation Number Monotonicity (Property~\ref{lem:monotone},
\S\ref{sec:ccq-cv}).

Define $\tsset \subseteq \mathbb{N} \times \mathbb{N}$ to be the set of all
timestamps, and $\incset \subseteq \mathbb{N}$ to be the set of all incarnation
numbers.
\begin{definition}
	Consider a completed call $\wrq$ to $\wrint$ that writes a value
	$x \in \tsset \cup \incset$ to $\State.\var$. We use the following notation:
	\begin{itemize}
		\item $\rdcv[\wrq]$: the last time when $\wrq$ starts executing
		line~\ref{alg:line:rdcv1} if $x \in \incset$,
		\item $\finish[\wrq]$: the time at which $\wrq$ executes
		line~\ref{alg:line:wrint-ret-a2plus} and returns.
		\item $Q[\wrq]$: the value of the variable $Q$ at time $\finish[\wrq]$.
		\item $\acks[\wrq]$: the set of $\WRITEACK(\idvar, \incvar_j, \tmpcv_j, j)$
		messages received in $\wrq$ with the highest $\incvar_j$ for each replica
		$p_j \in Q[\wrq]$
		\item $\resptime[\wrq, p_j]$: the time at which $p_j \in Q[\wrq]$ sends its
		message in $\acks[\wrq]$.
	\end{itemize}
	We say that $\wrq$ is \textbf{durable} iff: for every $p_j \in Q[\wrq]$,
        at all times $t \ge \resptime[\wrq, p_j]$, if $p_j$ is active, then
        $p_j$ has $\State.\var \ge x$.
\end{definition}

\begin{lemma}
\label{prop:read}
Consider a completed and durable call $\wrq$ to $\wrint$ that writes
$x \in \tsset \cup \incset$ to $\State.\var$. Any call $\rdq$ to $\rdint$
invoked after $\finish[\wrq]$ to read $\State.\var$ returns a set containing a
value $\ge x$ for $\State.\var$.
\end{lemma}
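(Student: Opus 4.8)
The plan is a quorum-intersection argument in which the durability hypothesis on $\wrq$ replaces the usual monotonicity of $\State.\var$ (which fails here because of rollbacks). Suppose $\rdq$ is a call to $\rdint$ that reads $\State.\var$, is invoked after $\finish[\wrq]$, and returns (if it does not return there is nothing to prove). By the code of $\rdint$ in Figure~\ref{fig:pseudo-dyn}, $\rdq$ returns only after collecting $\READACK$ messages from a set $Q_r$ of replicas with $|Q_r| \ge \rdqsize = k+1$; and $\wrq$, being complete, exits its outer loop only once its accepted-response set is a write quorum, so $|Q[\wrq]| \ge \wrqsize = n-k$. Since the read- and write-quorum sizes used by $\rdint$ and $\wrint$ are independent of the request type, this holds uniformly whether $x \in TS$ or $x \in I$. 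As $(k+1)+(n-k) = n+1 > n$, the sets $Q_r$ and $Q[\wrq]$ intersect; fix a replica $p$ in the intersection.

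Next I would nail down the timing. Since $p \in Q_r$, $p$ executed the $\READ$ handler for $\rdq$'s request at some time $t$ and sent a $\READACK$; that handler has precondition $\neg\stale$ (line~\ref{alg:line:notstale-a2+}), so $p$ is active at $t$. On the other side, $\resptime[\wrq, p]$ is the time at which $p$ sent the $\WRITEACK$ that $\wrq$ actually counted for $p$, and that message was received by $\wrq$ no later than $\finish[\wrq]$, so $\resptime[\wrq, p] \le \finish[\wrq]$. Because $\rdq$ is invoked after $\finish[\wrq]$ and $p$ answers $\rdq$'s $\READ$ message only after it is sent, we get $t > \finish[\wrq] \ge \resptime[\wrq, p]$; in particular $t \ge \resptime[\wrq, p]$.

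Now the durability of $\wrq$ applies directly: $p \in Q[\wrq]$ and $p$ is active at $t \ge \resptime[\wrq, p]$, hence $p$ has $\State.\var \ge x$ at time $t$. By inspection of the $\READ$ handler (line~\ref{alg:line:rproc-a2+}), the response $p$ returns to $\rdq$ is precisely the value of the queried component $\State.\var$ read at $t$, so the set returned by $\rdq$ contains a value $\ge x$, which is the claim. Instantiating this lemma with $\var = \Tsvar$ and using that every complete $\wrint(\tsval(x, \_))$ is durable (to be established separately via crash consistency) then yields Property~\ref{lem:rtp} for $\D$.

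I do not expect any genuine obstacle inside this lemma itself: the real difficulty has been pushed into proving durability of the relevant $\wrint$ calls, which rests on the crash-consistency machinery of \S\ref{sec:ccq-cv} and Lemma~\ref{lem:ccp}. The only points needing care are \emph{(i)} that the $\READ$ handler never answers while $\stale$, which is what licenses the claim that $p$ is active at $t$, and \emph{(ii)} the ordering $\resptime[\wrq, p] \le \finish[\wrq] < t$, which uses that $\acks[\wrq]$ is assembled from messages $\wrq$ has actually received by the time it returns.
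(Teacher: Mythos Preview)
Your proof is correct and follows essentially the same approach as the paper's: quorum intersection to find a common replica $p$, then the $\neg\stale$ precondition to establish that $p$ is active at the response time $t$, then durability to conclude $\State.\var \ge x$ at $t$. The only point the paper makes explicit that you leave implicit is why the $\READACK$ that $\rdq$ collected from $p$ must have been sent \emph{after} $\rdq$'s $\READ$ was sent (and not, say, a stale response to some earlier request): this relies on the uniqueness of the identifier generated by $\newid$, since $\rdq$ only accepts $\READACK$s carrying its own $\idvar$. Your phrase ``$p$ answers $\rdq$'s $\READ$ message only after it is sent'' presupposes exactly this, so the argument is sound, but in an asynchronous system with retransmissions it is worth spelling out.
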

\begin{proof}
  Consider one such $\rdq$ and let $\req$ be the request such that
  $\rdq=\rdint(\req)$. By the code of $\rdint$, $\rdq$ first sends
  $\READ(\idvar, \req)$ messages to all replicas where $\idvar$ is the
  identifier associated with $\req$, and then waits to receive $\READACK$
  responses from some set $Q_r$ of replicas such that
  $|Q_r| = q_r \ge \ndown+1$. On the other hand, by the code of $\wrint$, we get
  $|Q[\wrq]| = q_w \ge n - \ndown$. Thus, there exists a replica
  $p_i \in Q[\wrq] \cap Q_r$. In other words, there exists a replica
  $p_i \in Q[\wrq]$ whose $\READACK(\idvar, \_, i)$ message was received by
  $\rdq$.

  Since the identifier $\idvar$ is assigned uniquely by the $\newid$ function,
  $p_i$ must have sent $\READACK(\idvar,\_,i)$ after receiving
  $\READ(\idvar, \req)$. By our assumption, $\rdq$ starts after $\finish[\wrq]$,
  so $p_i$ responds to $\rdq$ after $\finish[\wrq] > \resptime[\wrq, p_i]$.
  Additionally, only active replicas respond to reads
  (line~\ref{alg:line:notstale-a2+}) and $\wrq$ is durable, so $p_i$ must have
  $\State.\var \ge x$ when it responds to $\rdq$. Then $\rdq$ receives a
  $\READACK$ message from $p_i$ with a value $\ge x$ for $\State.\var$, which
  $\rdq$ includes into its response.
\end{proof}

\begin{lemma}
  \label{prop:inc}
  If replica $p_i$ has value $\incnum$ for $\State.\cvvar[i]$ at time $t$, and
  $p_i$ assigns $\incnum'$ to $\State.\cvvar[i]$ at
  line~\ref{alg:line:recoverymyincincrement} at time $t' > t$, then
  $\incnum' > \incnum$.
\end{lemma}
\begin{proof}
  Replica $p_i$ must assign $\incnum$ to $\State.\cvvar[i]$ at
  line~\ref{alg:line:recoverymyincincrement} at some time $t'' \le t <
  t'$. Thus, according to Incarnation Number Monotonicity, we must have
  $\incnum < \incnum'$.
\end{proof}

\begin{lemma}
  Consider a completed call $\wrq$ to $\wrint$ by a process $p_j$ that writes
  some $x \in \incset$, and assume that all completed calls ${\wrq}'$ that write
  some $x' \in \incset$ with $\finish[{\wrq}'] \le \rdcv[\wrq]$ are durable. If
  a replica $p_i \in Q[\wrq] \setminus \{p_j\}$ crashes at time
  $\tcrash > \resptime[\wrq, p_i]$ and starts executing
  line~\ref{alg:line:rdbegin} at time $\trstate > \tcrash$, then
  $\trstate > \rdcv[\wrq]$.
\label{claim:read-cv-read-state}
\end{lemma}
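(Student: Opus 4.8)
The plan is to argue by contradiction: suppose $\trstate \le \rdcv[\wrq]$, and derive two incompatible bounds on the incarnation number that $p_i$ carries in its message recorded in $\acks[\wrq]$, which I will call $\incvar_i$. Since $\wrq$ writes some $x\in I$, every $\WRITEACK$ that $p_i$ sends in response to $\wrq$ is emitted by the handler of $\WRITE(\idvar,\req,\incvar)$ with $\req\in\{\precv(\cdot),\cv(\cdot)\}$ (lines~\ref{line:write-2}--\ref{alg:line:sendwrack-a2+}), so $\incvar_i$ equals $p_i$'s value of $\State.\cvvar[i]$ at time $\resptime[\wrq,p_i]$. I would then let $\rho$ be the recovery of $p_i$ that reaches line~\ref{alg:line:rdbegin} at time $\trstate$, and $\nextinc$ the candidate it picks in $\rho$ at line~\ref{alg:line:choosenewinc}. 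Because $\rho$ reaches line~\ref{alg:line:rdbegin}, it has already completed $\wrint(\precv(i,\nextinc))$, executed $\State.\cvvar[i]\leftarrow\nextinc$ at line~\ref{alg:line:recoverymyincincrement} at some time $t'$, and completed $\wrq''\triangleq\wrint(\cv(i,\nextinc))$ at line~\ref{alg:line:cvwrite}, all strictly before $\trstate\le\rdcv[\wrq]$; moreover $\resptime[\wrq,p_i]<\tcrash<t'<\trstate$.

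For the first (lower) bound I would show $\incvar_i\ge\nextinc$. Here $\wrq''$ is a complete call writing a value in $I$ with $\finish[\wrq'']<\rdcv[\wrq]$, so by the hypothesis of the lemma $\wrq''$ is durable: its acknowledging set $Q[\wrq'']$ has size $\ge\wrqsize=n-k$, and every member of it that is active at any time at or after its response time holds $\State.\cvvar[i]\ge\nextinc$. Consider the last $\rdint(\cv)$ that $\wrq$ performs, namely the one starting at $\rdcv[\wrq]$ (line~\ref{alg:line:rdcv1}); it terminates because $\wrq$ does, returning answers from a read quorum $Q^\star_r$ of active replicas with $|Q^\star_r|\ge\rdqsize=k+1$ (answering a $\READ$ requires $\neg\stale$, line~\ref{alg:line:notstale-a2+}). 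As $n\ge 2k+b+1>2k$, $Q[\wrq'']\cap Q^\star_r\neq\emptyset$; fix $p_c$ in this intersection. Arguing as in Lemma~\ref{prop:read} (request identifiers are unique), $p_c$ answers this $\rdint(\cv)$ at a time $t_c>\rdcv[\wrq]>\finish[\wrq'']$ while being active, so durability of $\wrq''$ gives $\State.\cvvar[i]\ge\nextinc$ at $p_c$ at $t_c$; hence the vector $\tmpcv$ that $\wrq$ forms at line~\ref{alg:line:rdcv} from these answers has $\tmpcv[i]\ge\nextinc$. Since $p_i\in Q[\wrq]$, $p_i$ passes the filter of line~\ref{alg:line:filter-x} in this final iteration with some $\incvar$-value that is $\ge\tmpcv[i]$; as $\incvar_i$ (the value recorded in $\acks[\wrq]$) is the maximum over all of $p_i$'s acknowledgements to $\wrq$, this gives $\incvar_i\ge\tmpcv[i]\ge\nextinc$.

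For the second (upper) bound I would invoke Proposition~\ref{prop:inc} with $T\triangleq\rdcv[\wrq]$: the hypothesis of the present lemma supplies that proposition's durability premise, $p_i$ holds value $\incvar_i$ for $\State.\cvvar[i]$ at time $\resptime[\wrq,p_i]\le T$, and $p_i$ later assigns $\nextinc$ to $\State.\cvvar[i]$ at line~\ref{alg:line:recoverymyincincrement} at time $t'>\resptime[\wrq,p_i]$; the proposition then yields $\nextinc>\incvar_i$. Combining the two bounds gives $\incvar_i\ge\nextinc>\incvar_i$, a contradiction, so $\trstate>\rdcv[\wrq]$.

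I expect the main obstacle to be bookkeeping rather than anything conceptual. One has to pin down the \emph{right} acknowledgement of $p_i$ — the one with maximal $\incvar$ recorded in $\acks[\wrq]$, sent exactly at $\resptime[\wrq,p_i]$ — and the \emph{right} $\rdint(\cv)$ of $\wrq$ — the last one, starting at $\rdcv[\wrq]$ — and then observe that the durability hypothesis is exactly strong enough to cover $p_i$'s own recovery write $\wrq''=\wrint(\cv(i,\nextinc))$, which completes before $\rdcv[\wrq]$. That single observation is what makes the quorum-intersection bound $\incvar_i\ge\nextinc$ go through and collide with the incarnation-monotonicity bound of Proposition~\ref{prop:inc} (itself obtained from Lemmas~\ref{lem:inc} and~\ref{prop:read}).
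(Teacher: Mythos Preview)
Your proof is correct and follows essentially the same approach as the paper's: assume $\trstate\le\rdcv[\wrq]$, observe that the recovery's $\wrint(\cv(i,\nextinc))$ completes before $\rdcv[\wrq]$ and is therefore durable by hypothesis, deduce that the final $\rdint(\cv)$ of $\wrq$ sees an entry $\ge\nextinc$ for $p_i$, and combine this with Proposition~\ref{prop:inc} to contradict $p_i\in Q[\wrq]$. The only cosmetic differences are that the paper invokes Lemma~\ref{prop:read} directly where you re-derive the quorum-intersection step inline, and the paper phrases the final contradiction as ``$p_i$'s response is discarded at line~\ref{alg:line:filter-x}'' whereas you phrase it as the numeric clash $\incvar_i\ge\nextinc>\incvar_i$; these are the same argument.
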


\begin{proof}
  By contradiction, assume that $\trstate \le \rdcv[\wrq]$, as shown in
  Figure~\ref{fig:imp}.  Process $p_i$ crashes at time $\tcrash$ and later
  starts executing line~\ref{alg:line:rdbegin} at time $\trstate$ as part of
  recovery, so $p_i$ must also execute
  lines~\ref{alg:line:recoverymyincincrement} and~\ref{alg:line:cvwrite}. Let
  $g$ be the value of $\nextinc$ when line~\ref{alg:line:recoverymyincincrement} is executed, 
  $\Winc$ be the call to $\wrint$ at
  line~\ref{alg:line:cvwrite}, and $\twinc$ be the time when it starts. Then
$$
\tcrash < \twinc < \finish[\Winc] < \trstate \le \rdcv[\wrq],
$$
so $\Winc$ must be durable by our assumption. Let $\rdq$ be the read at
line~\ref{alg:line:rdcv1} executed by $\wrq$ at time
$\rdcv[\wrq] > \finish[\Winc]$. Then by Lemma~\ref{prop:read}, one of the
vectors returned by $\rdq$ must have a value $\ge g$ in the entry for $p_i$.
Let $g_0$ be the incarnation number in $p_i$'s message in $\acks[\wrq]$, which
$p_i$ sends at $\resptime[\wrq, p_i] < \twinc \le \rdcv[\wrq]$
(line~\ref{alg:line:sendwrack-a2+}).  By Lemma~\ref{prop:inc}, we must
have $g > g_0$. Thus, one of the vectors returned by the read $\rdq$ must have a
value $>g_0$ in the entry for $p_i$, i.e., a value greater than the incarnation
number in $p_i$'s message in $\acks[\wrq]$.  Then after completing $\rdq$ at
line~\ref{alg:line:rdcv1} the write $\wrq$ discards $p_i$'s response at
line~\ref{alg:line:filter-x}. But this contradicts the assumption of
$p_i \in Q[\wrq]$.
\end{proof}

\begin{figure}[t]
  \hspace{-6pt}
	\scalebox{0.85}{
	\begin{tikzpicture}
		
		\tikzmath{\pminy = -1; \pjy = -1;}
		
		\node at (-0.5, 0) {$p_j$};
		\node at (-0.5, \pminy) {$p_i$};
		
		\node at (7, 0.25) {$\wrq$};
		\node [font=\footnotesize] at (11.75, 0.25) {$\rdcv[\wrq]$};
		\node [font=\footnotesize] at (15, 0.25) {$\finish[\wrq]$};
		\node at (12.8, -0.6) {$\rdq$};
		
		\draw [|-|] (0,0) -- (11.75,0);
		\draw [|-|] (11.75,0) -- (14,0);
		\draw [|-|] (14,0) -- (15,0);
		\draw [decorate, decoration = {calligraphic brace, mirror, 
		amplitude=5pt}] 
		(11.75,-0.2) --  
		(14,-0.2);
		
		
		
		
		\draw [dotted] (0,\pminy) -- (11.75,\pminy);
		
		\draw [->] (0.5, 0) -- (1.25, \pminy);
		\draw [->] (1.25, \pminy) -- (2, 0);
		\node [font=\footnotesize] at (0.3, \pminy / 2) {$\WRITE$};
		\node [font=\footnotesize] at (2.6, \pminy / 2) {$\WRITEACK$};
		
		\node [font=\footnotesize] at (1.5, \pminy - 0.25) {$\resptime[\wrq, 
		p_i]$};
		\node at (3.3, \pminy) {$\times$};
		
		\draw [|-|] (4.7, \pminy) -- (7.6, \pminy);
		\draw [|-|] (7.6, \pminy) -- (9.9, \pminy);
		\draw [|-] (11.5, \pminy) -- (11.75, \pminy);
		\draw [dashed] (11.75, \pminy) -- (12.5, \pminy);
		\draw [dotted] (11.75, \pminy) -- (11.75, 0);
		\draw [decorate, decoration = {calligraphic brace, amplitude=5pt}] 
		(4.7,\pminy + 0.2) -- (9.9,\pminy + 0.2);
		
		\node at (3.6, \pminy - 0.25) {$\tcrash$};
		\node at (5.2, \pminy - 0.25) {$\twinc$};
		\node at (7.5, \pminy + 0.6) {$\Winc$};
		\node at (12, \pminy - 0.25) {$\trstate$};
		
		\node [font=\footnotesize] at (7.7, \pminy - 0.25) {$\rdcv[\Winc]$};
		\node [font=\footnotesize] at (10.1, \pminy - 0.25) {$\finish[\Winc]$};
		
	\end{tikzpicture}
	
}
	
	\caption{An illustration of the proof of 
	Lemma~\ref{claim:read-cv-read-state}.}
	\label{fig:imp}
\end{figure}

\begin{lemma}
  Consider a completed call $\wrq$ to $\wrint$ by some process $p_j$ that writes
  some $x \in \incset$ such that all completed calls ${\wrq}'$ that write some
  $x' \in \incset$ with $\finish[{\wrq}'] \le \rdcv[\wrq]$ are durable. Then the
  set of responses $\acks[\wrq]$ is crash-consistent.
\label{lem:write-cv-cc}
\end{lemma}
\begin{proof}
  By Lemma~\ref{claim:read-cv-read-state} we know that if a replica
  $p_i \in Q[\wrq] \setminus \{p_j\}$ crashes at time
  $\tcrash > \resptime[\wrq, p_i]$ and starts executing
  line~\ref{alg:line:rdbegin} at time $\trstate > \tcrash$, then
  $\trstate > \rdcv[\wrq]$. We also know that $p_j$ completes $\wrq$, so if
  $p_j \in Q[\wrq]$ and it crashes at time $\tcrash > \resptime[\wrq, p_j]$, we
  must have $\tcrash > \finish[\wrq] > \rdcv[\wrq]$ and thus $p_j$ starts
  executing line~\ref{alg:line:rdbegin} at time $\trstate > \tcrash > \rdcv[\wrq]$. By the
  definition of $\rdcv[\wrq]$, we must have
  $\resptime[\wrq, p_i] \le \rdcv[\wrq]$ for every $p_i \in Q[\wrq]$. Hence,
  every $p_i \in Q[\wrq]$ responds to any $p_k \in Q[\wrq]$'s $\rdint(\ST(\_))$ only after
  $\resptime[\wrq, p_i]$, so the set of responses $\acks[\wrq]$ is
  crash-consistent.
\end{proof}

\begin{lemma}
\label{lemma:recoverywritesafe}
All completed calls to $\wrint(\req)$ that write some $x \in \incset$ are durable.
\end{lemma}
\begin{proof}
  Assume the contrary: there is at least one completed call to $\wrint$ that
  writes some $x \in \incset$ and is not durable.  Define $\mathcal{W}$ to be
  the set of all completed calls to $\wrint$ that write some $x \in \incset$ and
  are not durable. Let $\Wmin \in \mathcal{W}$ be such that
  $$
  \forall \wrq.\ \wrq \in \mathcal{W} \implies \rdcv[\Wmin] \le \rdcv[\wrq].
  $$
  Thus, any completed call $\wrq$ that writes some $x \in \incset$ such that
  $\rdcv[\wrq] < \rdcv[\Wmin]$ must be durable.  Suppose $\Wmin$ writes a value
  $g \in \incset$ to $\State.\var$. Since $\Wmin$ is not durable, there must
  exist some replica $p_i \in Q[\Wmin]$ and time $t \ge \resptime[\Wmin, p_i]$
  such that $p_i$ is active at $t$, but has $\State.\var < g$. Let $\tmin$ be
  the earliest such $t$, and $\pmin$ be the corresponding replica $p_i$.

Since state values do not decrease in our algorithm while a replica is up, and
$\pmin$ has $\State.\var \ge g$ at time $\resptime[\Wmin, \pmin]$, the replica
$\pmin$ must crash at some time between $\resptime[\Wmin, \pmin]$ and
$\tmin$. Let $\tlc$ be the latest time when $\pmin$ crashes before $\tmin$.
Since $\pmin$ is active at time $\tmin$, it must complete its recovery code
between $\tlc$ and $\tmin$. Let $\tlrstate$ and $\tstale$ be the times when
$\pmin$ starts executing lines~\ref{alg:line:rdbegin}
and~\ref{alg:line:clearstale}, respectively, during this recovery. Then
$\tlc < \tlrstate < \tstale \le \tmin$.

Note that for any completed call $\wrq'$ that writes some $x \in \incset$ with
$\finish[\wrq'] \le \rdcv[\Wmin]$, we have
$\rdcv[\wrq'] < \finish[\wrq'] \le \rdcv[\Wmin]$, so by the definition of
$\Wmin$, $\wrq'$ is durable. Thus, by applying Lemma~\ref{lem:write-cv-cc} with
$\wrq = \Wmin$, we establish that $\acks[\Wmin]$ is crash-consistent.

Let $\rdq$ be the $\rdint$ that replica $\pmin$ invokes at time $\tlrstate$ to
retrieve the state (line~\ref{alg:line:rdbegin}). The write quorum $Q[\Wmin]$
and the read quorum used by $\rdq$ must intersect in at least one replica $p'$.
Since $\acks[\Wmin]$ is crash-consistent, this replica must responds to $\rdq$
at some time $\tresp > \resptime[\Wmin, p']$ (see
Figure~\ref{fig:rebuild}). Since only active replicas repond to reads
(line~\ref{alg:line:notstale-a2+}), $p'$ must be active at $\tresp$. But then
$p'$ must have $\State.\var \ge g$ at time $\tresp$: otherwise, at this time,
$p' \in Q[\Wmin]$ is active and has $\State.\var < g$, and furthermore
$\resptime[\Wmin, p'] < \tresp < \tstale < \tmin$, which contradicts the
definition of $\tmin$. Thus, $p'$ responds to $\rdq$ with a $\READACK$ message
containing a value $\ge g$ for $\State.\var$. Then by
lines~\ref{alg:line:for-k-max}-\ref{alg:line:clearstale}, $\pmin$ has
$\State.\var \ge g$ at time $\tstale$. Since $\pmin$ does not crash between
$\tlc$ and $\tmin$, it will also have $\State.\var \ge g$ at time $\tmin$. But
this contradicts the definition of $\pmin$. This contradiction shows that all
$\wrint$s that write some $x \in I$ are durable.
\end{proof}

\begin{figure}[t]
	\centering
	
	\begin{tikzpicture}[scale=0.87]
		
		\node at (-0.5, 0) {$\Wmin$};
		\node at (-0.5, -1) {$p'$};
		\node at (-0.5, -2) {$\pmin$};
		
		\node [font=\footnotesize] at (3.6, 0.25) {$\rdcv[\Wmin]$};
		
		\node [font=\footnotesize] at (2, -1.25) {$\resptime[\Wmin, p']$};
		\node at (7, -0.75) {$\tresp$};
		
		\node at (12.8, -2.25) {$\tstale$};
		\node at (14.2, -2.25) {$\tmin$};
		\node at (7, -2.25) {$\tlrstate$};
		\node at (8.75, -2.25) {$RQ$};
		
		\draw [-|] (0.5,0) -- (3.5,0);
		\draw [dashed] (3.5,0) -- (4.5,0);
		
		\draw [dotted] (0,-1) -- (14,-1);
	
		\draw [dotted] (0, -2) -- (12.25, -2);
		\draw [|-|] (6, -2) -- (11.5, -2);
		\draw [|-|, dotted] (12.25, -2) -- (14, -2);
		
		\draw [->] (6.5, -2) -- (6.75, -1);
		\draw [->] (6.75, -1) -- (8, -2);
		
		\draw [->] (1, 0) -- (1.75, -1);
		\draw [->] (1.75, -1) -- (2.5, 0);
		
	\end{tikzpicture}
	\caption{Visualization of $\pmin$ rebuilding its state}
	\label{fig:rebuild}
\end{figure}

\begin{lemma}
\label{lemma:incnumpersist}
Consider a completed call $\rdq$ to $\rdint(\ST(\State.\cvvar[i]))$ at
line~\ref{alg:line:rdbegin} by a replica $p_i$ with
$\State.\cvvar[i] = \incnum$. Any replica $p_j$ that receives a $\READ$ message
from $\rdq$ at time $\tresp$ will have $\State.\cvvar[i] \ge \incnum$ at all
times $t \ge \tresp$ when $p_j$ is active.
\end{lemma}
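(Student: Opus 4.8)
The plan is to show that once $p_j$'s entry $\State.\cvvar[i]$ has reached the value $\incvar$, it never drops below $\incvar$ at any time when $p_j$ is active. First I would pin down a \emph{durable} write of $\incvar$. Since $\rdq$ is executed on line~\ref{alg:line:rdbegin}, $p_i$ has already completed the call on line~\ref{alg:line:cvwrite} in the same $\onrestart$ block and holds $\incvar$ in $\State.\cvvar[i]$ at the invocation of $\rdq$. A routine incarnation-number bookkeeping argument, of the same flavour as the one behind Lemma~\ref{lem:inc} (the needed facts being that any value absorbed into $p_i$'s own $\State.\cvvar[i]$ via line~\ref{alg:line:incmyinc} originates from a completed invocation of line~\ref{alg:line:cvwrite} by $p_i$, while any ``orphan'' value left by an incomplete line~\ref{alg:line:cvwrite} lies strictly below every incarnation $p_i$ chooses afterwards, because line~\ref{alg:line:recoverymyincwrite} durably records it in $\State.\precvvar[i]$ first), shows that $\incvar$ itself was written to $\State.\cvvar[i]$ by some completed invocation $\wrq^\ast$ of line~\ref{alg:line:cvwrite} by $p_i$ (in the typical case, the one in $p_i$'s current $\onrestart$ block). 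Then $\wrq^\ast$ writes $\incvar \in I$ and is complete, hence durable by Lemma~\ref{lemma:recoverywritesafe}; moreover $\rdq$ is invoked after $\finish[\wrq^\ast]$, and since $p_j$ can process $\rdq$'s $\READ$ message only when $\neg\stale$ holds (line~\ref{alg:line:notstale-a2+}), $p_j$ is active at $\tresp$ and $\tresp > \finish[\wrq^\ast]$.

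Next I would record the base case and the crash-free case. When $p_j$ processes $\rdq$'s $\READ$ message at $\tresp$ with $p_i$ as the sender, it executes line~\ref{alg:line:rdstinc}, setting $\State.\cvvar[i] \leftarrow \max\{\State.\cvvar[i], \incvar\}$, so $\State.\cvvar[i] \ge \incvar$ at $p_j$ immediately after $\tresp$. While $p_j$ stays up, this entry is non-decreasing: every assignment to the $i$-th entry of $p_j$'s crash vector takes a pointwise maximum (lines~\ref{alg:line:updatecv}, \ref{alg:line:rdstinc}, \ref{alg:line:cvset}, and, when $i=j$, also line~\ref{alg:line:incmyinc}), the only exception being the direct assignment on line~\ref{alg:line:recoverymyincincrement}, which runs only inside an $\onrestart$ block. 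Hence, if $p_j$ does not crash in $(\tresp, t]$, then $\State.\cvvar[i] \ge \incvar$ at $t$.

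Finally, the crash-and-rollback case, which I expect to be the main obstacle. Fix a time $t \ge \tresp$ at which $p_j$ is active; if $p_j$ does not crash in $(\tresp, t)$ the previous paragraph applies, so let $\tcrash \in (\tresp, t)$ be the \emph{latest} crash of $p_j$ before $t$. Being active at $t$, $p_j$ completes the $\onrestart$ block triggered by this crash before $t$; let $\trstate$ be the time it executes the recovery read $\rdint(\ST(\cdot))$ on line~\ref{alg:line:rdbegin} in that block. Then $\trstate > \tcrash > \tresp > \finish[\wrq^\ast]$, so Lemma~\ref{prop:read} applied to the durable write $\wrq^\ast$ (whose target is the component $\State.\cvvar[i]$, which $\rdint(\ST(\cdot))$ reads as part of the full state) guarantees that this recovery read returns a set of states containing one whose $\cvvar[i]$ entry is $\ge \incvar$; line~\ref{alg:line:updatecv} then makes $p_j$'s $\State.\cvvar[i] \ge \incvar$, and this happens before $\stale$ is cleared on line~\ref{alg:line:clearstale}, i.e., before $p_j$ becomes active again. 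From that point until $t$, replica $p_j$ is continuously up and runs no further $\onrestart$ block, so by the monotonicity observation $\State.\cvvar[i] \ge \incvar$ still holds at $t$. The delicate point throughout is the tension between a rollback, which genuinely wipes $p_j$'s crash-vector entry, and the need to reconstruct $\ge \incvar$ from other replicas: making this work hinges on $\wrq^\ast$ being complete and durable, on $\finish[\wrq^\ast]$ preceding $\tresp$ (hence $\trstate$), and on the recovery read on line~\ref{alg:line:rdbegin} actually recovering the $\State.\cvvar[i]$ entry -- all of which follow from Lemmas~\ref{lemma:recoverywritesafe} and~\ref{prop:read} together with the incarnation-number bookkeeping mentioned above.
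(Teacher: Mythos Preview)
Your proof is correct and follows essentially the same route as the paper: establish $\State.\cvvar[i]\ge\incvar$ at $p_j$ at time $\tresp$ via line~\ref{alg:line:rdstinc}, use monotonicity of $\cvvar$ entries while $p_j$ is up, and for the crash case apply Lemma~\ref{prop:read} to the durable write on line~\ref{alg:line:cvwrite} (durability coming from Lemma~\ref{lemma:recoverywritesafe}) so that $p_j$'s recovery read at $\trstate>\finish[\wrq^\ast]$ restores $\State.\cvvar[i]\ge\incvar$ before $p_j$ becomes active again. The paper packages the same argument as a proof by contradiction on the earliest violating time, but the content is identical.

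One simplification: your first paragraph works harder than necessary to pin down $\wrq^\ast$. Between line~\ref{alg:line:recoverymyincincrement} and line~\ref{alg:line:rdbegin}, $p_i$'s own $\State.\cvvar[i]$ cannot rise above $\nextinc$: any value reaching $p_i$ via line~\ref{alg:line:incmyinc} originates from some replica's $\State.\cvvar[i]$ at an earlier time, and by Proposition~\ref{prop:inc} (valid for all $T$ once Lemma~\ref{lemma:recoverywritesafe} is in hand) every such value is strictly below $\nextinc$. Hence $\incvar=\nextinc$ exactly, and $\wrq^\ast$ is simply the call on line~\ref{alg:line:cvwrite} in the current $\onrestart$ block; no bookkeeping about ``orphan'' values from incomplete invocations is needed. (Your parenthetical claim that values absorbed via line~\ref{alg:line:incmyinc} must originate from a \emph{completed} line~\ref{alg:line:cvwrite} is not quite right, since line~\ref{alg:line:cvset} fires on receivers before the sender's $\wrint$ completes; fortunately this does not matter for the reason just given.)
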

\begin{proof}
  Assume the contrary: there exists some time $t' > \tresp$ such that $p_j$ is
  active at $t'$, but has $\State.\cvvar[i] < \incnum$. Define $\tmin$ to be the
  earliest such $t'$. Since state values do not decrease in our algorithm while a
  replica is up, and $p_j$ has $\State.\cvvar[i] = \incnum$ at time $\tresp$
  (line~\ref{alg:line:rdstinc}), replica $p_j$ must crash at some time after
  $\tresp$ and before $\tmin$. Let $\tlc$ be the latest time when $p_j$ crashes
  before $\tmin$. Since $p_j$ is active at time $\tmin$, it must complete its
  recovery code between $\tlc$ and $\tmin$. Let $\tlrstate$ and $\tstale$ be the
  times when $p_j$ starts executing lines~\ref{alg:line:rdbegin}
  and~\ref{alg:line:clearstale}, respectively, during this recovery. Then
  $\tresp < \tlc < \tlrstate < \tstale \le \tmin$.
	
  Replica $p_i$ must finish the call $\wrq$ to $\wrint(\cv(i, g))$ at line~\ref{alg:line:cvwrite} before 
  starting $\rdq$, so
  $\finish[\wrq] < \tresp < \tlrstate$. The call $\rdq'$ to $\rdint$ at
  line~\ref{alg:line:rdbegin} at time $\tlrstate$ by $p_j$ starts after
  $\finish[\wrq]$. Since $\wrq$ is durable by Lemma~\ref{lemma:recoverywritesafe}, $\rdq'$ must 
  return a value $\ge \incnum$ for $\State.\cvvar[i]$ by Lemma~\ref{prop:read}. Then $p_j$ must 
  have $\State.\cvvar[i] \ge \incnum$ at
  time $\tstale$. Since $p_j$ does not crash after $\tlc$ and before $\tmin$,
  this replica must have $\State.\cvvar[i] \ge \incnum$ at time $\tmin$, which
  contradicts the definition of $\tmin$. This contradiction shows the required.
\end{proof}

\begin{lemma}
\label{lemma:registerwritecc}
For all completed calls $\wrq$ to $\wrint(\req)$ that write some $x \in \tsset$,
the set of responses $\acks[\wrq]$ is crash-consistent.
\end{lemma}
\begin{proof}
  Consider any replica $p_i \in Q[\wrq]$ that invokes
  $\rdq = \rdint(\ST(\State.\cvvar[i]))$ at time
  $\trstate > \resptime[\wrq, p_i]$. Since $p_i$ is active at time
  $\resptime[\wrq, p_i]$ and inactive at time $\trstate$, then $p_i$ must crash
  at time $\tcrash$ such that $\resptime[\wrq, p_i] < \tcrash <
  \trstate$. Suppose $p_i$ has $\State.\cvvar[i] = \incnum_1$ at time
  $\resptime[\wrq, p_i]$ and has $\State.\cvvar[i] = \incnum_2$ at time
  $\trstate$. Since $\trstate > \resptime[\wrq, p_i]$ we must have $\incnum_1 < 
  \incnum_2$ according to Lemma~\ref{prop:inc}.
	
  Now assume that $\acks[\wrq]$ is not crash-consistent, and thus, there exists
  some replica $p_j \in Q[\wrq]$ that responds to $\rdq$ at a time
  $\tresp < \resptime[\wrq, p_j]$. By Lemma~\ref{lemma:incnumpersist} for
  $\incnum = \incnum_2$, replica $p_j$ must have
  $\State.\cvvar[i] \ge \incnum_2$ at time $\resptime[\wrq, p_j]$. Then,
  according to lines~\ref{alg:line:wrint-wait}--\ref{alg:line:filter-x}, $\wrq$
  must have $\tmpcv[i] \ge \incnum_2$ when executing
  line~\ref{alg:line:filter-x} for the last time. But $p_i$'s response in
  $\acks[\wrq]$ sent at time $\resptime[\wrq, p_i]$ has incarnation number
  $\incnum_1 < \incnum_2$. Hence, $\wrq$ must discard this response,
  contradicting the definition of $\resptime[\wrq, p_i]$. This contradiction
  shows the required.
\end{proof}

\begin{lemma}
\label{lemma:registerwritesafe}
All completed calls to $\wrint(\req)$ that write some $x \in \tsset$ are
durable.
\end{lemma}
\begin{proof}
  Assume the contrary: there exists a call $\wrq$ to $\wrint$ that writes some
  $x \in \tsset$ to $\State.\var$, a replica $p_i \in Q[\wrq]$, and a time
  $t \ge \resptime[\wrq, p_i]$, such that $p_i$ is active at time $t$, but has
  $\State.\var < x$. Define $\tmin$ to be the earliest such $t$.  Since state
  values do not decrease in our algorithm while a replica is up, and $p_i$ has
  $\State.\var \ge x$ at time $\resptime[\wrq, p_i]$, replica $p_i$ must crash
  at some time after $\resptime[\wrq, p_i]$ and before $\tmin$.  Let $\tlc$ be
  the latest time when $p_i$ crashes before $t$. Since $p_i$ is active at time
  $\tmin$, it must complete its recovery code between $\tlc$ and $t$.  Let
  $\tlrstate$ and $\tstale$ be the times when $p_j$ starts executing
  lines~\ref{alg:line:rdbegin} and~\ref{alg:line:clearstale}, respectively,
  during this recovery.  Then
  $\resptime[\wrq, p_i] < \tlc < \tlrstate < \tstale \le \tmin$.
	
  The call $\rdq$ to $\rdint$ at line~\ref{alg:line:rdbegin} at time $\tlrstate$
  by $p_i$ must contact $\rdqsize$ replicas out of all replicas except $p_i$,
  since $p_i$ is not active before $\rdq$ finishes executing. Since read and
  write quorums intersect, there must exist some replica $p_j \in Q[\wrq]$ that
  responds to $\rdq$, such that $p_j \neq p_i$. Since only active replicas
  repond to reads, $p_j$ must be active when responding to $\rdq$ at time
  $\tresp$, and we must have $\tresp < \tstale \le \tmin$. By
  Lemma~\ref{lemma:registerwritecc} $\tresp > \resptime[\wrq, p_j]$, so $p_j$
  has $\State.\var \ge x$ at time $\tresp$, since $p_j$ is active at this time
  and $\tresp < \tmin$. As a result, $p_j$'s response to $\rdq$ has
  $\State.\var \ge x$, so replica $p_i$ will also have $\State.\var \ge x$ at
  time $\tstale$. Since $p_i$ does not crash after $\tstale$ and before $\tmin$,
  this replica must have $\State.\var \ge x$ at time $\tmin$, which contradicts
  our definition of $\tmin$. The contradiction shows the required.
\end{proof}

We can now prove that \algname{} is safe assuming Incarnation 
Number Monotonicity (Property~\ref{lem:monotone}).
\begin{proof}[Proof of Theorem~\ref{lem:dynamic-upper-ass}($\mathsf{A}$)]
  Lemmas~\ref{prop:read} and~\ref{lemma:registerwritesafe} imply the Real-Time
  Order Property (Property~\ref{lem:rtp}). Then the rest of the linearizability
  proof follows the one given in \S\ref{app:safety-high}.
\end{proof}

\subsection{Proof of Safety for the Full $\algname$ Algorithm}
\label{app:safety-full}

We now prove that the complete algorithm in Figure~\ref{fig:pseudo-dyn},
including the code in blue, satisfies the safety requirements of
Theorem~\ref{thm:upper-dynamic} and complete the proof. Note that when
considering the complete algorithm including the code in blue, calls to $\wrint$
that write some $x \in \incset$ include $\wrint(\precv(\_, x))$ in addition to
$\wrint(\cv(\_, x))$ that was considered in the previous section.

\begin{lemma}
\label{lem:inc}
Given a time $T$, assume that all completed calls $\wrq$ to $\wrint$ that write
some $x \in \incset$ and have $\finish[\wrq] < T$ are durable. If a replica
$p_i$ assigns $\incnum$ to $\State.\cvvar[i]$ at
line~\ref{alg:line:recoverymyincincrement} at time $t \le T$, and $p_i$ assigns
$\incnum'$ to $\State.\cvvar[i]$ at line~\ref{alg:line:recoverymyincincrement}
at time $t' > t$, then $\incnum' > \incnum$.
\end{lemma}
\begin{proof}
  Before $\incnum \in \incset$ is assigned to $\State.\cvvar[i]$ at
  line~\ref{alg:line:recoverymyincincrement} at time $t$,
  $\wrint(\precv(i, \incnum))$ at line~\ref{alg:line:recoverymyincwrite} must
  return. Let $\wrq$ be the corresponding call to $\wrint$. Since
  $\finish[\wrq] < t \le T$ and $\incnum \in \incset$, the write $\wrq$ is durable
  by our assumption. Hence, by Lemma~\ref{prop:read}, before $p_i$ assigns
  $\incnum'$ to $\State.\cvvar[i]$ at line~\ref{alg:line:recoverymyincincrement}
  at time $t' > t$, $p_i$ reads a set at line~\ref{alg:line:recoverymyincread}
  that contains a value $\ge \incnum$. Then by lines~\ref{alg:line:choosenewinc}
  and~\ref{alg:line:recoverymyincincrement}, we must have $\incnum' > \incnum$.
\end{proof}


\begin{lemma}
\label{prop:inc_pre}
Given a time $T$, assume that all completed calls $\wrq$ to $\wrint$ that write
some $x \in \incset$ and have $\finish[\wrq] < T$ are durable. If a replica
$p_i$ has $\State.\cvvar[i] = \incnum$ at time $t \le T$, and $p_i$ assigns
$\incnum'$ to $\State.\cvvar[i]$ at line~\ref{alg:line:recoverymyincincrement}
at time $t' > t$, then $\incnum' > \incnum$.
\end{lemma}
\begin{proof}
  The only place a new value is introduced to $\State.\cvvar[i]$s is as part of
  $p_i$'s recovery at line~\ref{alg:line:recoverymyincincrement}, since the only
  other places that $\State.\cvvar[i]$ might change 
  (lines~\ref{alg:line:updatecv},~\ref{alg:line:rdstinc},~\ref{alg:line:incmyinc},
  and~\ref{alg:line:cvset}) use values already present at other replicas. This
  means that replica $p_i$ must assign $\incnum$ to $\State.\cvvar[i]$ at
  line~\ref{alg:line:recoverymyincincrement} at some time $t'' \le t \le T$.  We
  also have $t'' \le t < t'$, so according to Lemma~\ref{lem:inc}, we must have
  $\incnum < \incnum'$.
\end{proof}

We can now prove the safety part of the upper bound in
Theorem~\ref{thm:upper-dynamic}.

\begin{lemma}
	\label{lem:alg-safe}
  The algorithm in Figure~\ref{fig:pseudo-dyn} is an MWMR register that is
  atomic under $\CRR$.
\end{lemma}
\begin{proof}
  Lemma~\ref{claim:read-cv-read-state} is proven for the full algorithm by
  replacing Lemma~\ref{prop:inc} with Lemma~\ref{prop:inc_pre} for
  $T = \rdcv[\wrq]$ in its proof. Using Lemma~\ref{claim:read-cv-read-state}, we
  can prove Lemmas~\ref{lem:write-cv-cc}--\ref{lemma:incnumpersist} as 
  before. Lemmas~\ref{lemma:recoverywritesafe} and~\ref{prop:inc_pre} imply 
  Lemma~\ref{prop:inc}, which then we use alongside 
  Lemma~\ref{lemma:incnumpersist} to prove 
  Lemmas~\ref{lemma:registerwritecc}--\ref{lemma:registerwritesafe} as 
  before. Finally, Lemmas~\ref{prop:read} and~\ref{lemma:registerwritesafe} imply 
  the Real-Time Order Property (Property~\ref{lem:rtp}), and the rest of the 
  linearizability proof follows the one given in \S\ref{app:safety-high}.
\end{proof}

  Note that Lemmas~\ref{lemma:recoverywritesafe} and~\ref{lem:inc} imply 
  Incarnation Number Monotonicity (Property~\ref{lem:monotone}) for the full 
  $\algname$ algorithm.



\subsection{Proof of Liveness for $\algname$}
\label{app:liveness-black}

In the following, we will refer to the version of the $\algname$ protocol
as \emph{basic} if it does not include the blue code, and as \emph{enhanced},
otherwise. 
We will write $\State.\var_i(t)$ to denote the value of $\State.\var$ at a replica
$p_i$ at time $t$.

\begin{lemma}
	\label{lem:maxinc-never-crash}
	Let $\alpha$ be an execution of the $\algname$ protocol (basic or enhanced).
	Then for all replicas $p_i$ and times $t$, if a replica $p_i$ does not crash before $t$ in $\alpha$, then 
	for all $j$, and times $s<t$, $\State.\cvvar[i]_j(s)=0$.
\end{lemma}
\begin{proof}
	New values for $\State.\cvvar[i]$ at any replica can only be generated by $p_i$
	when it executes line~\ref{alg:line:recoverymyincincrement} of the $\onrestart$
	block. Since $p_i$ never executes $\onrestart$ before $t$ in $\alpha$, for all
	times $s<t$, the value 
	of $\State.\cvvar[i]_j(s)$ remains equal to its initial value $0$ at all replicas, 
	as required.
	\end{proof}

\begin{lemma}
	\label{lem:maxinc}
	Suppose that a replica $p_i$ executes the assignment statement in 
	line~\ref{alg:line:recoverymyincincrement} at time $t$ and remains 
	up until $t' > t$. 
	Then, for all times $s \in [t, t']$, $s' \le t'$, and replicas
	$p_j$, $\State.\cvvar[i]_j(s') \le \State.\cvvar[i]_i(s)$.
\end{lemma}

\begin{proof}
	Let $\incnum$ be the value assigned to $\State.\cvvar[i]_i$ 
	in line~\ref{alg:line:recoverymyincincrement} at time $t$. 
	Fix $s' \le t'$, and suppose that $\State.\cvvar[i]_j(s')=\incnum'$ for some
	replica $p_j$. Assume by contradiction that $\incnum' > \incnum$. 
	Then $\incnum'$ cannot be the initial value of $\State.\cvvar[i]$.
	By the structure of the code, any value (apart from the initial one) that 
	a process $p_j$ can adopt for its $\State.\cvvar[i]_j$ variable must be first 
	generated and assigned to $\State.\cvvar[i]_i$ by executing
	line~\ref{alg:line:recoverymyincincrement}. Let $t''$ be the time
	when this event occurs at $p_i$. Since $\incnum' > \incnum$, by 
	Incarnation Number Monotonicity (Property~\ref{lem:monotone}),
	$t'' > t$. 

	Since the code in line~\ref{alg:line:recoverymyincincrement}
	is a part of the $\onrestart$ block, and therefore, can only be executed when 
	$p_i$ crashes and restarts, and $p_i$ is up for the entire
	duration of $[t, t']$, we must have $t'' > t'$. Therefore,
	$p_j$ can only learn of $\incnum'$ by receiving a message from another process 
	in either one of the lines~\ref{alg:line:updatecv},~\ref{alg:line:rdstinc},~\ref{alg:line:incmyinc},
	or~\ref{alg:line:cvset}, which implies $s' \ge t'' > t'$. Since
	$s' \le t'$, this is a contradiction. 
\end{proof}

\begin{lemma}
	\label{lem:livelem-enh}
	For every execution $\alpha$ of the enhanced $\algname$ algorithm and all replicas $p_i$, if 
	$p_i$ is eventually up in $\alpha$,
	then there exists a time $T$ such that for all times $t > T$, 
	$\State.\cvvar[i]_i(t)=\State.\cvvar[i]_i(T)$.
\end{lemma}

\begin{proof}
	Assume towards a contradiction that for all times $T$, there exists 
	a time $t > T$ such that $\State.\cvvar[i]_i(t)\neq \State.\cvvar[i]_i(T)$. 
	Since $p_i$ is eventually up in $\alpha$, it can only execute the $\onrestart$
	procedure finitely many times. Since new values can only be assigned 
	to $\State.\cvvar[i]_i$ as a result of executing 
	line~\ref{alg:line:recoverymyincincrement} of $\onrestart$,
	there exist finitely many values which $\State.\cvvar[i]_i$
	can be assigned to in $\alpha$. Therefore, there must exist infinitely many times $T$ and $t$ 
	such that $T < t$, and $\State.\cvvar[i]_i(t) < \State.\cvvar[i]_i(T)$.
	However, by the structure of the code,  $\State.\cvvar[i]_i$ can never decrease while $p_i$ is up, 
	and $p_i$ is eventually up. A contradiction. 
\end{proof}





\paragraph*{Basic crash consistency results.}
For a message $m$ sent by a process (client or replica) in an execution of
$\algname$, we write $\proc(m)\in \allprocs$ to denote the sender of $m$,
and $\id(m)$ to denote the message identifier associated with $m$. For 
$m=\WRITEACK(\_, \incvar, \tmpcv, \_)$, 
we let $\inc(m)=\incvar$ and $\crv(m)=\tmpcv$, and 
for $m=\READACK(\_, x, \_)$ such that $x$ is a crash vector,
we let $\crv(m)=x$.

For an invocation $a\in \{\wrint(\req), \rdint(\req)\}$ 
in an execution of $\algname$, we let $\proc(a)\in \allprocs$  
denote the process (client or replica) invoking $a$ and 
$\reqq(a)=\req$. For an invocation $\rdq=\rdint(\_)$, we let $\id(\rdq)$
denote the message identifier generated at line~\ref{alg:line:id-gen-rdint}. 
For an invocation $\wrq=\wrint(\req)$, we let $\id(\wrq)$ denote the 
message identifier generated in line~\ref{alg:line:x-init},  and
$\wacks(\wrq, t)$ the set of $\WRITEACK$ messages $m$ with $\id(m)=\id(\wrq)$
received by $\proc(\wrq)$ by time $t$. 
Additionally, if $\req \in \{\precv(\_, \_), \cv(\_, \_)\}$, we let 
$\racks(\wrq, t)$ denote the union of all sets of $\READACK$ messages
collected by operations $\rdint(\cv)$ that are invoked by $\wrq$ and complete by time $t$.
%
%
%
%
%
Below, we introduce a predicate $\CC$, which, given an invocation $\wrq=\wrint(\req)$, determines whether
the set of $\WRITEACK$ responses received by $\proc(\wrq)$
from a given replica $p_j$ is crash-consistent.
\begin{itemize}
	\item[] \raggedright $\CC(\wrq, j, t)$:
	if $\proc(\wrq) \neq p_j$, then
	$\exists m \in \wacks(\wrq, t)$ such that $\proc(m)=p_j$, and
	it holds:
	\emph{(i)} $\forall m' \in \wacks(\wrq, t).\, \crv(m')[j] \le \inc(m)$
	if $\reqq(\wrq)\in \{\tsval(\_, \_)\}$, and
	\emph{(ii)} $\forall m' \in \racks(\wrq, t).\,
	\crv(m')[j] \le \inc(m)$, otherwise.



\end{itemize}
\begin{proposition}
	\label{prop:cc}
	Consider an invocation $\wrq=\wrint(\_)$, and 
	let $Q$ be the set of replicas initialized  by $\wrq$ at line~\ref{alg:line:x-init}.
	Then, for all times $t$ at which $\wrq$ reaches line~\ref{alg:line:filter-x},
	we have: $p_j \in Q$ upon completion of line~\ref{alg:line:filter-x}
	iff $\CC(\wrq, j, t)$.
\end{proposition}

The next lemma asserts that every $\WRITEACK$ reply received from a process
that has not previously crashed is crash-consistent. 
\begin{lemma}
	\label{lem:livelem-cc-init}
	Consider a $\wrint$ invocation $\wrq$ in an execution $\alpha$
	of $\algname$ (basic or enhanced). Then, for all 
	times $t$ in $\alpha$, if there exists $m\in \wacks(\wrq, t)$
	such that $p_j=\proc(m)\neq \proc(\wrq)$ and $\proc(m)$
	never crashes at or before $t$ in $\alpha$, then $\CC(\wrq, j, t)$ holds.
\end{lemma}

\begin{proof}
	Consider $m \in \wacks(\wrq, t)$, and  let $s$ be the time at which $m$ was sent by $p_j$.
	Thus, $s \le t$ and, by the code of the $\WRITE$ message handler, 
	we also have $\inc(m) = \State.\cvvar[j]_j(s)$.
	Consider $m' \in \wacks(\wrq, t) \cup \racks(\wrq, t)$, and let $s'$
	be the time at which $m'$ is sent by $p_k=\proc(m')$.
	Then, $\crv(m')[j]=\State.\cvvar[j]_k(s')$.
	Since $p_j$ does not crash in the interval $[0, t]$, 
	by Lemma~\ref{lem:maxinc-never-crash}, $\State.\cvvar[j]_j(s)=\State.\cvvar[j]_k(s')=0$.
	Hence,
	$$
		\inc(m)=\State.\cvvar[j]_j(s)=\State.\cvvar[j]_k(s')=\crv(m')[j]=0,
	$$ 
	and the result follows.
\end{proof}

The following lemma shows that every $\WRITEACK$ reply received from a process
after it has executed the assignment in line~\ref{alg:line:recoverymyincincrement}, 
and before it crashes again, is crash-consistent.
\begin{lemma}
	\label{lem:livelem-cc-simple}
	Consider a $\wrint$ invocation $\wrq$ in an execution $\alpha$
	of $\algname$ (basic or enhanced). Suppose that a replica
	$p_j$ executes the assignment in line~\ref{alg:line:recoverymyincincrement}
	at time $t$ in $\alpha$ and remains up until $t' \ge t$.
	Then if there exists $m\in \wacks(\wrq, t')$
	sent by $p_j=\proc(m)\neq \proc(\wrq)$ in $[t, t']$, then 
	$\CC(\wrq, j, t')$ holds.
\end{lemma}

\begin{proof}
	Consider $m \in \wacks(\wrq, t')$, and let 
	$s\in [t, t']$ be the time at which 
	$m$ was sent by $p_j$. By the code of the $\WRITE$ message handler, we have
$$
	s \in [t, t'] \wedge \inc(m) = \State.\cvvar[j]_j(s).
$$
	Consider $m' \in \wacks(\wrq, t') \cup \racks(\wrq, t')$, and let $s'$
	be the time at which $m'$ is sent by $p_k=\proc(m')$.
	Thus, we have:
	$$
		s' \le t' \wedge \crv(m')[j]=\State.\cvvar_k[j](s').
	$$
	Then since $p_j$ is up in $[t, t']$, Lemma~\ref{lem:maxinc} implies
	$\State.\cvvar_k[j](s') \le \State.\cvvar_j[j](s)$, and therefore,
	$\crv(m')[j] \le \inc(m)$. Hence, $\CC(\wrq, j, t')$ holds, as required.
\end{proof}

The next lemma addresses the case of crash-consistency for the $\WRITEACK$ responses received from 
replicas that have not necessarily reached the assignment in line~\ref{alg:line:recoverymyincincrement}.
This result is essential to establish progress for the enhanced $\algname$ where a replica may restart without 
knowing its prior incarnation number.
\begin{lemma}
	\label{lem:livelem-cc}
	Consider a $\wrint$ invocation $\wrq$ in an execution $\alpha$
	of enhanced $\algname$ such that $\reqq(\wrq)\in \{\precv(\_, \_), \cv(\_, \_)\}$. 
	Suppose that $\proc(\wrq)$ sends infinitely many $\WRITE$ messages $m$ with $\id(m)=\id(\wrq)$
	to a replica $p_j\neq \proc(\wrq)$ in $\alpha$, and assume that $p_j$ is eventually up.
	Then there exists a time $t$ such that $\CC(\wrq, j, t')$ holds
	for all times $t' \ge t$.
	%
	%
\end{lemma}

\begin{proof}
	Since $p_j$ is eventually up in $\alpha$, there exists the time $T_j>0$ at which 
	$p_j$ restarts for the last time in $\alpha$.
	Since $\proc(\wrq)$ sends infinitely many $\WRITE$ 
	messages $m$ with $\id(m)=\id(\wrq)$ to $p_j$, $\proc(\wrq)$ is permanently 
	up from the time it sends first such message onwards.
	Let $T\ge T_j$ be the earliest time in $\alpha$ such that: 
	\emph{(i)} both $\proc(\wrq)$ and $p_j$ are permanently up from $T$ onwards, 
	and $\emph{(ii)}$ for all times from $T$ onwards, the value of $\State.\cvvar[j]_j$ does not 
	change. This time $T$ exists due to the fact that $\proc(\wrq)$ and $p_j$ are eventually up in 
	$\alpha$ and Lemma~\ref{lem:livelem-enh}. 

	By~\emph{(i)} and the channel reliability assumptions, after $T$,
	every message sent by either $\proc(\wrq)$ to $p_j$ or by $p_j$ to 
	$\proc(\wrq)$ is eventually received by its destination.
	Since $\proc(\wrq)$ sends infinitely 
	many $\WRITE$ messages with $\id(\wrq)$ to $p_j$, this implies that  
	there exists a time $s\ge T$ and a $\WRITE$  message $m$ with $\id(m)=\id(\wrq)$
	such that $p_j$ receives $m$ from $\proc(\wrq)$ at $s$.
	By the code of the $\WRITE$ handler, upon receiving a $\WRITE$ message carrying either 
	$\precv$ or $\cv$ request, a replica $p_j$ always replies with a $\WRITEACK$ message with the same identifier
	and the incarnation number equal to $\State.\cvvar[j]$. Thus,
	there exists a time 
	$$
	t \ge s \ge T
	$$
	at which $\proc(\wrq)$ receives a
	$\WRITEACK$ message $\hat{m}$ with $\id(\hat{m})=\id(m)=\id(\wrq)$ from $p_j$ such that 
	$$
	\hat{m} \in \wacks(\wrq, t) \wedge \inc(\hat{m}) = \State.\cvvar[j]_j(s).
	$$
	We show that 
	\begin{equation}
	\label{eq:lem-no-change}
	\forall t'\ge t.\, \forall m'\in \racks(\wrq, t').\, \inc(\hat{m}) \ge \crv(m')[j].
	\end{equation}
	Assume the contrary and let $u \ge t$ be the earliest time such that 
	$\inc(\hat{m}) < \crv(m')[j]$ for some $m'\in \racks(\wrq, u)$. 
	Then $m'$ was received by a $\rdint(\cv)$ operation that completes at $u$
	and returns a set of crash vectors $S$. Since $m'$ was used to compute
	$S$, $\crv(m')\in S$. Since $\proc(\wrq)$ sends infinitely
	many $\WRITE$ messages to $p_j$, $\proc(\wrq)$
	must reach line~\ref{alg:line:sendwrite} after $\rdq$ returns.
	Let $m''=\WRITE(\id(\wrq), \_, \tmpcv[j])$ be a $\WRITE$ message
	sent to $p_j$ at line~\ref{alg:line:sendwrite}.
	By the code in lines~\ref{alg:line:cvfor}--\ref{alg:line:rdcv},
	$\tmpcv[j] \ge \crv(m')[j]$. Since $u \ge t$,
	there exists a time 
	$$
		s' > u \ge t
	$$
	at which $m''$ is received by $p_j$.
	By the code handling $\WRITE$ messages with $\precv$ or $\cv$ requests,
	$$
		\State.\cvvar[j]_j(s') \ge  \crv(m')[j] > \inc(\hat{m}) = \State.\cvvar[j]_j(s).
	$$
	Since $s' > u \ge t \ge s \ge T$, this is a contradiction to~\emph{(ii)}. Thus,~(\ref{eq:lem-no-change}) holds,
	which given that $\reqq(\wrq) \in \{\precv(\_, \_), \cv(\_, \_)\}$, implies
	that $\CC(\wrq, j, t')$ holds for all $t' \ge t$, as required.
\end{proof}

\paragraph*{Liveness in good-case executions.}
Given an execution $\alpha$ of the $\algname$ (basic or enhanced), we let
\begin{equation}
\hat{M}_\alpha = \begin{cases}
  6,  & \text{if } \alpha \text{ is an execution of the basic \algname}, \\
  12, & \text{otherwise.}
\end{cases}
\label{eq:hat-m}
\end{equation}
Intuitively, $\hat{M}_\alpha$ represents the \emph{minimum} number of message delays required for a correct replica
to complete the $\onrestart$ procedure in the corresponding version of the $\algname$ algorithm.
In the following, we will often omit the subscript from $\hat{M}$ if it is clear from the context.
We prove that in every good-case execution where the replica failures are separated by 
at least $\hat{M}\Delta$, a restarting replica is always able
to complete its $\onrestart$ block 
before any other replica may fail.
To establish this result, we will first prove the following lemma:
\begin{lemma}
Let $\alpha$ be an execution of $\algname$ (basic or enhanced) and suppose that
$n > 2d+1$ where $d$ is the maximal number of replicas that 
are not eventually up in $\alpha$.
Suppose that any two replica crashes occurring 
in $\alpha$ are more than $\hat{M}\Delta$ apart, and at all times,
there is at most one replica, which is up, but not active. If a replica $p_i$
crashes and restarts at time $t$ in $\alpha$, then the following holds:
\begin{enumerate}
	\item[\normalfont{(i)}] \label{lem:m-dyn-rd} every $\rdint$ invoked by $p_i$ at $t'\in [t, t+\hat{M}\Delta - 2\Delta]$
	completes by $t'+2\Delta$; and

	\item[\normalfont{(ii)}] \label{lem:m-dyn-wr} every $\wrint(\cv(\_, \_))$ or $\wrint(\precv(\_, \_))$ 
	invoked by $p_i$ at $t' \in [t, t+\hat{M}\Delta - 4\Delta]$
	completes by $t'+4\Delta$.
\end{enumerate}
\label{lem:m-dyn}
\end{lemma}

\begin{proof}
Let $D(t)$ denote the number
of replicas that have crashed without restarting by time $t$ in $\alpha$.
We first show (i). By the lemma's premise, $p_i$ is the only replica which is up but
not active at $t$. Since no replica fails until $t+\hat{M}\Delta$, all replicas in the set 
$S_r = \allreps \setminus (D(t) \cup \{p_i\})$ remain active within 
the interval $[t, t+\hat{M}\Delta]$. Since 
$n \ge 2d+2$ and $|D(t)|\le d$, we have $|S_r|\ge 2d + 2 - d - 1 = d+1$. 
Thus, $S_r$ contains a read quorum, which remains available for the entire duration of
$[t, t+\hat{M}\Delta]$. By the code, every $\rdint$ invoked at $t'$ completes within 
$2\Delta$ provided a read quorum is available within $[t', t'+2\Delta]$. Therefore,
$t' \in [t, t+\hat{M}\Delta-2\Delta]$ and $\hat{M}>2$ imply that any
$\rdint$ invoked at $t'$ has a read quorum available for the duration 
of its execution, and therefore, will terminate by $t'+2\Delta$, as needed.

To prove (ii),
consider an invocation $\wrq $ of $\wrint(\cv(\_, \_))$ or $\wrint(\precv(\_, \_))$ by $p_i$ at 
$t' \in [t, t + \hat{M}\Delta - 4\Delta]$.
%
Then at $t'$, $p_i$ sends a $\WRITE$ message $m$
with $\id(m)=\id(\wrq)$ to all replicas. Let 
$S_w=\allreps \setminus D(t)$ be the set of all replicas that are up at $t$. 
Since no replicas crash in $[t, t + \hat{M}\Delta]$ and $[t', t' + \Delta] \subseteq [t, t + 
\hat{M}\Delta]$, by the channel reliability assumptions, every 
replica in $S_w$ receives $m$ by $t' + \Delta$. Additionally, since the 
$\reqq(m)$ is either $\cv$ or $\precv$, every replica 
$p_j \in S_w$ replies with a $\WRITEACK$ message $m'$ with $\id(m')=\id(m)=\id(\wrq)$
to $p_i$ upon receiving $m$. Since all replicas in 
$S_w$ are up in $[t', t' + 2\Delta]$, there exists a time $s \le t' + 2\Delta$ at which 
$p_i$ collects $\WRITEACK$ responses with the identifiers equal to $\id(\wrq)$
from a set $S_w' \subseteq S_w$ such that $|S_w'| \ge n-d$. It then invokes $\rdint$, which by \emph{(i)}, 
terminates at 
$$
s' \le s + 2\Delta \le t' + 4\Delta.
$$
We now argue that for all $p_j\in S_w'$, $\CC(\wrq, j, s')$. 
There are two cases to consider: If $p_j = p_i$, then $\CC(\wrq, j, s')$ trivially holds. 
Otherwise, let $T_j$ be the last time $p_j$ restarted before $s'$. Since $p_j$ is active at $t$, and 
no replica crashes until $t+\hat{M}\Delta$, $T_j < t$ and $p_j$ remains active for the whole 
duration of $[T_j, t+\hat{M}\Delta]$. Thus, $p_j$ executes the assignment in 
line~\ref{alg:line:recoverymyincincrement}
at $T_j < t$ and remains active until $s' \ge T_j$. Furthermore, by the argument 
above, there exist a $\WRITEACK$ message $m' \in \wacks(\wrq, s')$
with $\proc(m')=p_j$ and $\id(m')=\id(\wrq)$ that was sent by $p_j$ at time $s'' \in [t', s']$. 
Since $t' \ge t \ge T_j$, $s'' \in [T_j, s']$. Hence, by Lemma~\ref{lem:livelem-cc-simple}, 
$\CC(\wrq, j, s')$ holds. 

Since $p_i$ reaches line~\ref{alg:line:filter-x} by time $s'$ and
$\CC(\wrq, j, s')$ holds for all $p_j \in S_w'$, 
by Proposition~\ref{prop:cc}, we have that upon completion of 
line~\ref{alg:line:filter-x}, for all $p_j\in S_w'$,  $p_j\in Q$. Hence, at $s'$,
$S_w'\subseteq Q$, and therefore, $|Q|\ge n-d=q_w$, which implies that $p_i$ exits the loop
in lines~\ref{alg:line:while}--\ref{alg:line:wrint-ret-a2plus}.
Hence, $\wrq$ returns at $s'\le t'+4\Delta$, as required.
\end{proof}
By the structure of the code, each execution of $\onrestart$ consists of the following
sequences of $\rdint$ and $\wrint$ calls interleaved with local steps: 
$\wrint$ followed by $\rdint$ for the basic $\algname$, and
$\rdint$, $\wrint$, $\wrint$, and $\rdint$ for the enhanced one. 
Hence, from Lemma~\ref{lem:m-dyn} we can conclude:
\begin{corollary}
Let $\alpha$ be an execution of $\algname$ (basic or enhanced) and suppose that
$n > 2d+1$ where $d$ is the maximal number of replicas that 
are not eventually up in $\alpha$.
Suppose that any two replica crashes occurring 
in $\alpha$ are more than $\hat{M}\Delta$ apart, and at all times,
there is at most one replica, which is up, but not active. If a replica $p_i$
crashes and restarts at time $t$ in $\alpha$, then $p_i$ completes its $\onrestart$ block
by $t+\hat{M}\Delta$.
\label{cor:m-dyn}
\end{corollary}

\paragraph*{Liveness of \rdint\/ in $\Fd{\ndown,\ncrash,\hat{M}}$-valid executions.}
We first show that in every $\Fd{\ndown,\ncrash,\hat{M}}$-valid execution of $\algname$
there are always sufficiently many active replicas to ensure the existence
of an available read quorum. To establish this result, we first prove that at each point 
in time in every good-case execution of $\algname$, there is at most one replica that is
up but not active.  
\begin{lemma}
	Let $\alpha$ be an execution of $\algname$ (basic or enhanced) and suppose that
	$n > 2d+1$ where $d$ is the maximal number of replicas that 
	are not eventually up in $\alpha$.
	If any two replica crashes occurring 
	in $\alpha$ are more than $\hat{M}\Delta$ apart, then at all times, 
	there is at most one replica which is up but not active.
	\label{lem:dyn-act}
\end{lemma}
\begin{proof}
	Assume by contradiction that the result is false. Let $\alpha'$ be the minimal
	prefix of $\alpha$ in which there are exactly $2$ replicas that are up, but not active.
	Since the number of active replicas can only decrease when some replica crashes
	and restarts, the last event $e$ in $\alpha'$ is a crash and restart of some replica $p_i$. Let
	$t$ be the time associated with $e$. 

	Since there are exactly two replicas which are 
	up, but not active in $\alpha'$, there exist a time $t' \le t$ and a replica $p_j\neq p_i$
	such that it holds: \emph{(i)} $p_j$ crashed and restarted at $t'$, 
	\emph{(ii)} $p_j$ is up in $[t', t]$, and \emph{(iii)} $p_j$ has not completed
	the execution of $\onrestart$ initiated at $t'$. By the lemma's premise, 
	$0 \le t' < t - \hat{M}\Delta$. Therefore, $p_j$ does not 
	complete its $\onrestart$ block by $t' + \hat{M}\Delta$. However,
	since $p_j$ is the only replica that is up, but not active before $t$, this is
	a contradiction to Corollary~\ref{cor:m-dyn}.
\end{proof}
We now generalize the  above result  for all $\Fd{\ndown,\ncrash,\hat{M}}$-valid
executions of $\algname$ in which the number of replicas satisfies $n \ge \ndown + \ncrash + 1$.
\begin{lemma}
	Let $\alpha$ be an execution of $\algname$ (basic or enhanced) and $\ups$ be 
	the set of replicas, which are eventually up in $\alpha$. Then for all
	$\ndown$ and $\ncrash$, if $n \ge \ndown + \ncrash + 1$, and 
	$\Fd{\ndown,\ncrash,\hat{M}}(\alpha)$ holds, then at all times, 
	there are at least $n - \ncrash$ replicas in $\ups$, which are active.
	\label{lem:active-black}
\end{lemma}

\begin{proof}
	Since $\Fd{\ndown,\ncrash,\hat{M}}(\alpha)$ holds, by Definition~\ref{ass:what}, we have 
	$|U|\ge n-d$, and either the condition stated in part 1 or the one stated in part 2 of 
	Definition~\ref{ass:what} is satisfied in $\alpha$. If the latter holds, 
	then there exists $S \subseteq \ups$ such that $|S| \ge n - \ncrash$ and all replicas in 
	$S$ never crash in $\alpha$. Hence, all replicas in $S$ are permanently active
	in $\alpha$, and the lemma follows.

	Suppose that $\alpha$ satisfies the condition given in part 1 of Definition~\ref{ass:what}.
	Then, $\ncrash > \ndown$, and any two crash events occurring 
	in $\alpha$ are separated by more than $\hat{M}\Delta$. 
	Let $U_a(t)\subseteq U$ be the set of active replicas in $U$ at time $t$.
	Since $c > d$, we have $n \ge d + c + 1 > 2d+1$, and therefore, by Lemma~\ref{lem:dyn-act},
	there is at most $1$ replica in $U$ which is not active at any given time. Hence,
	$|U_a(t)| \ge n-d-1$, which, given that $c > d$, implies that
	$|U_a(t)| \ge n-c$, as needed. 
%
\end{proof}
We now show that every invocation $\rdq = \rdint(\_)$ is guaranteed 
to eventually complete in every $\Fd{\ndown,\ncrash,\hat{M}}$-valid 
execution of $\algname$, provided $n \ge \ndown + \ncrash + 1$ and 
the process invoking $\rdq$ (either a replica or a client) remains 
up from that point onward.
\begin{lemma}
	\label{lem:rdlive-black}
	Suppose that for all
	$\ndown$ and $\ncrash$, $n \ge \ndown + \ncrash + 1$, and let $\alpha$
	be an execution of $\algname$ (basic or enhanced) such that 
	$\Fd{\ndown,\ncrash,\hat{M}}(\alpha)$ holds. 
	If a process (replica or client) $p_i$ calls $\rdq=\rdint$ at time $t$, and 
	remains up from $t$ onwards, then $\rdq$ eventually returns.
\end{lemma}

\begin{proof}
	Let $U$ be the set of replicas that are eventually up in $\alpha$, and $T$
	be the time such that all replicas in $U$ are up from $T$ onwards.
	By Lemma~\ref{lem:active-black}, there exists a set $U_a \subseteq U$ of
	replicas such that $|U_a|\ge n-c$, and all replicas in $U_a$ 
	are active at time $T$. Since no replica in 
	$U$ crashes after time $T$, all replicas in $U_a$ are active from 
	$T$ onward.

	Consider an invocation $\rdq=\rdint(\_)$ in $\alpha$ and suppose that $\rdq$ 
	is still in progress at some time $t_0 \ge T$. We show that there exists a time
	$t \ge t_0$ such that $\rdq$ is guaranteed to return by $t$.
	By the code, $p_i$ 
	continues re-transmitting $\READ(\id(\rdq), \_, \_)$ messages to all replicas until 
	receiving $q_r=d+1$ $\READACK$ replies with the same identifier $\id(\rdq)$. Therefore, 
	there exists a time $t_1 \ge t_0 \ge T$ such that either $\rdq$ terminates before $t_1$ or
	$p_i$ initiates a new re-transmission round at $t_1$. Since $p_i$ and 
	all processes in $U_a$ are up from $t_1$ onwards, the channel
	reliability assumptions imply that every replica $p_j\in U_a$ eventually
	receives a message $m=\READ(\id(\rdq), \_, \_)$ sent to it by $p_i$ at $t_1$. Since $p_j$
	is active from $T\le t_1$ onwards (i.e., $\stale=\FALSE$ in line~\ref{alg:line:notstale-a2+}), 
	it responds to $m$ with $m'=\READACK(\id(\rdq), \_, j)$.
	Since both $p_j$ and $p_i$ are up from $T$ onwards, $p_i$ eventually receives $m'$.
	Hence, there exists a time $t_2\ge t_1 \ge t_0 \ge T$ such that if $\rdq$ does not
	return before $t_2$, then at $t_2$, $p_i$ receives $\READACK(\id(\rdq), \_, \_)$
	from all replicas $p_j\in U_a$ such that $|U_a|\ge n-c$. 
	Since $n \ge d + c + 1$, we have $n-c \ge d+1=q_r$, and therefore, the termination 
	condition of the retransmission loop (line~\ref{alg:line:rcvreadack-a2+}) validates at $t_2$.
	Hence, $\rdq$ is guaranteed to complete no later than at $t=t_2 \ge t_0$, as required.
\end{proof}

\paragraph*{Liveness of $\wrint$ in $\Fd{\ndown,\ncrash,\hat{M}}$-valid executions.}
We first show that in every $\Fd{\ndown,\ncrash,\hat{M}}$-valid execution 
of $\algname$, a process that invokes $\wrint$ and remains up from that 
point onwards is guaranteed to exit the inner loop 
(lines~\ref{alg:line:while2}--\ref{alg:line:sendwrite}) 
at each iteration of the outer loop 
(lines~\ref{alg:line:while}--\ref{alg:line:wrint-ret-a2plus}).
\begin{lemma}
	\label{lem:inner-wrint-live}
	Suppose that for all
	$\ndown$ and $\ncrash$, $n \ge \ndown + \ncrash + 1$. Let $\alpha$
	be an execution of $\algname$ (basic or enhanced) 
	such that $\Fd{\ndown,\ncrash,\hat{M}}(\alpha)$ holds, 
	and $U_a$ be the set of replicas that are eventually active in $\alpha$.
	Consider a process $p_i$ (client or replica) such that $p_i$ is permanently 
	up after some time $t$ in $\alpha$. Then, if $p_i$ 
	invokes $\wrq=\wrint(\req)$ after $t$, and 
	$(\req \in \{\tsval(\_,\_)\} \implies |U_a|\ge n-d)$,
	then $\wrq$ exits the inner loop in lines~\ref{alg:line:while2}--\ref{alg:line:sendwrite} 
	at every iteration of the outer loop in lines~\ref{alg:line:while}--\ref{alg:line:wrint-ret-a2plus}.
\end{lemma}
\begin{proof}
	Let $U$ denote the set of replicas that are eventually up in $\alpha$. Since
	$\Fd{\ndown,\ncrash,\hat{M}}(\alpha)$ holds, $|U|\ge n-d$. Let
	$$
	\hat{U} = \begin{cases}
		U_a, & \text{if } \req\in \{\tsval(\_, \_)\}, \\
  		U,  & \text{otherwise.} 
	\end{cases}
	$$
	Then, $|\hat{U}|\ge n-d$. Let $T$ (respectively, $T_a$) be the time such that 
	all replicas in $U$ (respectively, $U_a$) are up (respectively, active) after $T$ (respectively, $T_a$).
	Let
	$$
	\hat{T} = \begin{cases}
		T_a, & \text{if } \req\in \{\tsval(\_, \_)\}, \\
  		T,  & \text{otherwise.} 
	\end{cases}
	$$
	Let $t_0 \ge \hat{T}$ and suppose that $\wrq$ is executing inside the inner loop 
	(lines~\ref{alg:line:while2}--\ref{alg:line:sendwrite}) at $t_0$. We prove that
	there exists a time $t \ge t_0$ by which $\wrq$ is guaranteed to exit the inner loop. 
	
	By the code, $p_i$ 
	continues re-transmitting $\WRITE(\id(\wrq), \id(\wrq), \_)$ messages to all replicas until 
	receiving $q_w=n-d$ $\WRITEACK$ replies with the same identifier $\id(\wrq)$. Therefore, 
	there exists a time $t_1 \ge t_0 \ge \hat{T}$ such that  $p_i$ either exits the 
	inner loop before $t_1$ or initiates a new re-transmission round at $t_1$. Since $p_i$ and 
	all replicas in $\hat{U}$ are permanently up from $t_1$ onwards, the channel
	reliability assumptions imply that every replica $p_j\in \hat{U}$ eventually
	receives a message $m=\WRITE(\id(\wrq), \reqq(\wrq), \_)$ sent to it by $p_i$ at $t_1$. 
	If $\req \in \{\precv(\_, \_), \cv(\_, \_)\}$, then $p_j\in U$, and therefore,
	the condition in line~\ref{line:write-2} holds. Otherwise, $p_j\in U_a$ so that
	the condition in line~\ref{alg:line:notstale-a2+} is validated. 
	Therefore, once $p_j$ 
	receives $m$, it responds with $m'=\WRITEACK(\id(\wrq), \_, \_, j)$.
	Since both $p_j$ and $p_i$ are up from $\hat{T}$ onwards, $p_i$ eventually receives $m'$.
	Hence, there exists a time $t_2\ge t_1 \ge t_0 \ge \hat{T}$ such that if 
	the loop in lines~\ref{alg:line:while2}--\ref{alg:line:sendwrite}
	is still in progress at $t_2$, then at $t_2$, $p_i$ has received $\WRITEACK(\id(\wrq), \_, \_, j)$
	from all replicas $p_j\in \hat{U}$. Since $|\hat{U}|\ge n-d=q_w$, 
	the inner loop termination condition in line~\ref{alg:line:wrint-wait} is met at $t_2$. 
	Hence, $\wrq$ is guaranteed to exit the inner loop at $t=t_2\ge t_0$ at the latest, as required.
\end{proof}
We next show that if a process invoking $\wrint$ never exits the outer loop
(lines~\ref{alg:line:while}--\ref{alg:line:wrint-ret-a2plus}), then 
the set of replicas $Q$ maintained inside the loop will eventually include 
every replica that is eventually up (or active).
\begin{lemma}
	Consider an execution $\alpha$ of $\algname$ (basic and enhanced) and
	suppose that a process $p_i$ (client or replica) invokes $\wrq=\wrint(\req)$.
	Let $\tau = \tau_1 < \tau_2 < \dots$ be an infinite sequence of times such that
	for all $k \ge 1$, $p_i$ reaches line~\ref{alg:line:filter-x} at $\tau_k$ while
	executing $\wrq$. Then for all replicas $p_j$, if
	$p_j$ is eventually up and $(\req \in \{\tsval(\_, \_)\} \implies p_j \text{~is~eventually~active})$,
	then there exists $l$ such that for all $k\ge l$, $p_j\in Q$ at $\tau_k$.
	\label{lem:catch-all}
\end{lemma}

\begin{proof}
	If $p_j=p_i$, then the result trivially holds. Consider $p_j \neq p_i$
	and assume by contradiction that there exists an infinite sequence
	of times $\sigma = \sigma_1 < \sigma_2 < \dots$ such that for all $k \ge 1$, $\wrq$ reaches 
	line~\ref{alg:line:filter-x} at $\sigma_k$ and $p_j\not\in Q$ at $\sigma_k$. Since $p_i$ reaches 
	line~\ref{alg:line:filter-x} infinitely many times in $\alpha$, it never crashes after invoking $\wrq$.
	Hence, for all $k\ge 1$, $p_i$ sends $\WRITE(\id(\wrq), \req, \_)$ to $p_j$ 
	in-between $\sigma_k$ and $\sigma_{k+1}$. As a result, starting from $\sigma_1$ onwards, $p_i$ 
	sends infinitely many  
	$\WRITE(\id(\wrq), \req, \_)$ messages to $p_j$. 
	Let $W$ denote the set of these messages. By the lemma's premise,
	$p_j$ is eventually up. We make a case split based on 
	whether $p_j$ ever crashes, and whether 
	$p_j$ is eventually active.	
	\begin{itemize}
		\item $p_j$ never crashes in $\alpha$.
			Let $m\in W$.
			By the channel reliability assumptions, after $\sigma_1$, every message sent by $p_i$
			to $p_j$ and by $p_j$ to $p_i$ is eventually received by its destination.
			Furthermore, since $p_j$ never crashes, it is active at all times in $\alpha$.
			Therefore, upon receiving $m$, $p_j$ replies to $p_i$ with a 
			$\WRITEACK$ message $m'$ satisfying $\id(m') = \id(\wrq)$, 
			regardless of the payload of $m$.
			Hence, there exists $l$ such that for all $k \ge l$, $m'\in \wacks(\wrq, \sigma_k)$.
			Therefore, by Lemma~\ref{lem:livelem-cc-init}, $\CC(\wrq, j, \sigma_k)$ holds for all 
			$k \ge l$. Hence, by Proposition~\ref{prop:cc}, 
			$p_j \in Q$ at all times $\sigma_k$ such that $k \ge l$ contradicting the definition of $\sigma$.

		\item $p_j$ crashes and restarts at least once in $\alpha$, and is eventually active.
			%
			Hence, there exists a time $T_j$ such that $p_j$ 
			executes the assignment in line~\ref{alg:line:recoverymyincincrement} at $T_j$ and
			does not crash afterwards, and a time $T_j' > T_j$ such that $p_j$ becomes
			active at $T_j'$. Therefore, there exists $l$ and $m\in W$ such that $p_i$ sends
			$m$ after time $\sigma_l \ge T_j'$. 
			By the channel reliability assumptions, after $\sigma_l$, every message sent by $p_i$
			to $p_j$ and by $p_j$ to $p_i$ is eventually received by its destination.
			Furthermore, since $p_j$ is active after $\sigma_l$, upon receiving $m$, it replies to $p_i$ with a 
			$\WRITEACK$ message $m'$ satisfying $\id(m') = \id(\wrq)$, 
			regardless of the payload of $m$.			
			Hence, there exists $l' \ge l$ such that for all $k \ge l'$, $m'\in \wacks(\wrq, \sigma_k)$.
			Since $m'$ is sent by $p_j$ after $T_j' > T_j$ and $p_j$ is up from $T_j$ onwards, 
			by Lemma~\ref{lem:livelem-cc-simple}, $\CC(\wrq, j, \sigma_k)$ holds for all 
			$k \ge l'$. Hence, by Proposition~\ref{prop:cc}, 
			$p_j \in Q$ at all times $\sigma_k$ where $k \ge l'$ contradicting the definition of $\sigma$.



		\item $p_j$ crashes and restarts at least once, is not eventually active, and reaches 
			line~\ref{alg:line:recoverymyincincrement} after its last restart. 
			Hence, there exists a time $T_j$ such that $p_j$ executes the assignment in 
			line~\ref{alg:line:recoverymyincincrement} at $T_j$ and remains permanently up afterwards.
			Therefore, there exists $l$ and $m\in W$ such that $p_i$ sends
			$m$ after time $\sigma_l \ge T_j$. 
			By the channel reliability assumptions, after $\sigma_l$, every message sent by $p_i$
			to $p_j$ and by $p_j$ to $p_i$ is eventually received by its destination.
			Since $p_j$ is not eventually active, $\req \in \{\precv(\_, \_), \cv(\_, \_)\}$.
			Thus, given that $p_j$ is up after $\sigma_l$, it replies to $p_i$ with a 
			$\WRITEACK$ message $m'$ satisfying $\id(m') = \id(\wrq)$ upon receiving $m$.
			Hence, there exists $l' \ge l$ such that for all $k \ge l'$, $m'\in \wacks(\wrq, \sigma_k)$.
			Since $m'$ is sent by $p_j$ after $T_j$ and $p_j$ is up from $T_j$ onwards, 
			by Lemma~\ref{lem:livelem-cc-simple}, $\CC(\wrq, j, \sigma_k)$ holds for all 
			$k \ge l'$. Hence, by Proposition~\ref{prop:cc}, 
			$p_j \in Q$ at all times $\sigma_k$ where $k \ge l'$, contradicting the definition of $\sigma$.

		\item $p_j$ crashes and restarts at least once, is not eventually active, and never reaches 
		line~\ref{alg:line:recoverymyincincrement} after its last restart. 
		Then by the lemma's 
		premise $\req\in \{\precv(\_, \_), \cv(\_, \_)\}$. Furthermore, since in the basic
		algorithm, $p_j$ always reaches  line~\ref{alg:line:recoverymyincincrement}
		upon restart, $\alpha$ must be an execution of the enhanced $\algname$.
		Since $p_i$ sends infinitely many $\WRITE(\id(\wrq), \req, \_)$ messages to $p_j$, 
		by Lemma~\ref{lem:livelem-cc}, there exists $l$
		such that $\CC(\wrq, j, \sigma_k)$ holds for all $k \ge l$. 
		Then by Proposition~\ref{prop:cc}, $p_j\in Q$ at all times 
		$\sigma_k$ such that $k \ge l$, which is a contradiction to the definition
		of $\sigma$. 
	\end{itemize}
\end{proof}
%
We now use the above lemma to derive sufficient conditions for 
the termination of $\wrint$ invocations in $\Fd{\ndown,\ncrash,\hat{M}}$-valid executions
where $n \ge \ndown + \ncrash + 1$.
\begin{lemma}
	\label{lem:wrint-live}
	Suppose that for all
	$\ndown$ and $\ncrash$, $n \ge \ndown + \ncrash + 1$. Let $\alpha$
	be an execution of $\algname$ (basic or enhanced) 
	such that $\Fd{\ndown,\ncrash,\hat{M}}(\alpha)$ holds, 
	and $U_a$ be the set of replicas that are eventually active in $\alpha$.
	Consider a process $p_i$ (client or replica) such that $p_i$ is permanently 
	up after some time $t$ in $\alpha$. Then, if $p_i$ 
	invokes $\wrq=\wrint(\req)$ after $t$, and 
	$(\req \in \{\tsval(\_,\_)\} \implies |U_a|\ge n-d)$,
	then $\wrq$ eventually returns.
\end{lemma}

\begin{proof}
	Assume by contradiction that $\wrq$ never completes.
	By Lemma~\ref{lem:rdlive-black}, $\wrq$ never blocks 
	in the $\rdint$ calls, and by Lemma~\ref{lem:inner-wrint-live}, 
	the inner loop in lines~\ref{alg:line:while2}--\ref{alg:line:sendwrite} 
	terminates at every iteration of the outer 
	loop in lines~\ref{alg:line:while}--\ref{alg:line:wrint-ret-a2plus}.
	Therefore, the outer loop (lines~\ref{alg:line:while}--\ref{alg:line:wrint-ret-a2plus})
	never terminates, and $p_i$ reaches line~\ref{alg:line:filter-x} at every 
	iteration of the outer loop. Hence, 
	there exists an infinite sequence of times 
	$\tau=\tau_1 < \tau_2 < \dots$ such that for all $k\ge 1$, $p_i$ reaches 
	line~\ref{alg:line:filter-x} at $\tau_k$. 
	Let $U$ be the set of replicas which are eventually up in $\alpha$.
	Then since $\Fd{\ndown,\ncrash,\hat{M}}(\alpha)$ holds, $|U|\ge n-d$.
	Let
	$$
	\hat{U} = \begin{cases}
		U_a, & \text{if } \req\in \{\tsval(\_, \_)\}, \\
  		U,  & \text{otherwise.} 
	\end{cases}
	$$
	Then, $|\hat{U}|\ge n-d$.
	By Lemma~\ref{lem:catch-all},
	there exists $l$ such that for all $k \ge l$, $\hat{U} \subseteq Q$ at $\tau_k$.
	Since $p_i$ never exits the outer loop, $|Q|<q_w$ whenever $p_i$
	reaches line~\ref{alg:line:filter-x}. However, 
	$|\hat{U}|\ge n-d=q_w$, and therefore, for all $k\ge l$,
	$|Q|\ge q_w$ at $\tau_l$. A contradiction.
\end{proof}

The code of the $\onrestart$ block consists of calls to $\rdint$ and $\wrint(\req)$
where $\req \in \{\precv(\_, \_), \cv(\_, \_)\}$, interleaved with local steps.
Therefore, from Lemmas~\ref{lem:rdlive-black} and~\ref{lem:wrint-live}, we can
conclude that every replica that crashes and restarts, and 
does not crash again in a $\Fd{\ndown,\ncrash,\hat{M}}$-valid execution
of $\algname$ is eventually able to complete its $\onrestart$ block and become active.
\begin{corollary}
	Suppose that for all
	$\ndown$ and $\ncrash$, $n \ge \ndown + \ncrash + 1$, and let $\alpha$
	be an execution of $\algname$ (basic or enhanced) such that 
	$\Fd{\ndown,\ncrash,\hat{M}}(\alpha)$ holds. 
	Then for all replicas $p_i$, if $p_i$ is eventually up in $\alpha$, then $p_i$
	is eventually active. 
\label{cor:restart-active}
\end{corollary}
Since there are $n-d$ replicas that are eventually up in every $\Fd{\ndown,\ncrash,\hat{M}}$-valid execution $\alpha$
of $\algname$, Corollary~\ref{cor:restart-active} implies that a write quorum consisting entirely of active
replicas is also eventually available in $\alpha$.
\begin{corollary}
	Suppose that for all
	$\ndown$ and $\ncrash$, $n \ge \ndown + \ncrash + 1$, and let $\alpha$
	be an execution of $\algname$ (basic or enhanced) such that 
	$\Fd{\ndown,\ncrash,\hat{M}}(\alpha)$ holds. Then, there exists a set $U_a$ of replicas
	such that $|U_a|\ge n-d$ and all replicas in $U_a$ are eventually active in $\alpha$.
	\label{livecor:rec-black}
\end{corollary}

\paragraph*{Liveness of $\algname$.}
The implementations of the register $\rd$ and $\writ$ operations consist
of the calls to $\rdint$ and $\wrint(\req)$ such that $\req\in \{\tsval(\_, \_)\}$.
By Corollary~\ref{livecor:rec-black}, in every $\Fd{\ndown,\ncrash,\hat{M}}$-valid
execution of $\algname$, there is a set $U_a$ of active replicas such that 
$|U_a|\ge n-d$. Therefore, from Lemmas~\ref{lem:rdlive-black} and~\ref{lem:wrint-live},
we can conclude that every $\rd$ and $\writ$ invoked by a correct client, eventually completes
in every $\Fd{\ndown,\ncrash,\hat{M}}$-valid execution of $\algname$
\begin{corollary}
	For all $\ndown$, $\ncrash$, $\alpha$, if $n \ge \ndown + \ncrash 
	+ 1$, then the $\algname$ algorithm (basic or enhanced) implements a
	wait-free register under $\Fd{\ndown, \ncrash, \hat{M}}$.
\label{lem:black-live}
\end{corollary}
Using the corollary above we can now prove Theorems~\ref{thm:upper-dynamic} 
and~\ref{lem:dynamic-upper-ass}.
\begin{proof}[Proof of Theorem~\ref{thm:upper-dynamic}]
	By Lemma~\ref{lem:alg-safe}, the enhanced $\algname$ 
	(i.e., the algorithm in Figure~\ref{fig:pseudo-dyn}, including the highlighted lines)
	implements an atomic MWMR register in all $\Fd{\ndown, \ncrash, M}$-valid executions 
	for arbitrary values of $\ndown$, $\ncrash$, and $M$. 
	From Corollary~\ref{lem:black-live}, the enhanced $\algname$ is also
	wait-free in all $\Fd{\ndown,\ncrash,\hat{M}}$-valid executions provided
	$n \ge d + c + 1$. By~(\ref{eq:hat-m}), 
	this implies that
	the enhanced $\algname$ is also wait-free in all $\Fd{\ndown, \ncrash, M}$-valid
	executions where $n \ge d + c + 1$ and $M \ge 12$. Hence, the enhanced
	$\algname$ is an implementation of an atomic MWMR register,
	which is  $\Fd{\ndown, \ncrash, M}$-live and always safe 
	provided $n \ge d + c + 1$ and $M \ge 12$, as required.
\end{proof}

\begin{proof}[Proof of Theorem~\ref{lem:dynamic-upper-ass}]
	By Theorem~\ref{lem:dynamic-upper-ass}($\mathsf{A}$) proven in~\ref{app:safety-black}, if 
	incarnation numbers assigned on 
	line~\ref{alg:line:recoverymyincincrement} satisfy Property~\ref{lem:monotone} (Incarnation 
	Number Monotonicity), then the basic $\algname$ 
	(i.e., the algorithm in Figure~\ref{fig:pseudo-dyn}, excluding the highlighted lines)
	implements an atomic MWMR register in all $\Fd{\ndown, \ncrash, M}$-valid executions 
	for arbitrary values of $\ndown$, $\ncrash$, and $M$.
	From Corollary~\ref{lem:black-live}, the basic $\algname$ is also
	wait-free in all $\Fd{\ndown,\ncrash,\hat{M}}$-valid executions provided
	$n \ge d + c + 1$. 
	By~(\ref{eq:hat-m}), 
	this implies that
	the basic $\algname$ is also wait-free in all $\Fd{\ndown, \ncrash, M}$-valid
	executions where $n \ge d + c + 1$ and $M \ge 6$.
	Hence, the basic $\algname$  is an 
	implementation of an atomic MWMR register that is always safe and
	$\Fd{\ndown,\ncrash,M}$-live provided $n \ge d + c + 1$ and $M \ge 6$, as required.
\end{proof}

\section{Safety Violation in the Algorithm by Guerraoui et al.~\cite{DBLP:journals/talg/GuerraouiLPP08}}
\label{app:rachid}


	The model in~\cite{DBLP:journals/talg/GuerraouiLPP08} is defined as 
follows~\cite[\S 2]{DBLP:journals/talg/GuerraouiLPP08}: The set of processes $N$, 
$|N| = n$, is static and every process executes a deterministic algorithm 
assigned to it, unless it crashes. The process does not behave maliciously. If it 
crashes, the process simply stops executing any computation, unless it possibly 
recovers, in which case the process executes a recovery procedure which is part 
of the algorithm assigned to it. Note that in this case we assume that the 
process is aware that it had crashed and recovered. Every process has a volatile 
storage and some processes may also have a stable storage. If a process crashes 
and recovers, the content of its volatile storage is lost but not the content of 
its stable storage. The processes with stable storage belong to a set denoted 
$S$, $S \subseteq N$. There are a total number of $0 \le |S| = s \le n$ processes 
with stable storage. Whereas the sets $N$ and $S$ are static for all executions, 
the sets of processes that we will define now are not: They might be different 
for each execution (and are unknown in advance). These sets are defined for an 
infinite execution, that is, the sets can only be evaluated by an external 
observer of the system looking at infinite executions. The sets are defined in 
the same way as in~\cite{aguilera2000failure}. Processes that crash at least 
once, always recover after a crash, and eventually do not crash are eventually-up 
and belong to a set denoted $C$, $|C| = c$. These might crash (and recover) a 
large (but finite) number of times. A process is faulty (and belongs to a set 
denoted $F$, $|F| = f$) if there is a time after which the process crashes and 
never recovers, or crashes infinitely many times. We also consider a set of 
processes that are always-up $U$, $|U| = u$ (in that given execution). 
Considering $n = c + f + u$, a number $c + f$ of processes can crash and $c$ 
eventually recover.

The atomic read/write register protocol~\cite[\S5.1]{DBLP:journals/talg/GuerraouiLPP08} makes use 
of an Amnesia-Masking Storage (AMS) abstraction~\cite[\S4]{DBLP:journals/talg/GuerraouiLPP08} 
	that provides the $\ReadAMS$ and $\WriteAMS$ procedures that are tasked 
with sending the appropriate request message and collecting the response from the 
corresponding quorum. In addition to applying $\WriteAMS$ requests to the local 
state (which is stored on stable storage if available) and responding to request 
messages, processes must also complete the 
recovery procedure upon restart. In processes with stable storage, this only 
consists of reading the state stored on stable storage, and in other processes, 
it consists of setting the $\amnesic$ flag to $\TRUE$ and $(\tmpval, 
\tmpts)$ to $(0, 
\bot)$. Processes without stable storage do not reply to $\ReadAMS$ when 
$\amnesic = \TRUE$, and when they receive a message from $\WriteAMS$, $\amnesic$ 
is set to $\FALSE$ and the process responds to $\ReadAMS$ after applying the 
$\WriteAMS$ request to local state.
AMS must satisfy following 
property~\cite[$\S4.1$]{DBLP:journals/talg/GuerraouiLPP08}:

Consider a set of value-timestamp pairs $(\tmpval_i , \tmpts_i)$, each 
value $\tmpval_i$ being
associated with a timestamp $\tmpts_i$. If $\ReadAMS()$ successfully 
completes and returns
a set $V$ of value-timestamp pairs, then $V$ includes the value-timestamp pair
$(\tmpval_h , \tmpts_h)$, where $\tmpts_h$ is the highest timestamp 
among 
all $\WriteAMS(\tmpval_i , \tmpts_i)$ invocations
that successfully completed before the $\ReadAMS()$ invocation.

However, in the following execution, this property is violated:

Assume $n = 2a + 3b + 1$ where $n$ is the number of replicas and $a > 0$ and $b > 1$ are 
arbitrary parameters. Additionally, assume $u = 0$, $f = a + b$, $s = 2a + 2b + 1$,
and 
$S = \{p_1, p_2, \ldots, p_{2a + 2b + 1}\}$. The AMS abstraction must have $n - s 
+ f < |Q_W| \le
n - f$, $|Q_R| \le s - f,$ and $|Q_R| + |Q_W| > n$, which results in $|Q_W| =
a + 2b + 1$ and $|Q_R| = a + b + 1$ where $|Q_W|$ is the write quorum size
and $|Q_R|$ is the read quorum 
size~\cite[\S4.1]{DBLP:journals/talg/GuerraouiLPP08}.


Now consider the following events in order:
\begin{itemize}
	\item $\WriteAMS(1, 1)$ is invoked at replica $p_1$, and completes after 
	receiving responses from $p_1, p_2, \ldots, p_{a + 2b}, p_{2a + 2b + 2}$.

	\item $\WriteAMS(2, 2)$ is invoked at $p_1$ and finishes after
receiving responses from $p_1, p_2, \ldots, p_{a + 2b}, p_{2a + 2b + 3}$.

	\item Process $p_{2a + 2b + 3}$ crashes and recovers, and since it does not 
	have stable storage, it will have $(\tmpval, \tmpts) = (\bot, 0)$ 
	and $\amnesic = 
	\TRUE$.
	
	\item Process $p_{2a + 2b + 3}$ receives the message from $\WriteAMS(1, 1)$ 
	and changes its state to $(\tmpval, \tmpts) = (1, 1)$ and 
	$\amnesic = \FALSE$.

	\item $\ReadAMS()$ is invoked at $p_{a + 2b + 1}$, and finishes after 
	receiving responses from $\mathcal{P} = p_{a + 2b + 1}, p_{a + 2b + 2}, 
	\ldots, p_{2a + 3b 
	+ 1}$.
\end{itemize}

In this scenario, the only replica in $\mathcal{P}$ that has received a message 
about $\WriteAMS(2, 2)$ is $p_{2a + 2b + 3}$, which now has $(\tmpval, 
\tmpts) = (1, 1)$. 
This means that $\ReadAMS()$ at $p_{a + 2b +1}$ returns a set including $(1, 
1)$, but not $(2, 2)$. This contradicts the property above that stipulates that 
the return value must include $(2, 2)$, which has the highest timestamp among all 
calls to $\WriteAMS()$. As a result, AMS, and, in turn, the register emulation 
algorithm suffer from a safety bug.

	A correct solution for the model described above is to use the $\alg{1}{}$ algorithm logic with 
$q_w = q_r = n - f$ when $s > 2f$ and use algorithm $\alg{2}{}$ logic with $q_w = n - f$ and $q_r 
= u$ 
when $u > f$. As to why these algorithm logics are live, note that at least $n - f$ replicas 
are guaranteed to eventually stay up and $u$ replicas are guaranteed to never crash, so read and 
write replicas are available in both cases. Regarding safety, in the first case, a read quorum and a 
write quorum intersect in at least $n - 2f$ replicas and we assume $s > 2f$, so there must be at 
least one replica with trusted stable storage in the intersection of any two such quorums. This 
results in the Real-Time Property (Property~\ref{lem:rtp}) in the same way as proven 
in~\S\ref{sec:upper}. In the second case, read and write quorums intersect in at least $u - f > 0$ 
replicas, and read quorums only consist of replicas that have not crashed so far, so the Real-Time 
Property (Property~\ref{lem:rtp}) holds for the same reasons argued in~\S\ref{sec:upper}. Note 
that 
it is possible and feasible to decide which algorithm to use since $u$, $f$, and $n$ are known prior 
to the start of any execution.

  
	The problem with the protocol in~\cite{DBLP:journals/talg/GuerraouiLPP08} is that, despite 
not having a recovery protocol that breaks crash-consistency, replicas assume that whenever they 
receive a request that writes a value to the register, the writer must have finished a read recently 
and is thus aware of the latest written value. However, this assumption does not necessarily hold 
since messages can be in flight for a long time due to asynchronicity.


\section{Safety Violation in the Algorithm by Dinis et al.~\cite{DBLP:conf/ndss/DinisD023}}
\label{app:rr}

The RR model is defined as follows~\cite[\S III.A]{DBLP:conf/ndss/DinisD023}:
Nodes in the RR model are TEE instances with external
persistent state. Nodes can crash at any point in their execution
and restart at a later instant. Additionally, nodes can suffer a
rollback failure upon a restart, where their externally stored,
persistent state is valid but stale. After each restart, nodes flag
their state as suspicious when replying to requests, signaling
that they have restarted and thus may have been subject to a
rollback. The integrity properties of TEEs imply that a node
can reliably determine when it has executed its initialization
code and thus restarted. A node stops indicating the suspicious
flag once it has ascertained that its state is fresh, by adopting the most recent state among a 
read quorum.
The integrity of TEEs implies
that, other than possibly holding stale persistent state after
restarting, nodes correctly execute the expected code.

To limit the power of the adversary, two independent bounds are introduced: $M_R$ and $F$; 
$M_R$ is the number of processes that can suffer rollback, and $F$ is the number of 
processes that can become unavailable simultaneously due to network partitions, power outages, or 
reboots~\cite[\S III.C]{DBLP:conf/ndss/DinisD023}.

In order to implement an atomic read/write register in the RR model, the ABD 
algorithm~\cite{DBLP:conf/podc/AttiyaBD90} is used, but with different quorum sizes~\cite[\S 
IV.B]{DBLP:conf/ndss/DinisD023}. Given $M_R$ and $F$,
{
	the total number of replicas 
required is $N 
= \max(M_R, F) + F + 1$
}, and the quorum sizes are defined as follows: 
$W=\max(M_R, F) + 1$ and $R= F + 
\min(s, M_R) + 1$, where $W$ is the write quorum size, $R$ is 
the read quorum size~\cite[\S III.D]{DBLP:conf/ndss/DinisD023}, and $s$ is the number of responses 
received from processes with a suspicious state, i.e., processes that have crashed and restarted and 
suspect a rollback attack~\cite[\S III.A and III.C]{DBLP:conf/ndss/DinisD023}. 
	Processes 
can clear their suspicious flag and bring their state up to date if necessary by contacting a read 
quorum 
and choosing the most recent state as ordered by the timestamp~\cite[\S 
IV.A]{DBLP:conf/ndss/DinisD023}.


However, the proposed solution suffers from a safety 
violation scenario
	similar to the one visualized in 
	Fig~\ref{fig:problem}
	due to not ensuring crash-consistent quorums. 
Consider the RR model with $M_R=F=2$. Then we must have $N = 5$, $W = 3$, and $R = 3 + 
\min(s, 2)$.
The following scenario, depicted in~\autoref{fig:rr}, shows a safety breach in the protocol 
provided for this model:
\begin{itemize}
\item Suppose that the initial value of the register is $0$ and a process $p_1$ initiates a write of 
value $1$.
\item $p_1$ contacts a read quorum to determine the latest value $t$ of the timestamp,
  and then initiates access to a write quorum to store $(t', 1)$ where $t' > t$.

\item At this point, $p_1$ sends $(t', 1)$ to $p_2$, $p_2$ stores $(t', 1)$ on stable
storage, and sends an ack back to $p_1$.

\item Process $p_1$ receives the ack from $p_2$, and then gets delayed.

\item Meanwhile, $p_2$ crashes and recovers with its state rolled back to $(t, 0)$.

\item $p_2$ contacts $\{p_3, p_4, p_5\}$, which is a read quorum 
since none of these three processes have crashed and recovered (i.e., $s=0$). Since these 
processes are unaware of the write in progress, $p_2$ assumes that its state 
$(t, 0)$ is up-to-date and clears the suspicious flag.

\item Process $p_1$ resumes, and
stores $(t', 1)$ at $p_1$ and $p_3$. Since $\{p_1, p_2, p_3\}$ is a write
quorum, $p_1$ completes the write and returns success.

\item At this point, 
no process is suspicious, and therefore, $\{p_2, p_4, p_5\}$ is both a write
and a read quorum. However, all processes in this set are unaware
of a preceding complete write, and will therefore return $0$ if queried by a read 
operation, violating safety.
\end{itemize}

\begin{figure}[H]
	\centering
	
	\begin{tikzpicture}[scale=0.88]
		
		\tikzmath{\pwriter = 0; \pcrash = -1; \prest = -2; \pfour = -3; \pfive = -4;}
		
		\node at (-0.5, \pwriter) {$p_1$};
		\node at (-0.5, \pcrash) {$p_2$};
		\node at (-0.5, \prest) {$p_3$};
		\node at (-0.5, \pfour) {$p_4$};
		\node at (-0.5, \pfive) {$p_5$};
		
		\draw [dotted] (0,\pcrash) -- (15,\pcrash);
		\draw [dotted] (0,\prest) -- (15,\prest);
		\draw [dotted] (0,\pfour) -- (15,\pfour);
		\draw [dotted] (0,\pfive) -- (15,\pfive);
		
		\draw [|-|] (0, \pwriter) -- (15, \pwriter);
		\node at (7.5, \pwriter + 0.25) {$\writ(1)$};
		
		\draw [->] (1.5, \pwriter) -- (1.75, \pcrash);
		\draw [->] (1.75, \pcrash) -- (2, \pwriter);
		\node [font=\scriptsize] at (0.6, \pwriter / 2 + \pcrash / 2) {$\READREPLICA$};
		\node [font=\scriptsize] at (2.75, \pwriter / 2 + \pcrash / 2) {$\REPLY(t, 0)$};
		
		\draw [->] (1.5, \pwriter) -- (1.75, \prest);
		\draw [->] (1.75, \prest) -- (2, \pwriter);
		\node [font=\scriptsize] at (0.7, \pcrash / 2 + \prest / 2) {$\READREPLICA$};
		\node [font=\scriptsize] at (2.6, \pcrash / 2 + \prest / 2) {$\REPLY(t, 0)$};
		
		\draw [->] (6.5, \pwriter) -- (6.75, \pcrash);
		\draw [->] (6.75, \pcrash) -- (7, \pwriter);
		\node [font=\scriptsize] at (5.2, \pwriter / 2 + \pcrash / 2) {$\WRITEREPLICA(t', 1)$};
		\node [font=\scriptsize] at (7.5, \pwriter / 2 + \pcrash / 2) {$\SUCCESS$};
		
		\node at (8, \pcrash) {$\times$};
		
		
		\draw [->] (8.25, \pcrash) -- (8.5, \prest);
		\draw [->] (8.5, \prest) -- (8.75, \pcrash);
		\draw [->] (8.25, \pcrash) -- (8.5, \pfour);
		\draw [->] (8.5, \pfour) -- (8.75, \pcrash);
		\draw [->] (8.25, \pcrash) -- (8.5, \pfive);
		\draw [->] (8.5, \pfive) -- (8.75, \pcrash);
		\node [font=\scriptsize] at (7.3, \pcrash / 2 + \prest / 2) {$\READREPLICA$};
		\node [font=\scriptsize] at (9.5, \pcrash / 2 + \prest / 2) {$\REPLY(t, 0)$};
		\node [font=\scriptsize] at (7.4, \prest / 2 + \pfour / 2) {$\READREPLICA$};
		\node [font=\scriptsize] at (9.4, \prest / 2 + \pfour / 2) {$\REPLY(t, 0)$};
		\node [font=\scriptsize] at (7.5, \pfour / 2 + \pfive / 2) {$\READREPLICA$};
		\node [font=\scriptsize] at (9.4, \pfive / 2 + \pfour / 2) {$\REPLY(t, 0)$};
		
		\draw [->] (13.4, \pwriter) -- (13.6, \prest);
		\draw [->] (13.6, \prest) -- (13.8, \pwriter);
		\node [font=\scriptsize] at (12.1, \prest / 2 + \pcrash / 2) {$\WRITEREPLICA(t', 1)$};
		\node [font=\scriptsize] at (14.3, \prest / 2 + \pcrash / 2) {$\SUCCESS$};
		
	\end{tikzpicture}
	
	\caption{An illustration of safety violation
          in~\cite{DBLP:conf/ndss/DinisD023},
          	similar to Fig~\ref{fig:problem}
        }
	\label{fig:rr}
\end{figure}




\fi

\end{document}